\theoremstyle{plain}
\newtheorem{theorem}{Theorem}[section]
\newtheorem{lemma}[theorem]{Lemma}
\newtheorem{remark}{Remark}[section]
\numberwithin{figure}{section}
\numberwithin{table}{section}
\definecolor{darkgreen}{RGB}{0,100,0}    
\definecolor{darkyellow}{RGB}{204,153,0}  
\newcommand{\jd}[1]{{\color{black}#1}}
\newcommand{\jdd}[1]{{\color{black}#1}}
\newcommand{\jddd}[1]{{\color{black}#1}}
\newcommand{\jdall}[1]{{\color{black}#1}}
\def\ps@pprintTitle{%
	\let\@oddhead\@empty
	\let\@evenhead\@empty
	\let\@oddfoot\@empty
	\let\@evenfoot\@empty
}
\begin{document}

\title{Laplacian Eigenfunction-Based Neural Operator for Learning Nonlinear Reaction–Diffusion Dynamics}
\author[l1]{Jindong Wang}
\ead{jzw6472@psu.edu}
\author[l1]{Wenrui Hao\corref{cor}}
\ead{wxh64@psu.edu}
\cortext[cor]{Corresponding author}

\affiliation[l1]{organization={Department of Mathematics, Penn State University},
            city={University Park},
            postcode={16802}, 
            state={PA},
            country={USA}}

\begin{abstract}
Learning reaction--diffusion equations has become increasingly important across scientific and engineering disciplines, including fluid dynamics, materials science, and biological systems. In this work, we propose the \textit{Laplacian Eigenfunction-Based Neural Operator} (LE-NO), a novel framework designed to efficiently learn nonlinear reaction terms in reaction--diffusion equations. LE-NO models the nonlinear operator on the right-hand side using a data-driven approach, with Laplacian eigenfunctions serving as the basis. This spectral representation enables efficient approximation of the nonlinear terms, reduces computational complexity through direct inversion of the Laplacian matrix, and alleviates challenges associated with limited data and large neural network architectures---issues commonly encountered in operator learning. We demonstrate that LE-NO generalizes well across varying boundary conditions and provides interpretable representations of learned dynamics. Numerical experiments in mathematical physics showcase the effectiveness of LE-NO in capturing complex nonlinear behavior, offering a powerful and robust tool for the discovery and prediction of reaction--diffusion dynamics.
\end{abstract}

\begin{keyword}
operator learning, nonlinear reaction-diffusion problem, Laplacian eigenfunction, physics-informed machine learning, data-driven PDE discovery

\end{keyword}
\maketitle

\section{Introduction}\label{sec:intro}

With the rapid growth of data and advances in machine learning, data-driven discovery of partial differential equations (PDEs) has gained significant attention. Unlike traditional approaches that derive physical laws from first principles, this framework uncovers governing equations directly from observational data, which is particularly appealing in fields like biology and medicine where complex interactions and incomplete mechanistic understanding often hinder accurate model development \cite{ge2024data}.

Early methods for PDE discovery, such as sparse regression \cite{tibshirani1996regression}, sequential threshold ridge regression \cite{rudy2017data}, SINDy \cite{brunton2016discovering}, \jddd{and regression approaches for nonlinear integro–differential operators \cite{patel2018nonlinear}}, demonstrated the feasibility of recovering canonical PDEs in physics. However, these approaches are often limited to systems where the governing equations contain relatively simple terms. As the complexity of the system increases, particularly in high-dimensional biological systems, the computational cost of searching large symbolic libraries becomes prohibitive, and symbolic regression methods may fail to produce accurate models due to the vast number of potential terms.

Recent developments have shifted towards symbolic regression methods that do not rely on fixed libraries \cite{kim2020integration,lu2022discovering,zhang2023deep,vaddireddy2019equation,zhang2025coefficient}. These methods have demonstrated superior performance in recovering partial differential equations (PDEs) from noisy and sparse data. However, they come with their own set of challenges. Most existing approaches have been tested on synthetic datasets with known PDEs, which limits their applicability to real-world systems where the governing equations are often unknown or nontrivial.

Biological systems, in particular, pose unique challenges for the discovery of PDEs. For example, reaction--diffusion models frequently used in biology often feature reaction terms that vary across individuals or are difficult to express in closed form. These limitations of traditional methods motivate the need to move beyond symbolic regression and toward a more flexible operator-learning framework. Deep learning-based methods such as physics-informed neural networks (PINNs) \cite{raissi2017physics,raissi2019physics,siegel2023greedy} and operator learning approaches \cite{hao2024newton,yu2024nonlocal,zhang2024modno,liu2024deep} are emerging as promising alternatives.

In this paper, \textbf{Laplacian Eigenfunction-Based Neural Operator (LE-NO)} introduces a novel way of discovering nonlinear PDEs and computing their solutions simultaneously. Unlike traditional operator learning methods, which rely on learning the PDE term-by-term, LE-NO approximates the entire nonlinear operator directly by integrating Laplacian eigenfunctions into a neural network architecture. These eigenfunctions, derived from the underlying PDE operator, provide a natural physics-aligned basis that simplifies the learning task, ensures interpretability, and enhances computational efficiency. The Laplacian eigenfunctions diagonalize the diffusion term, allowing the network to focus on learning the nonlinear part of the operator, which leads to a more stable and accurate solver. This is a departure from Fourier Neural Operators (FNOs), which are parameterized in the frequency domain and are primarily designed for periodic boundary conditions. LE-NO, by contrast, is flexible enough to handle non-periodic boundary conditions and irregular geometries, where traditional Fourier modes may fail.

One of the primary contributions of LE-NO lies in its ability to handle complex, heterogeneous systems, such as those seen in biology. By learning the operator directly from data, LE-NO circumvents the need for predefined symbolic terms or the integration of physical laws, offering a more adaptable and scalable solution to PDE discovery in challenging biological systems. Moreover, LE-NO extends beyond typical operator learning frameworks by enabling solution approximation, time stepping, and transfer learning across individuals with varying disease states or parameters, making it ideal for applications like personalized Alzheimer’s disease modeling.

In addition to discovering the right-hand side of a PDE, LE-NO can also be used to learn the solution operator, construct neural solvers for time stepping, and perform transfer learning across patients. These capabilities go beyond what current symbolic regression methods can achieve and more closely align with recent advances in operator learning, such as DeepONet \cite{lu2019deeponet,lu2021learning} and FNO \cite{li2020fourier}. 

One of LE-NO’s distinguishing advantages is its enhanced interpretability. By projecting the solution space onto a set of Laplacian eigenfunctions—natural modes of the underlying differential operator—the learned dynamics can be more transparently understood in terms of their spectral contributions. This basis aligns closely with the physics of diffusion-type processes and often corresponds to spatial patterns observed in biological systems. This interpretability is particularly valuable in biological applications like Alzheimer’s disease modeling, where understanding the spatial and temporal progression of biomarkers can inform hypotheses, guide experimental design, and build trust in AI-generated predictions \cite{pang2023geometric}.

To validate LE-NO’s performance, we compare it against established methods like DeepONet and FNO on canonical PDEs, showing that LE-NO achieves superior efficiency and accuracy, especially when dealing with sparse and noisy data. Theoretical guarantees on approximation and convergence are also provided, ensuring that LE-NO performs reliably across a range of applications. Additionally, we apply LE-NO to Alzheimer’s disease biomarker data, where it uncovers previously unrecognized nonlinear dynamics in disease progression, outperforming symbolic regression and demonstrating its potential for real-world biological applications.

\section{Problem Statement}\label{sec:preliminary}

We consider the following general nonlinear reaction--diffusion dynamic:
\begin{equation}\label{eq:baseeq}
	\left\{
	\begin{aligned}
		u_t-D\Delta u&= \mathcal{F}(u), & \text{ in }\Omega\times [0,T]\\
		u(x,0) &= u_0(x), &\text{ in }\Omega,
	\end{aligned}\right.
\end{equation}
with various boundary conditions including Dirichlet or Neumann boundary condition. $D$ represents the diffusion coefficient and $\Omega\subset \mathbb{R}^d$ is a polygonal domain, and $\mathcal{F}$ denotes the nonlinear term determined by empirical data.

The weak form of problem \eqref{eq:baseeq} with homogeneous Dirichlet boundary condition reads as: Find $u\in L^2(0,T;H_0^1(\Omega))$ with $u_t\in L^2(0,T;H^{-1}(\Omega))$ such that
\begin{equation}\label{eq:originvp}
	(u_t,v)_{L^2}+(D\nabla u,\nabla v)_{L^2} = (\mathcal{F}(u),v)_{L^2} 
\end{equation}
for $ v \in H^1_0(\Omega)$ with $u(x,0) = u_0(x)$.

The nonlinear term  $\mathcal{F}$  can also be viewed as a nonlinear operator:
\begin{equation}
	\mathcal{F}:H^1(\Omega) \rightarrow H^{-1}(\Omega). 
\end{equation} 

The primary objective of this paper is the data-driven discovery of the nonlinear term $\mathcal{F}$  represented by a neural operator, $\mathcal{N}(u;\theta)$ parameterized by $\theta$. This framework is general enough to encompass a wide range of problems in mathematical physics, including pattern formation in biology \cite{gierer1972theory}, population dynamics \cite{holmes1994partial}, catalytic reactions \cite{gupta2009linear}, and diffusion in alloys \cite{allen1975coherent}. By leveraging the neural operator, we aim to develop a data-driven model that facilitates efficient prediction and enables the discovery of causal networks underlying their behavior.

\section{Proposed Methodology}
This section presents an overview of the LE-NO framework, as illustrated in Figure~\ref{fig:nn}, together with the methodology and theoretical analysis of the approach.
\subsection{Neural Operator Framework}
Given Hilbert spaces $\mathcal{X}(\Omega)$, $\mathcal{Y}(\Omega)$, we define the following neural operator
\begin{equation}\label{eq:no1}
	\mathcal{N}({u}) = \sum_{i=1}^P \mathcal{A}_i \sigma\left(\mathcal{W}_i {u} + \mathcal{B}_i\right) \quad \forall {u} \in \mathcal{X},
\end{equation}
where
$\mathcal{W}_i \in \mathcal{L}(\mathcal{X}, \mathcal{X})$: continuous linear operator,
$\mathcal{B}_i \in \mathcal{X}$: bias term, $\mathcal{A}_i \in \mathcal{L}(\mathcal{X}, \mathcal{Y})$: continuous linear operator, and $\sigma: \mathbb{R} \to \mathbb{R}$: nonlinear pointwise activation function.

Normally, the linear operator \(\mathcal{A}_i\) is composed of neural network basis functions, i.e., 
\begin{equation}\label{eq:no2}
	\mathcal{A}_iv = \sum_{j=1}^m {A}_i^j(v) \sigma(\alpha_i^j \cdot x + \zeta_i^j),\quad x\in\Omega,\ v\in\mathcal{Y}.
\end{equation}
where $A_i^j:\mathcal{X}\rightarrow\mathbb{R}$, $\alpha_i^j\in\mathbb{R}^d$ and $\zeta_i^j\in \mathbb{R}$. This formulation is designed to learn low-dimensional intrinsic mappings \cite{lu2019deeponet, li2020fourier, he2023mgno}.
Since the operator learning involves two neural network representations, the training process becomes quite challenging due to the limited data available, and the architecture itself is also very large. Therefore, we consider using the eigenfunctions of the Laplacian  instead of the neural network basis $ \sigma(\alpha_i^j \cdot x + \zeta_i^j)$ in Equation~\ref{eq:no2}, as they offer good approximation properties and are well-suited for domains with complex geometries.

\begin{figure}[!htbp]
	\centering
	\resizebox{1\textwidth}{!}{
		\definecolor{darkblue}{rgb}{0.0, 0.0, 0.55}
		\begin{tikzpicture}[
			every node/.style={align=center},
			circ/.style={draw, circle, thick, fill=gray!20, inner sep=2pt}, 
			box/.style={draw=darkblue, rounded corners, align=center, minimum width=2cm, minimum height=2cm,line width=.03cm},
			dashedbox/.style={draw, dashed, rounded corners, minimum width=12.5cm, minimum height=3.2cm, line width=.03cm},
			arrow/.style={thick,->,>=stealth},
			bluearrow/.style={-{Triangle[angle=60:5pt,fill=blue]}, thick},
			dashedline/.style={dashed, line width=0.5mm},
			innerbox/.style={draw, rounded corners, align=center, minimum width=2cm, minimum height=1cm},
			box2/.style={draw=darkblue, rounded corners, align=center, minimum width=4cm, minimum height=2cm, thick, line width=1.5mm},
			]
			
			\node[box] (data) {1. Data\\$u^0(x),t_0$\\$u^1(x),t_1$\\$\vdots$\\$u^N(x),t_N$};
			\node[dashedbox, right=of data] (offline) {2a. Offline Computation\\\\\\\\\\\\\\};
			\node[right=of data] (eigenp) {Eigen Problem \\$-\Delta \phi_i=\lambda_i\phi_i$};
			\node[box, right=.5cm of eigenp] (eigen) {Eigenfunction\\$(\lambda_1,\phi_1)$\\$(\lambda_2,\phi_2)$\\$\vdots$\\$(\lambda_P, \phi_P)$};
			\node[box, right=2.2cm of eigen] (beta){Coefficient~Residual\\$\boldsymbol{\beta}^0(u)\quad\mathcal{R}^0(u)$\\$\boldsymbol{\beta}^1(u)\quad\mathcal{R}^1(u)$\\$\vdots\quad\qquad\vdots$\\$\boldsymbol{\beta}^N(u)\quad\mathcal{R}^N(u)$};
			\node[right=0cm of eigen, yshift=.3cm] (label) {Projection};
			\node[box, below=1.5cm of data, xshift=.5cm] (net) {2b. Neural Network $\mathcal{G}$\\\includegraphics[width=.25\textwidth]{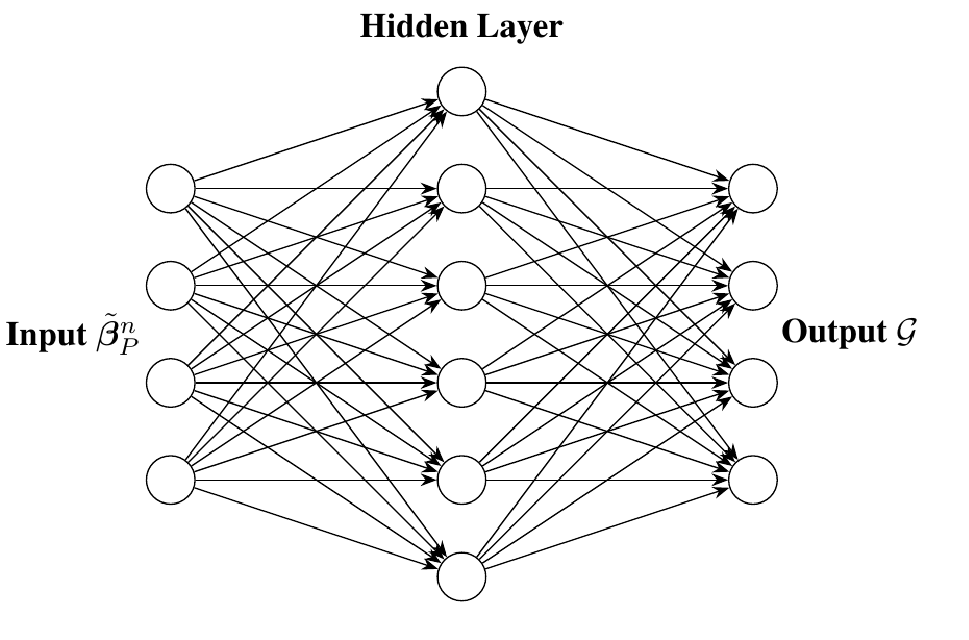}};
			\node[circ,right=.5cm of net] (time) {$\times$};
			\node[below=of data, xshift=-1.2cm] (start) {};
			\node[right=14.8cm of start] (end) {};
			\node[box, right=.5cm of time] (no) {3a. Neural Operator \\\parbox{4.8cm}{$$\mathcal{N}(u;\theta)=\sum_{i=1}^{P}\mathcal{G}_i(\boldsymbol{\beta}(u);\theta)\phi_i$$}};
			\node[box,right=.5cm of no] (ite) {3b. PDE Evolution\\\\$\widetilde{\bm{\beta}}^0\rightarrow\widetilde{\bm{\beta}}^1\rightarrow\cdots\rightarrow\widetilde{\bm{\beta}}^N$};
			\node[box,right= 1cm of ite, yshift=2.35cm] (loss) {4. Solve Optimization\\ \parbox{1.5cm}{$$
					\underset{\theta}{\arg \min } ~L(\theta)$$}};
			\draw[bluearrow] (data) -- (offline);
			\draw[bluearrow] (eigenp)--(eigen);
			\draw[bluearrow] (eigen)--(beta);
			\draw[dashedline] (start)--(end);
			\draw[arrow] (net)--(time);
			\draw[arrow] (eigen)--(time);
			\draw[arrow] (time)--(no);
			\draw[bluearrow] (no)--(ite);
			\draw[bluearrow] (ite)-| (loss);
			\draw[bluearrow] (beta)-| (loss);
	\end{tikzpicture}}
	\caption{Framework of PDE learning using the Laplacian Eigenfunction-Based Neural Operator (LE-NO). The process involves computing Laplacian eigenfunctions on the domain \(\Omega\), projecting observed data onto these eigenfunctions to obtain \(\bm{\beta}^n\) and the residual \(\mathcal{R}^n\), integrating the eigenfunctions into the neural operator to approximate the nonlinear  term of PDEs, and training the neural operator using a loss function based on the PDEs evolution.}
	\label{fig:nn}
\end{figure}

\subsection{Laplacian Eigenfunction}
In this paper, instead of using neural network basis functions, we leverage the eigenfunctions of $-\Delta$. Specifically, using the inhomogeneous Dirichlet boundary condition as an example, we consider the following eigenvalue problem:

\begin{equation}
	\left\{	\begin{aligned}
		-\Delta \phi_i &= \lambda_i \phi_i, &\text{ in }\Omega,\\
		\phi_i &= 0, & \text{ on }\partial \Omega,
	\end{aligned}\right.
\end{equation}
with $\|\phi\|_{L^2(\Omega)} = 1$ and it holds $0\le\lambda_1\le \lambda_2\le \cdots$. 
\begin{remark}
	For a rectangular or cubic domain, the eigenfunctions are composed of tensor products of Fourier modes. On an irregular domain, however, eigenfunctions can be derived using discretization methods such as finite element methods. This derivation is performed as an offline preprocessing step to train the neural operator efficiently. 
\end{remark}

Thus, the \textbf{Laplacian Eigenfunction-Based Neural Operator} is defined as 
\begin{equation}
	\mathcal{N}({u}) = \sum_{i=1}^P {A}_i \sigma\left(\mathcal{W}_i {u} + \mathcal{B}_i\right)\phi_i(x) \quad \forall {u} \in \mathcal{X}, \label{LENO}
\end{equation}
where $A_i:\mathcal{X}\rightarrow\mathbb{R}$. It is employed to approximate $\mathcal{F}$ in Eq. (\ref{eq:baseeq}) to learn the exact form of the model.

\jddd{
\begin{remark}
The Laplacian eigenfunctions were also employed in \cite{chen2024learning} to construct an encoder–approximation–decoder block: the encoder projects data onto the eigenfunction basis via an $L^2$ projection, and the decoder reconstructs the signal using the same basis. This block is incorporated as a nonlinear layer within their neural operator, playing a role analogous to the Fourier kernel module in the Fourier Neural Operator. By contrast, our approach can be viewed as a variant of DeepONet, in which the trunk network is replaced by Laplacian eigenfunctions, while the branch network remains a linear functional approximated by a neural network. This distinction highlights the fundamental difference between the two methods. Furthermore, although our paper focuses on problems defined over a domain $\Omega$, our method naturally extends to manifolds, since the corresponding Laplacian eigenfunctions can also be computed.   
\end{remark}

}
\subsection{Learning Problem}
For the learning problem, we will project both the data and the equation onto the following space:
\begin{equation}
	V_P = {\rm span}\{\phi_i\}_{i=1}^P.
\end{equation}

{\bf Data projection.}  Given the observed data consisting of data samples $\{u^n(x), t^n\}_{n=0}^N$ at time $t_n$, the projection is defined as: $\beta_i^n(u)=(u^n(x),\phi_i(x))$, $i=1,\cdots,P$.

{\bf Equation projection.} By restricting the variational problem \eqref{eq:originvp} to the subspace $V_P$, we approximate the numerical solution as 
$\displaystyle \widetilde{u}^n(x)=\sum_{i=1}^{P}\widetilde{\beta}_i^n\phi_i(x).$

By applying the semi-implicit Euler method for time discretization and replacing  $\mathcal{F}$  with  $\mathcal{N}$, we derive the following discrete formulation
\jddd{\begin{equation}\label{eq:discrete}
	\frac{\widetilde{\bm{\beta}}^n-\widetilde{\bm{\beta}}^{n-1}}{ t_n-t_{n-1}}+D\Lambda_P\widetilde{\bm{\beta}}^n =\begin{pmatrix}
	    {A}_1 \sigma\left(\mathcal{W}_1 \widetilde{u}^{n-1} + \mathcal{B}_1\right)\\
        {A}_2 \sigma\left(\mathcal{W}_2 \widetilde{u}^{n-1} + \mathcal{B}_2\right)\\
        \vdots\\
        {A}_P \sigma\left(\mathcal{W}_P \widetilde{u}^{n-1} + \mathcal{B}_P\right)\\
	\end{pmatrix} , \quad  n=1,\cdots,N,
\end{equation}}
where $\bm{\widetilde{\beta}}^n=(\widetilde{\beta}_1^{n},\cdots,\widetilde{\beta}_P^{n})^T$ and $\Lambda_P = {\rm diag}(\lambda_1,\cdots,\lambda_P)$ (due to the orthogonality of $\phi_i$). The nonlinear component of the neural operator, originally expressed as ${A}_i\sigma(\mathcal{W}_i\widetilde{u}^{n-1}+\mathcal{B}_i)$ can now be regarded as a continuous function $\mathcal{G}_i(\widetilde{\bm\beta}^{n-1}(u);\theta)$ which depends on the numerical approximation $\widetilde{\bm \beta}$ of solution coefficient $\bm\beta=(\beta_1,\beta_2,\cdots,\beta_P)^T$ on the eigenfunction space and and is parameterized by a neural network.  We denote the collection as $\mathcal{G}=(\mathcal{G}_1,\cdots,\mathcal{G}_P)^T$. For simplicity, the dependence of $\bm{\widetilde{\beta}}$ on $u$ is omitted for simplicity.

\jddd{
\begin{remark}
Equation \eqref{LENO} illustrates the single-layer structure of the neural operator, which can be viewed as a generalization of a shallow neural network and provides intuition for our approach. In this formulation, the shallow representation corresponds to the functional component of the neural operator, which in practice can be replaced by a deep neural network. In other words, the mapping $\mathcal{G}$ can be realized with a multi-layer architecture to increase expressive power, while the basis functions remain the Laplacian eigenfunctions, ensuring established approximation properties. This is the strategy adopted in our implementation.
\end{remark}

}

{\bf Training Loss.} 
For simplicity, we denote both the time differentiation operator and the Laplace operator as:
\begin{equation}
	\mathcal{R}^n(u) = \frac{\bm{\beta}^n(u)-\bm{\beta}^{n-1}(u) }{t_n-t_{n-1}} + D\Lambda_P \bm{\beta}^n(u),
\end{equation}
with $\bm\beta^n(u)=(\beta_1^n(u),\beta_2^n(u),\cdots,\beta_n^P(u))^T$. To train the neural network $\mathcal{G}(\cdot,\theta):\mathbb{R}^P\rightarrow\mathbb{R}^P$, we focus on both the data RMSE loss and the model loss:
\begin{itemize}
	\item {\textbf{Data $L^2$ loss}
		ensures that the discretized solution $\bm{\widetilde{\beta}}^n$ closely matches the given data $\bm{\beta}^n$     
	}:
	\begin{equation}
		L^D(\theta):= \frac{1}{N}\sum_{n=1}^{N} \frac{\|\bm{\widetilde{\beta}}^n-\bm{\beta}^n\|_{l^2}}{\|\bm{\beta}^n\|_{l^2}}.
	\end{equation}
	\item 	{\textbf{Model residual loss}  guarantees that the solution satisfies the model problem}:
	\begin{equation}
		L^R(\theta):=\frac{1}{N}\sum_{n=1}^{N} \frac{ \|\mathcal{R}^n-\mathcal{G}(\bm{{\beta}}^{n-1};\theta)\|_{l^2}}{\|\mathcal{R}^n\|_{l^2}}.
	\end{equation}
\end{itemize}

Then we train the neural operator by minimizing the following combined loss function:
\begin{equation}
	\min_\theta L(\theta):= L^D(\theta)+L^R(\theta).
\end{equation}

\jddd{
\begin{remark}
    The two loss terms, $L^D$ and $L^R$, are both defined using the relative $l^2$ error. Since this normalization ensures that the two terms are of comparable magnitude, it is not necessary to introduce additional weighting, and we simply use their summation as the total loss.
\end{remark}

}
Thus we summarize the detailed implementation in Algorithm 1 for learning the nonlinear term 
$\mathcal{F}$ of the PDE.

\begin{algorithm}[!htbp]
	\caption{Nonlinear Term Learning for PDE Solving}
	\label{alg:no}
    \jd{
	\begin{algorithmic}[1]
		\Require 
        Dynamic data with $M$ samples $\{(u_i^n(x), t_n)\}_{n=0}^N$ for $i=1,\dots,M$ on domain $\Omega$;  
        integer $P$ (number of eigenfunctions);  
        integer $epochs$ (number of training epochs);  
        randomly initialized neural network $\mathcal{G}(\cdot;\theta^0):\mathbb{R}^P \to \mathbb{R}^P$ with parameters $\theta^0$.

		\State Compute Laplacian eigenfunctions $\{\phi_i\}_{i=1}^P$ on $\Omega$.  
        \State Project the data samples to obtain coefficients $\bm{\beta}_i^n$ and residuals $\mathcal{R}_i^n$:  
        \[
        \bm{\beta}_i^n = \big( (u_i^n,\phi_1),\,(u_i^n,\phi_2),\,\dots,\,(u_i^n,\phi_P) \big)^\top,
        \]
        \[
        \mathcal{R}_i^n = \frac{\bm{\beta}_i^n - \bm{\beta}_i^{n-1}}{t_n - t_{n-1}} + D \Lambda_P \bm{\beta}_i^n.
        \]

        \For{$j$=$1,2,\cdots,epochs$}
		\For{$n = 1, 2, \ldots, N$}
		\State Evolve the approximation $\widetilde{\bm{\beta}}_i^n$ with initial condition $\widetilde{\bm{\beta}}_i^0 = \bm{\beta}_i^0$:  
            \[
            \frac{\widetilde{\bm{\beta}}_i^n - \widetilde{\bm{\beta}}_i^{n-1}}{t_n - t_{n-1}} 
            + D\Lambda_P \widetilde{\bm{\beta}}_i^n
            = \mathcal{G}(\widetilde{\bm{\beta}}_i^{\,n-1}; \theta^{j-1}).
            \]
        \EndFor
        
		\State Use the loss function $L$:  
        \[
        L(\theta^{j-1}) = \frac{1}{MN}\sum_{i=1}^M\sum_{n=1}^N 
        \left( 
        \frac{\|\widetilde{\bm{\beta}}_i^n - \bm{\beta}_i^n\|_{2}}{\|\bm{\beta}_i^n\|_{2}}
        + \frac{\|\mathcal{R}_i^n - \mathcal{G}(\bm{\beta}_i^{\,n-1})\|_{2}}{\|\mathcal{R}_i^n\|_{2}}
        \right).
        \]

        \State Update parameters using gradient descent:  
        \[
        \theta^j = \theta^{j-1} - \eta \nabla L(\theta^{j-1}).
        \]
        \EndFor

		\Ensure Nonlinear learning term of the PDE:  
    \[
    \mathcal{N}(\cdot) = \sum_{i=1}^P \mathcal{G}_i(\cdot;\theta^*) \, \phi_i(x), 
    \qquad \theta^* = \theta^{epochs}.
    \]
        
	\end{algorithmic}}
\end{algorithm}

As illustrated in Figure \ref{fig:nn}, the framework is specifically designed to learn the complex nonlinear operator $\mathcal{F}$.

\jdall{
\begin{remark}[Extension to general diffusion]
Our method also extends to general diffusion operators of the form
\begin{equation}
u_t - \nabla \cdot \big(D(x)\nabla u\big) = \mathcal{F}(u),
\end{equation}
where $D(x)$ may be either a strictly positive scalar function or a symmetric positive definite (SPD) matrix. In this case, one considers the eigenvalue problem
\begin{equation}
-\nabla \cdot \big(D(x)\nabla \phi_i\big) = \lambda_i \phi_i,
\end{equation}
whose eigenfunctions still satisfy $L^2$ orthogonality. Hence, the proposed method applies directly: it suffices to replace $D\Lambda_P$ by $\Lambda_P$ in the algorithm, as the effect of the diffusion coefficient is already encoded in the eigenproblem.
\end{remark}
}

\begin{remark}
	Our method is also applicable to inhomogeneous Dirichlet and Neumann boundary conditions. For detailed discussions and adaptations, refer to  \ref{appendix:boundary}.
	
\end{remark}

\subsection{Theoretical Analysis}
In this section, we aim to analyze the convergence and the approximation of the Laplacian Eigenfunction-Based Neural Operator. 
The following theorem provides the approximation property for $\mathcal{N}$.
\begin{theorem}[\bf operator approximation error]
	\label{thm:approxop}
	Let $\mathcal{F}: \mathcal{X}(\Omega)\rightarrow \mathcal{Y}(\Omega)$ be a continuous operator with $\|\mathcal{F}'\|\le L$, $K\subset \mathcal{X}\cap H^k(\Omega)$ be a compact set and $\mathcal{F}(K)\subset H^k(\Omega)$, there exists a neural operator $\mathcal{N}$ of the form \eqref{LENO}, where $\mathcal{G}$ is a shallow neural network with width $m$, such that
	\begin{equation}\label{eq:approxop}
		\begin{aligned}
			\sup_{u\in K} &\|\mathcal{N}(u)-\mathcal{F}(u)\|_{L^2(\Omega)}\\
			&\le C(L+1)P^{-{\frac{k}{d}}}+C\log(m)^{{\frac{1}{2}}+P}m^{-{\frac{1}{P}}{\frac{P+2}{P+4}}}.
		\end{aligned}
	\end{equation}
	where $C$ is a constant depending on the compact set $K$,
\end{theorem}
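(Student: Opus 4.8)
The plan is to split $\mathcal{N}(u)-\mathcal{F}(u)$ into a spectral (projection) error governed by $P$ and a finite-dimensional network error governed by the width $m$. Let $\Pi_P$ be the $L^2(\Omega)$-orthogonal projection onto $V_P=\mathrm{span}\{\phi_i\}_{i=1}^P$, so $\Pi_P w=\sum_{i=1}^P(w,\phi_i)\phi_i$, and introduce the finite-dimensional coefficient map $F=(F_1,\dots,F_P)^T:\mathbb{R}^P\to\mathbb{R}^P$, $F_i(\bm{b})=\big(\mathcal{F}(\sum_{j=1}^P b_j\phi_j),\phi_i\big)$. Writing $\bm{\beta}=\bm{\beta}(u)$ with $\sum_j\beta_j\phi_j=\Pi_P u$, I would use the decomposition
\begin{equation*}
\begin{aligned}
\mathcal{N}(u)-\mathcal{F}(u)
&=\underbrace{\textstyle\sum_{i=1}^P\big(\mathcal{G}_i(\bm{\beta})-F_i(\bm{\beta})\big)\phi_i}_{(\mathrm{I})}\\
&\quad+\underbrace{\Pi_P\big(\mathcal{F}(\Pi_P u)-\mathcal{F}(u)\big)}_{(\mathrm{II})}
+\underbrace{(\Pi_P-\mathrm{Id})\mathcal{F}(u)}_{(\mathrm{III})},
\end{aligned}
\end{equation*}
and bound $\|\mathcal{N}(u)-\mathcal{F}(u)\|_{L^2}\le\|(\mathrm{I})\|_{L^2}+\|(\mathrm{II})\|_{L^2}+\|(\mathrm{III})\|_{L^2}$, where by Parseval $\|(\mathrm{I})\|_{L^2}=\|\mathcal{G}(\bm{\beta})-F(\bm{\beta})\|_{l^2}$.

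Next I would dispatch the two projection terms to produce the factor $C(L+1)P^{-k/d}$. The key estimate is the spectral approximation bound: for $w\in H^k(\Omega)$, $\|(\mathrm{Id}-\Pi_P)w\|_{L^2}^2=\sum_{i>P}(w,\phi_i)^2\le\lambda_{P+1}^{-k}\sum_{i>P}\lambda_i^k(w,\phi_i)^2\le\lambda_{P+1}^{-k}\|w\|_{H^k}^2$, which combined with Weyl's law $\lambda_P\ge cP^{2/d}$ gives $\|(\mathrm{Id}-\Pi_P)w\|_{L^2}\le CP^{-k/d}\|w\|_{H^k}$. Applying this to $w=\mathcal{F}(u)$ bounds $\|(\mathrm{III})\|_{L^2}=\|(\mathrm{Id}-\Pi_P)\mathcal{F}(u)\|_{L^2}$ by $CP^{-k/d}\sup_{u\in K}\|\mathcal{F}(u)\|_{H^k}$, which is finite since $\mathcal{F}(K)\subset H^k$ and $\mathcal{F}$ is continuous on the compact $K$. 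For $(\mathrm{II})$ I would use $\|\mathcal{F}'\|\le L$, i.e.\ $\mathcal{F}$ is $L$-Lipschitz, together with the same bound applied to $u\in K\subset H^k$: $\|(\mathrm{II})\|_{L^2}\le L\|(\mathrm{Id}-\Pi_P)u\|_{L^2}\le CLP^{-k/d}\sup_{u\in K}\|u\|_{H^k}$. Summing yields $\|(\mathrm{II})\|_{L^2}+\|(\mathrm{III})\|_{L^2}\le C(L+1)P^{-k/d}$.

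It then remains to construct $\mathcal{G}$ so that $(\mathrm{I})$ produces the second term. The target $F$ is defined on the compact set $\bm{\beta}(K)\subset\mathbb{R}^P$, and it is Lipschitz there: its Jacobian has entries $\partial F_i/\partial b_l=\big(\mathcal{F}'(\sum_j b_j\phi_j)\phi_l,\phi_i\big)$, i.e.\ it is the matrix of $\Pi_P\mathcal{F}'(\cdot)$ restricted to $V_P$, whose $l^2\to l^2$ operator norm is at most $\|\mathcal{F}'\|\le L$ by orthonormality of $\{\phi_i\}$. Hence $F$ is $L$-Lipschitz in the Euclidean norm on $\bm{\beta}(K)$. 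I would then invoke a quantitative shallow-network approximation theorem for Lipschitz maps on a compact subset of $\mathbb{R}^P$ to obtain a width-$m$ network $\mathcal{G}$ with $\sup_{\bm{b}\in\bm{\beta}(K)}\|\mathcal{G}(\bm{b})-F(\bm{b})\|_{l^2}\le C\log(m)^{1/2+P}m^{-\frac{1}{P}\frac{P+2}{P+4}}$. By Parseval this controls $\|(\mathrm{I})\|_{L^2}$ uniformly over $u\in K$, and taking the supremum over $K$ in the triangle inequality gives \eqref{eq:approxop}.

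The main obstacle is the second term: the quantitative shallow-network approximation of the $P$-dimensional Lipschitz map $F$ at the precise rate $m^{-\frac{1}{P}\frac{P+2}{P+4}}$ with the factor $\log(m)^{1/2+P}$. This is a genuine curse-of-dimensionality estimate in the (growing) dimension $P$, so matching the exact exponent and log power requires a specific construction --- for instance a mollification-then-network or a localized/stratified approximation argument --- rather than a generic density statement, and one must carefully track how the implicit constant and the Lipschitz constant of $F$ depend on $K$ and $P$. A secondary technical point is the reconciliation of norms: the bound $\|\mathcal{F}'\|\le L$ must be read so that the Lipschitz estimate in $(\mathrm{II})$ holds in (or embeds continuously into) $L^2$, and one must ensure $\mathcal{F}(\Pi_P u)$ remains controlled even though $\Pi_P u$ need not belong to $K$; both issues are absorbed into the constant $C$ depending on $K$.
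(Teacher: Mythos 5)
Your decomposition into the network error (I), the Lipschitz-propagated projection error (II), and the tail projection error (III) is exactly the paper's argument (there $\mathcal{F}_P=\Pi_P\circ\mathcal{F}\circ\Pi_P$, so (II)+(III) is Lemma~\ref{lem:approxop} and (I) is the $\sum_j(\mathcal{G}_j-\mathcal{G}_j^\theta)\phi_j$ term), including the observation that the coefficient map is $W^{1,\infty}$ with norm controlled by $L$ and that $\Pi_P u\notin K$ must be handled via an enlarged compact set (the paper's Lemma~\ref{lem:ext}). The one ingredient you flag as the obstacle --- the shallow ReLU rate $C\|\mathcal{G}_j\|_{W^{1,\infty}}\log(m)^{1/2+P}m^{-\frac{1}{P}\frac{P+2}{P+4}}$ --- is precisely what the paper imports as a citation (Theorem~1 of \cite{mao2023rates}) rather than proving, so your proposal matches the paper's proof in all substantive respects.
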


\begin{remark}
	Note that for a shallow neural network, given a fixed input dimension $P$, we can set its width to be $\mathcal{O}(e^{P^\alpha})$ for some $\alpha > 2$ in order to achieve a small approximation error when $P$ is large.

	Alternatively, we can use a deep neural network by setting the depth as $L=\mathcal{O}(\frac{k}{d}\log(P)+1)$ and the number of parameters as $m=\mathcal{O}\big(e^{\frac{kP}{d}\log{P}}(\frac{k}{d}\log(P)+1)\big)$ to give the following estimate, 
	\begin{equation}
		\begin{aligned}
			\sup_{u\in K} &\|\mathcal{N}(u)-\mathcal{F}(u)\|_{L^2(\Omega)}
			\le CP^{-{\frac{k}{d}}},
		\end{aligned}
	\end{equation}
	which is derived based on the $L^\infty$ estimate of deep ReLU neural networks from \cite{yarotsky2017error}. Here, the exponential dependency in $m$ exists due to the general Lipschitz assumption on $\mathcal{F}$.
\end{remark}
Under this result, we have the following estimate for the approximated solution $\widetilde{u}^n$ obtained by the neural operator.
\begin{theorem}[\bf solution approximation error]
	\label{thm:err-sol}
	Under assumption in Theorem \ref{thm:approxop}, let $u(t)$ be the solution of \eqref{eq:baseeq} and $\widetilde{u}^n$ be the solution of \eqref{eq:discrete}, there exists a neural operator $\mathcal{N}$ such that
	\begin{equation}\label{eq:err-sol}
		\begin{aligned}
			\sup_{0\le n\le N}	\| u(t_n) - \widetilde{u}^n\|_{L^2(\Omega)} \le &  
			C\sup_{u\in K}\|\mathcal{F}(u)-\mathcal{N}(u)\|_{L^2} T^{1/2}\exp(CLT)\\
			&~ +C(\tau+P^{-{\frac{k}{d}}}) .
		\end{aligned}
	\end{equation}
	where $\widetilde{u}^n = \sum_{i=1}^P \widetilde{{\beta}}^n_i\phi_i$, $\tau=\max_n(t_n-t_{n-1})$ and $\sup_{u\in K}\|\mathcal{F}(u)-\mathcal{N}(u)\|_{L^2}$ can be bounded by the result in Theorem \ref{thm:approxop}.
\end{theorem}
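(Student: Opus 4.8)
The plan is to reduce the $L^2$ solution error to a discrete error in the eigencoefficients and then close it with a discrete energy/Gronwall argument, exploiting that the Laplacian eigenbasis diagonalizes the diffusion term \emph{exactly}. First I would test the weak form \eqref{eq:originvp} with $v=\phi_i$ and use $(\nabla u,\nabla\phi_i)=\lambda_i(u,\phi_i)$ (integration by parts with the Dirichlet condition) to show that the exact modal coefficients $\beta_i(t)=(u(t),\phi_i)$ solve the decoupled system $\dot\beta_i+D\lambda_i\beta_i=(\mathcal F(u(t)),\phi_i)$ with no spatial truncation error in the diffusion part. Consequently the only genuinely spatial error is the spectral truncation: writing $P_P$ for the $L^2$-projection onto $V_P$, the regularity $u(t_n)\in H^k$ together with Weyl's law $\lambda_P\sim P^{2/d}$ yields $\|u(t_n)-P_P u(t_n)\|_{L^2}\le CP^{-k/d}$, which will supply the $P^{-k/d}$ contribution after a final triangle inequality.

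Next I would compare the time-discrete exact coefficients with the scheme. Setting $\bm e^n=\bm\beta^n-\widetilde{\bm\beta}^n$, a Taylor expansion of the semi-implicit Euler step gives a consistency bound $\|\bm\rho^n\|_{l^2}\le C\tau$, and subtracting \eqref{eq:discrete} from the exact modal relation produces the error recursion
\[
\frac{\bm e^n-\bm e^{n-1}}{t_n-t_{n-1}}+D\Lambda_P\bm e^n=\bm R^n,\qquad \bm R^n:=\bm F(u(t_{n-1}))-\bm G(\widetilde{\bm\beta}^{n-1})+\bm\rho^n,
\]
with $F_i=(\mathcal F(u),\phi_i)$ and $G_i=(\mathcal N(\widetilde u^{n-1}),\phi_i)$. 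I would split $\bm R^n$ by the triangle inequality into a Lipschitz part $\|\mathcal F(u(t_{n-1}))-\mathcal F(\widetilde u^{n-1})\|_{L^2}\le L(\|\bm e^{n-1}\|_{l^2}+CP^{-k/d})$, the operator approximation error $\|\mathcal F(\widetilde u^{n-1})-\mathcal N(\widetilde u^{n-1})\|_{L^2}\le\sup_{u\in K}\|\mathcal F(u)-\mathcal N(u)\|_{L^2}$ supplied by Theorem~\ref{thm:approxop}, and the consistency term $C\tau$; the $L^2$-orthonormality of $\{\phi_i\}$ lets me pass freely between the $l^2$ norm of coefficients and the $L^2$ norm of their $V_P$-reconstruction.

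I would then take the $l^2$ inner product of the recursion with $\bm e^n$. The identity $(\bm e^n-\bm e^{n-1},\bm e^n)\ge\tfrac12(\|\bm e^n\|^2-\|\bm e^{n-1}\|^2)$ and the nonnegativity $D(\Lambda_P\bm e^n,\bm e^n)\ge0$ eliminate the implicit terms, and Young's inequality isolates a multiple of $\|\bm e^{n-1}\|^2$ plus a source $(s^n)^2$ with $s^n\sim\epsilon_{\mathrm{op}}+\tau+P^{-k/d}$, where $\epsilon_{\mathrm{op}}:=\sup_{u\in K}\|\mathcal F(u)-\mathcal N(u)\|_{L^2}$. A discrete Gronwall inequality applied to $\|\bm e^n\|^2$, using $\sum_j(t_j-t_{j-1})=T$ and $\bm e^0=0$, yields $\|\bm e^n\|\le C\,T^{1/2}\exp(CLT)\,s$, the factor $T^{1/2}$ arising from $\sqrt{\sum_j(t_j-t_{j-1})(s^j)^2}$. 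Finally I combine this with the truncation estimate through $\|u(t_n)-\widetilde u^n\|_{L^2}\le\|u(t_n)-P_P u(t_n)\|_{L^2}+\|\bm e^n\|_{l^2}$ and collect the discretization and truncation terms into $C(\tau+P^{-k/d})$, displaying the operator-error term with its explicit $T^{1/2}\exp(CLT)$ dependence to obtain \eqref{eq:err-sol}.

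The step I expect to be most delicate is the legitimacy of invoking the bound $\sup_{u\in K}\|\mathcal F(u)-\mathcal N(u)\|_{L^2}$ \emph{at the computed iterate} $\widetilde u^{n-1}$, since Theorem~\ref{thm:approxop} only controls the error on the compact set $K$ and nothing a priori forces $\widetilde u^{n-1}\in K$. I would resolve this with a bootstrapping/induction argument: assuming the exact trajectory $\{u(t)\}$ lies in the interior of $K$, I would show inductively that, provided $\epsilon_{\mathrm{op}}$, $\tau$, and $P^{-k/d}$ are sufficiently small, the accumulated error $\|\bm e^{n-1}\|_{l^2}$ keeps $\widetilde u^{n-1}$ inside $K$ (or a fixed neighborhood on which the Lipschitz and approximation bounds persist), so the Gronwall estimate is self-consistent. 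A secondary technical point is reconciling the $H^1\to H^{-1}$ mapping property of $\mathcal F$ with the $L^2$ Lipschitz estimate $\|\mathcal F'\|\le L$ used above, which I would handle by restricting to the $H^k$-bounded set $K$, where the relevant norms are comparable.
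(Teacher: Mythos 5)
Your proposal is correct, but it follows a genuinely different route from the paper. The paper's proof is modular: it first introduces an intermediate \emph{continuous-in-time} problem \eqref{eq:neuralreplace} in which $\mathcal{F}$ is replaced by $\mathcal{N}$, bounds $\sup_t\|u(t)-\widetilde{u}(t)\|$ by a continuous Gronwall argument (Lemma \ref{lem:err-sol}, yielding the $T^{1/2}\exp(CLT)$ operator-error term), and then invokes a standard Galerkin/semi-implicit-Euler estimate for the discretization of that intermediate problem (Lemma \ref{lem:err-dis}, stated without proof with a reference to Thom\'ee, yielding $C(\tau+P^{-k/d})$); the theorem is then a triangle inequality. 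You instead perform a single fully discrete error analysis on the eigencoefficients: decoupled modal ODEs, an error recursion for $\bm{e}^n=\bm{\beta}^n-\widetilde{\bm{\beta}}^n$, a residual split into Lipschitz, operator-approximation, and consistency parts, and a discrete energy/Gronwall estimate (your quadratic form with Young's inequality does correctly recover the $T^{1/2}$ factor). Each approach has advantages: the paper's decomposition lets it cite classical parabolic Galerkin theory, but at the cost of assuming regularity ($\widetilde{u}\in L^2(0,T;H^k)$) of the solution to the neural-operator-driven PDE, which is not obviously inherited from $\mathcal{N}$; your direct argument needs regularity only of the true solution $u$, is self-contained, and — to your credit — explicitly isolates and repairs (via bootstrapping) the issue of whether the computed iterates remain in the compact set $K$ where Theorem \ref{thm:approxop} controls $\|\mathcal{F}-\mathcal{N}\|$, a point the paper leaves implicit in both lemmas.
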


\begin{remark}
	In practice, the available data may have limited regularity, such as belonging to $L^\infty(\Omega)$. By the Sobolev embedding theorem  \cite{adams2003sobolev}, it holds that $L^\infty(\Omega) \hookrightarrow H^k(\Omega)$ for $k<{d\over 2}$, 
	thus the term $P^{-{k\over d}}$ in \eqref{eq:approxop} and \eqref{eq:err-sol} can be replaced by
	
	\begin{equation}
		P^{-{k\over d}} \rightarrow P^{-{1\over 2}+\xi}, 
	\end{equation}
	for arbitrary $\xi >0$. For a uniform time step $\tau = \frac{T}{N}$, with sufficiently large $m$, $N$ and $P$, the error can be made arbitrarily small.
\end{remark}

We explore the convergence behavior of the Laplacian Eigenfunction-Based Neural Operator by analyzing the neural tangent kernel (NTK) \cite{jacot2018neural}, which provides estimates of the evolution of the parameter $\theta(s)$ with respect to $s$, as detailed in the  \ref{appdendix:opt}.

{We present convergence results for two cases: (1) optimization using only the residual loss $L^R$ over multiple time steps, and (2) optimization using both the residual loss $L^R$ and the data loss $L^D$ in a one-step setting, due to the nested structure of neural networks inherent in multi-step time evolution.}

\begin{theorem}[\bf convergence of training error]
\label{thm:opt}
Let $\mathcal{G}(\cdot;\theta)$ be the shallow neural network trained using the absolute mean squared loss $L$ in Algorithm \ref{alg:no}, with $M$ data samples and $N$ evolution steps. In the infinite-width limit $m \rightarrow \infty$, there exist constants $s^*>0$ and $\gamma > 0$ such that the loss satisfies 
$$L(\theta(s)) \le L(\theta(0))\exp(-s\lambda),$$
for all $s \in [0, s^*]$ with high probability. The convergence rate $\lambda$ depends on the training setup and is characterized by the following cases:
\begin{itemize}
\item[{(a)}] {General $N$-step training using only residual loss $L=L^R$:}
$$\lambda = \frac{4\gamma}{NM}, $$
\item[{(b)}] {One-step evolution ($N=1$) with combined loss $L = L^R + L^D$ as in Algorithm~\ref{alg:no}:}  
$${\lambda}={4\gamma\over M}\bigg(1+2 \frac{\tau(1+\tau\lambda_P)^{-1}}{(1+\tau^2(1+\tau\lambda_1)^{-2})}\bigg).$$
\end{itemize}
\end{theorem}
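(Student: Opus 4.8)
The plan is to study the gradient flow $\dot\theta(s)=-\nabla_\theta L(\theta(s))$ through the neural tangent kernel, adapting the over-parameterized argument of \cite{jacot2018neural} to the spectral/time-stepping structure of \eqref{eq:discrete}. I would first observe that $L$ is a sum of squared prediction residuals, so along the flow every residual obeys a closed linear ODE: if $e$ collects the residuals and $J=\nabla_\theta\mathcal{G}$ their Jacobian, then $\dot e=-c\,JJ^\top e=-c\,\Theta(s)\,e$, with $\Theta(s)=JJ^\top$ the empirical kernel and $c$ the loss normalization. In the infinite-width limit $m\to\infty$ the kernel concentrates at initialization to a deterministic $\Theta^\ast$ and, for pairwise-distinct inputs, is positive definite with $\gamma:=\lambda_{\min}(\Theta^\ast)>0$ with high probability; the same concentration shows $\Theta(s)$ stays within $o(1)$ of $\Theta^\ast$ over a window $s\in[0,s^\ast]$. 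On that window the chain rule gives a differential inequality $\dot L\le-\lambda L$, which Gronwall integrates to the asserted $L(\theta(s))\le L(\theta(0))e^{-\lambda s}$. Only the effective residual and its weighting differ between the two cases.

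For case (a) I would note that $L^R$ depends on $\theta$ solely through the $NM$ evaluations $\mathcal{G}(\bm{\beta}_i^{n-1};\theta)$, each fitted to the fixed target $\mathcal{R}_i^n$; crucially there is no nesting through the stepper, since the inputs are the data coefficients $\bm{\beta}_i^{n-1}$ rather than the evolved $\widetilde{\bm{\beta}}$. Hence this is plain least squares over $NM$ input--target pairs, the residual ODE is $\dot e=-\tfrac{2}{NM}\Theta e$, and the $\tfrac{1}{NM}$ normalization together with the two factors of two produced by differentiating the square yields exactly $\lambda=\tfrac{4\gamma}{NM}$.

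For case (b) the two losses must be decoupled from the implicit stepper. With $N=1$ and $\widetilde{\bm{\beta}}^0=\bm{\beta}^0$, I would solve \eqref{eq:discrete} in closed form as $\widetilde{\bm{\beta}}^1=Q(\bm{\beta}^0+\tau\,\mathcal{G}(\bm{\beta}^0;\theta))$ with $Q:=(I+\tau D\Lambda_P)^{-1}$ diagonal; applying the identical solve to the data gives $\bm{\beta}^1=Q(\bm{\beta}^0+\tau\mathcal{R}^1)$, so the data error collapses to $\widetilde{\bm{\beta}}^1-\bm{\beta}^1=\tau Q e$ with $e:=\mathcal{G}(\bm{\beta}^0;\theta)-\mathcal{R}^1$ the same residual driving $L^R$. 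Thus $L^R$ and $L^D$ are quadratic forms in one vector $e$, weighted by $I$ and $\tau^2Q^2$ respectively, and I would track their coupled decay along the shared kernel. Since $Q$ is diagonal with spectrum ranging from $(1+\tau\lambda_P)^{-1}$ up to $(1+\tau\lambda_1)^{-1}$, these two extreme factors are precisely what survive into the effective rate; carrying the coupled residual--data bookkeeping through the differential inequality is what produces the stated $\lambda=\tfrac{4\gamma}{M}\bigl(1+2\tau(1+\tau\lambda_P)^{-1}/(1+\tau^2(1+\tau\lambda_1)^{-2})\bigr)$.

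The hard part will be twofold. The genuinely analytic difficulty is the kernel-limit control: establishing that at large but finite width $\Theta(s)$ remains uniformly close to $\Theta^\ast$ throughout $[0,s^\ast]$, so that $\gamma$ may be frozen---this is exactly what forces the restriction $s\le s^\ast$ and the high-probability qualifier, and it rests on Lipschitz and boundedness estimates for $\mathcal{G}$ in a neighborhood of initialization. The second, more bookkeeping-intensive obstacle is the sharp constant in case (b): because $L^R$ and $L^D$ are driven by the same $e$ but weighted differently, one cannot simply invoke a crude $\lambda_{\min}$ bound and still recover the $\tau$-dependent factor; instead one must follow how the diffusion-solve operator $Q$ couples the decay of the two loss terms through the shared kernel, and it is the sharp form of this coupling---reflecting the slow mode associated with $\lambda_P$ against the fast mode associated with $\lambda_1$---that demands the most care.
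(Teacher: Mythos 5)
Your proposal follows essentially the same route as the paper's proof: an NTK gradient-flow argument with the block-diagonal kernel $I_P\otimes\Gamma$ and $\gamma=\lambda_{\min}$, the observation that case (a) is plain least squares over the $NM$ data-coefficient inputs, and in case (b) the key identity $\bm{\beta}^1=(I+\tau\Lambda_P)^{-1}(\bm{\beta}^0+\tau\mathcal{R}^1)$ that collapses both losses into quadratic forms in the single residual $e=\mathcal{G}(\bm{\beta}^0;\theta)-\mathcal{R}^1$ weighted by $I$ and $\tau^2(I+\tau\Lambda_P)^{-2}$, followed by spectral bounds at $\lambda_1$ and $\lambda_P$ and Gronwall. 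The only part you leave implicit, the final cross-term bookkeeping relating $|\partial_{\mathcal{G}}L_j|^2$ to $L_j$ in case (b), is exactly where the paper obtains the stated constant, and your setup leads directly into it.
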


\begin{remark}
	The smallest eigenvalue of the kernel, denoted by $\lambda$ plays a key role in determining the convergence rate. When training with only the residual loss $L^R$ under a one-step evolution ($N = 1$), the resulting convergence factor is smaller than that obtained using the combined loss $L = L^R + L^D$. This suggests that incorporating both residual and data losses can lead to accelerated convergence.
\end{remark}

\section{Numerical Examples}\label{sec:experiment}
In this section, we evaluate the effectiveness of the proposed approach by examining several PDE problems in both 1D and 2D cases. Additionally, we compare the performance of the proposed Laplacian Eigenfunction-Based Neural Operator against several existing neural operators, including FNO and DeepONet. For detailed information on the datasets and the training process, please refer to  \ref{appdendix:test}.

The training accuracy is evaluated using three different metrics for $M$ data samples, as defined below:
\begin{itemize}
	\item Relative $L^2$  error of the solution (denoted by $E_{L^2}$ ): 
	$${1\over MN}\sum_{m=1}^M\sum_{n=1}^{N} {\|{u}^n_m-u_m(t_n)\|_{L^2}\over \|u_m(t_n)\|_{L^2}}$$
	\item Relative PDE residual error (denoted by $E_{Res}$):
	$${1\over MN}\sum_{m=1}^M\sum_{n=1}^{N} {\|\mathcal{G}^\theta(\bm{{\beta}}^{n-1}(\widetilde{u}^n_m))-\mathcal{R}^n(u_m)\|_{l^2}\over \|\mathcal{R}^n(u_m)\|_{l^2}}$$
	\item Relative nonlinear term error (denoted by $E_{Nonlinear}$):
	$${1\over MN}\sum_{m=1}^M\sum_{n=1}^{N} {\|\mathcal{N}(\widetilde{u}^n_m)-\mathcal{F}(u_m(t_n))\|_{L^2}\over \|\mathcal{F}(u_m(t_n))\|_{L^2}}$$
\end{itemize}

These metrics provide a comprehensive assessment of the proposed learning approach’s performance in approximating the solution, satisfying the PDE constraints, and accurately capturing the nonlinear terms.
\subsection{Example 1: KPP-Fisher  Equation}\label{subsec:kpp}
The KPP-Fisher equation is a classic benchmark in reaction--diffusion models, commonly used to describe population growth and wave propagation \cite{holmes1994partial,fisher1937wave}. Its mathematical form in 1D case is given by\begin{equation}
	\left\{	\begin{aligned}
		{\partial u\over \partial t}-{\partial^2 u\over \partial x^2}  &= u(1 -u),  &x\in (0,1),t\in (0,T]\\
		u(x,0) &= u_0(x), &x\in (0,1)
	\end{aligned}\right.
\end{equation}
with homogeneous Dirichlet boundary conditions where $u_0 \in L^2((0,1);\mathbb{R})$ is the initial condition. 

\paragraph{Comparison with  Various Methods}

The performance of our proposed method, along with comparisons to FNO and DeepONet, is summarized in Table \ref{tab:kpp-compare}. We use $P = 64$ eigenfunctions to solve the discrete problem \eqref{eq:discrete}. The DeepONet features a branch network with three hidden layers and a trunk network with two hidden layers, each with a width of 1000 For the FNO structure used for $\mathcal{N}$, we employ a 1D FNO with four integral layer, 64 Fourier modes, and a width of 64. Additionally, the loss curves for different methods are shown in Figure \ref{fig:kpp-loss}. It is noteworthy that the neural operators used in this comparison have parameters of a similar scale.

\begin{table}[!htbp]
	\centering
	\caption{Comparison of errors between \textbf{multilayer} LE-NO, DeepONet, and FNO on the 1D KPP–Fisher equation.}
	\begin{tabular}{c|ccccc}
		\hline
		Method &  $E_{L^2}$ &  $E_{Res}$  &$E_{Nonlinear}$ & Time/Epoch & \#Parameter\\
		\hline
		LE-NO&5.54e-05&1.16e-03&1.17e-03 & 0.03s & 1.13M\\
		DeepONet&1.22e-04& 6.20e-03 &   3.24e-03 & 0.50s &4.16M\\
		FNO&9.31e-05& 1.25e-03 &   1.26e-03 & 0.65s & 2.12M\\
		\hline
	\end{tabular}
	\label{tab:kpp-compare}
\end{table}

Our method demonstrates the best performance across all error metrics compared to FNO and DeepONet in Table \ref{tab:kpp-compare}.  Furthermore, our approach exhibits remarkable training efficiency, completing one epoch in only 0.03 seconds. This advantage is attributed to its intrinsic and straightforward architecture, which efficiently diagonalizes the linear system for time evolution.

\begin{figure}[!htbp]
	\centering
	\includegraphics[width=.5\textwidth]{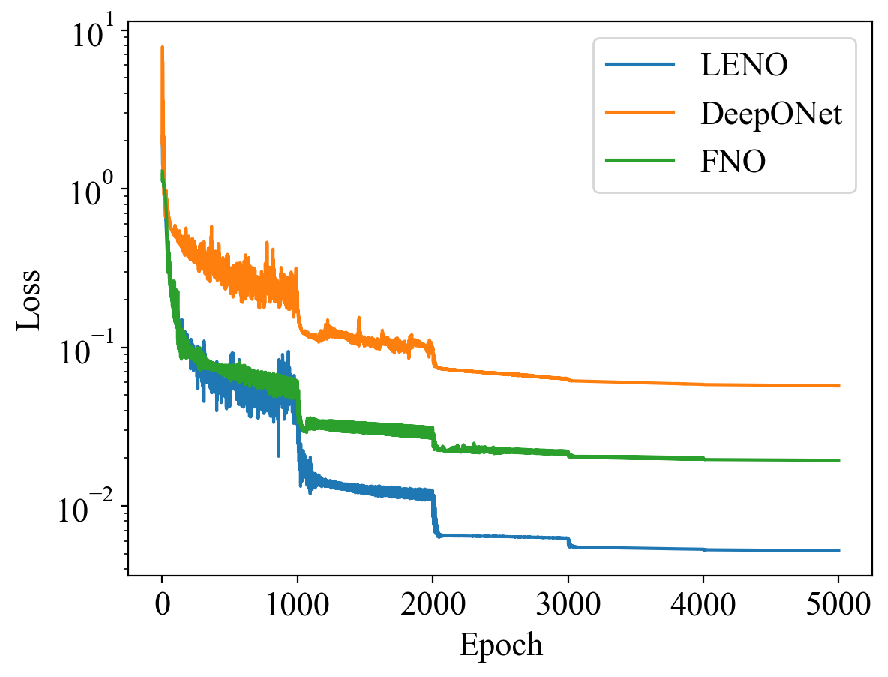}
	\caption{Loss curves of three different neural operators on the KPP-Fisher equation.} 
	\label{fig:kpp-loss}
\end{figure}

\paragraph{Comparison with Different Criteria}
We also evaluate the performance of the LE-NO trained with different loss criteria. Table \ref{tab:kpp-criterion} summarizes that training with both  $L^2$  loss ($L^D$) and residual loss ($L^R $) achieves the best performance across all metrics. In contrast, training solely with  $L^2$  loss results in poor learning of the nonlinear term, while incorporating residual loss  improves  $L^2$  error as well. These findings suggest that both  $L^2$  and residual errors are necessary in the loss function to ensure balanced and robust performance across all metrics.

\jdd{Furthermore, to validate Theorem~\ref{thm:opt}, we compare the training loss curves obtained using both the $L^2$ loss ($L^D$) and the residual loss ($L^R$) against those obtained using the residual loss alone. This experiment, conducted with a network of large width ($10000$), is illustrated in Figure~\ref{fig:losscompare}. As shown in the figure, the loss decreases slightly faster when both terms are included, which is consistent with the theoretical result that combining the two losses leads to a faster convergence rate.}

\begin{table}[!htbp]
	\centering
	\caption{Errors of different training criteria on the KPP-Fisher equation.}
	\begin{tabular}{c|ccc}
		\hline
		Criterion  &  $E_{L^2}$ &  $E_{Res}$  &$E_{Nonlinear}$  \\
		\hline
		\multirow{1}{*}{$L^D+L^R$}&5.79e-04&4.56e-03&4.57e-03\\
		
		\multirow{1}{*}{$L^R$} &1.03e-03&4.98e-03&4.98e-03\\
		
		\multirow{1}{*}{$L^D$}&4.96e-04&4.70e-01&4.70e-01\\
		\hline
	\end{tabular}
	
	\label{tab:kpp-criterion}
\end{table} 

\begin{figure}[!htbp]
    \centering
    \includegraphics[width=0.5\linewidth]{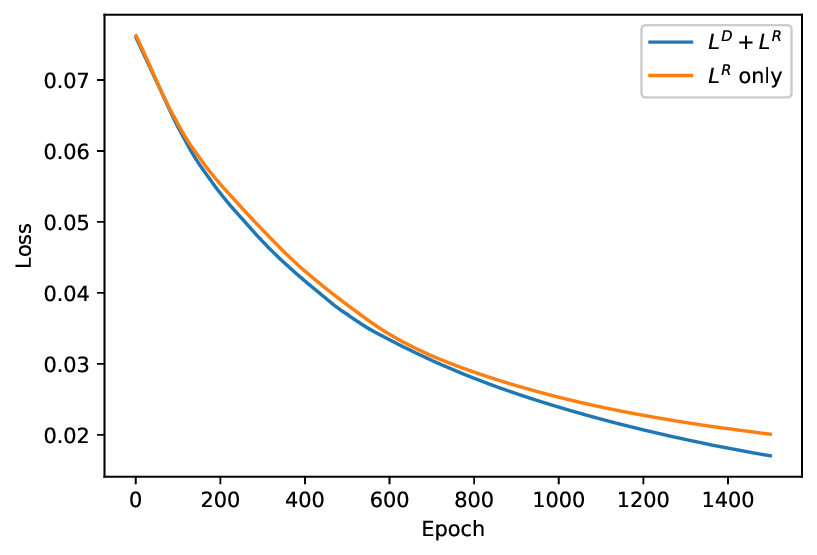}
    \caption{\jdd{Training loss curves under different criteria: combined loss $L^D + L^R$ versus residual loss $L^R$ only.}}
    \label{fig:losscompare}
\end{figure}

\paragraph{Performance versus Various $P$} \jd{Note that the DeepONet is structured such that, for given sensor points $x_j$, $j=1,2,\dots,m$, it employs $P$ branch networks $b_k$ and $P$ trunk networks $t_k$. Its output is defined as
$$
\text{DeepONet}(u)(y) = \sum_{k=1}^P b_k(u(x_1),u(x_2),\cdots,u(x_m))t_k(y)
$$}
\jd{We evaluate the performance of our method using different numbers of eigenfunctions $P$, while setting the number of branch and trunk networks equal to $P$. The results, obtained using networks with a slightly smaller number of neurons, are summarized in Table~\ref{tab:kpp-variousp}.} DeepONet consistently performs worse across all metrics compared to our method. The results indicate that increasing  $P$  generally reduces the error for both methods. However, when  $P$  becomes very large, the error reduction slows, as it becomes influenced by other factors, such as optimization and discretization errors. To achieve a balance between computational efficiency and accuracy, we recommend using a moderately large  $P$. 

\begin{table}[!htbp]
	\centering
	\caption{Errors versus various $P$.}
	\begin{tabular}{c| c c|cc|cc}
		\hline
		\multirow{2}{*}{$P$} & \multicolumn{2}{c}{$E_{L^2}$} & \multicolumn{2}{c}{$E_{Res}$} & \multicolumn{2}{c}{$E_{Nonlinear}$ }\\
		\cline{2-7}
		& Our & DeepONet& Our & DeepONet& Our & DeepONet\\
		\hline
		2&2.25e-02 &1.18e-02&2.09e-02 &6.55e-02&2.02e-01&2.66e-01\\
		4&1.81e-03&8.17e-03&4.03e-03&7.33e-02&4.79e-02&1.92e-01\\
		8& 5.62e-04&7.47e-03&3.89e-03&7.28e-02& 9.17e-03&1.86e-01\\
		16&5.61e-04&6.84e-03&4.20e-03&5.76e-02&4.62e-03&1.83e-01\\
		32&5.81e-04&6.18e-03&4.42e-03&5.08e-02& 4.44e-03&1.83e-01\\
		64& 5.79e-04&6.23e-03&4.56e-03&5.08e-02&4.57e-03&1.83e-01\\
		\hline
	\end{tabular}
	
	\label{tab:kpp-variousp}
\end{table}

\jdall{We also investigate the convergence behavior of the $L^2$ solution error $E_{L^2}$ and the $L^2$ nonlinear term error $E_R$ with respect to the number of eigenfunctions $P$, as illustrated in Figure~\ref{fig:errvaryP}. This study serves both to validate the theoretical results established in Theorems~\ref{thm:approxop} and~\ref{thm:err-sol} and to provide an ablation analysis on the role of $P$. Since the learned neural operator does not provide the exact solution, optimization error is inevitably present, and moreover, it is generally difficult to quantify the approximation error of neural networks. Therefore, we focus our validation on the dependence with respect to $P$. For small values of $P$, both the optimization error and the neural network approximation error remain negligible. In this case, both the solution and nonlinear term errors exhibit a convergence rate of order $2$, which is consistent with theory: the initial condition is sampled from $\mu=\mathcal{N}(0,49(-\Delta+7I)^{-2.5})$, which possesses $H^{2-\delta}$ regularity for any $\delta>0$. Consequently, the theoretical rate corresponds to $k=2-\delta$ in one spatial dimension, in agreement with the numerical results. For larger values of $P$, the optimization error dominates and the convergence rate saturates, indicating that excessively large choices of $P$ are unnecessary. Rather, an appropriate balance between discretization error and optimization error should be sought.
}

\begin{figure}[!htbp]
    \centering
    \includegraphics[width=0.5\linewidth]{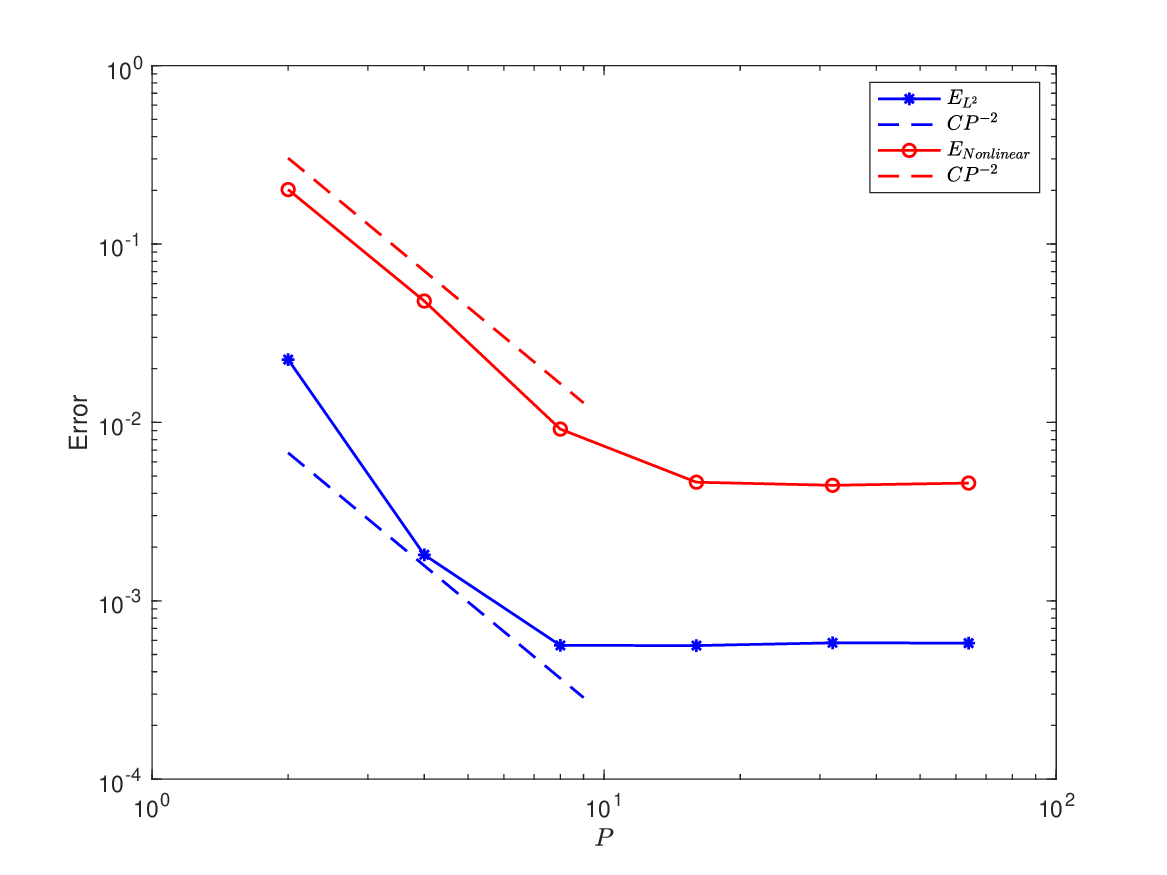}
    \caption{\jdd{$L^2$ solution error $E_{L^2}$ and $L^2$ nonlinear term error $E_R$ as functions of the number of eigenfunctions $P$.}}
    \label{fig:errvaryP}
\end{figure}

\jddd{\paragraph{Ablation Study on Neural Network Size}  
We also conduct an ablation study on the neural network architecture by varying the width $w$ and depth $l$ of the network to evaluate their effect on the $L^2$ errors of both the solution and the nonlinear term, as illustrated in Figure~\ref{fig:errvaryNN}. The results indicate that deeper and wider networks generally lead to lower errors. However, for sufficiently deep networks, increasing the width further yields diminishing improvements, with the error reduction rate becoming slower. These observations suggest that a depth of $2$–$3$ and a width of approximately $1000$ provide a good balance between accuracy and efficiency. }

\begin{figure}[!htbp]
    \centering
    \includegraphics[width=0.45\linewidth]{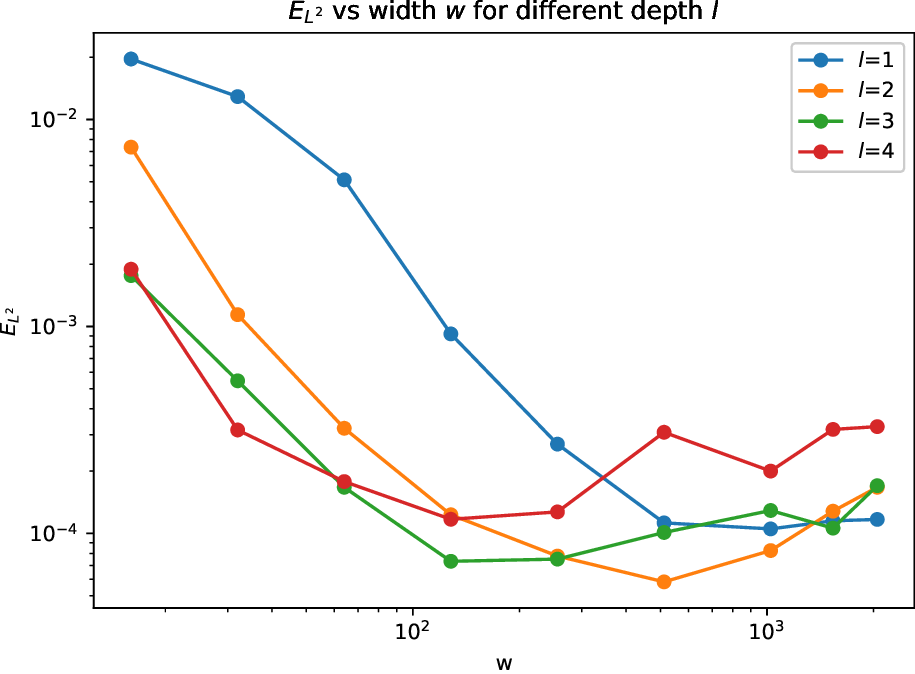}
    \includegraphics[width=0.45\linewidth]{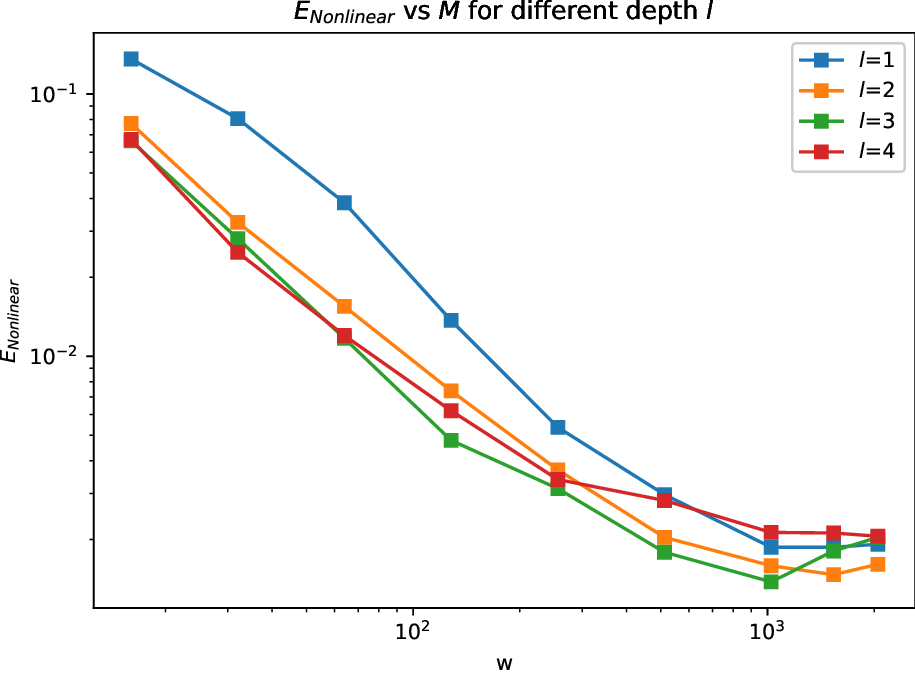}
    \caption{\jddd{$L^2$ errors of the solution and nonlinear term for different network widths and depths.}}
    \label{fig:errvaryNN}
\end{figure}

\paragraph{Dynamics Prediction Results}
We also use our method to predict the solution after training. As shown in Figure \ref{fig:kpp-predict}, we present the  $L^2$-norm of the solution with respect to time $t$ . The training process covers the first 10 time steps, represented by the shaded area, while the prediction extends beyond this region. The results demonstrate that our method provides accurate predictions, maintaining consistency with the trained region and effectively predicting to the unknown time steps.

\begin{figure}[!htbp]
	\centering
	\includegraphics[width=.45\textwidth]{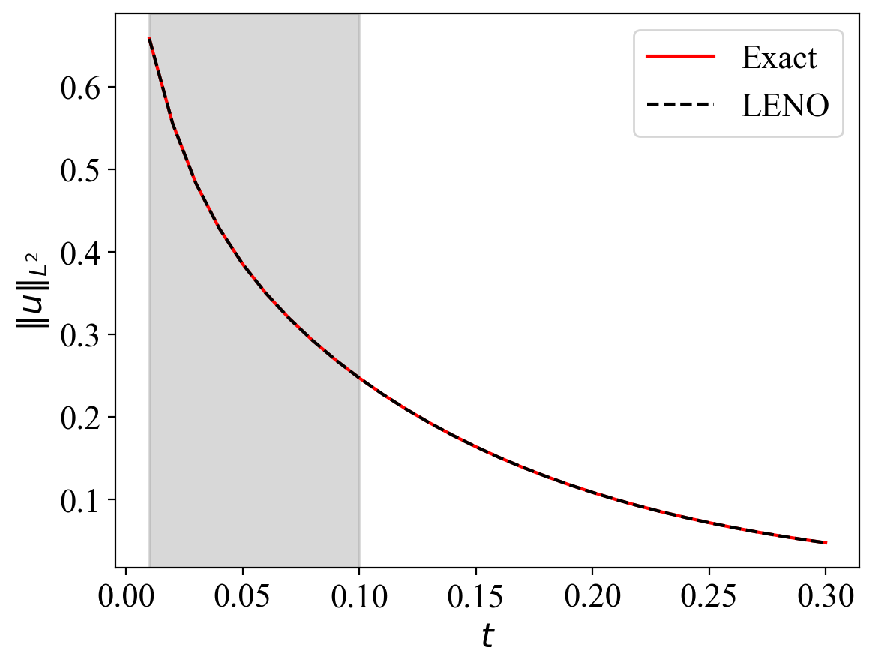}
	\includegraphics[width=.45\textwidth]{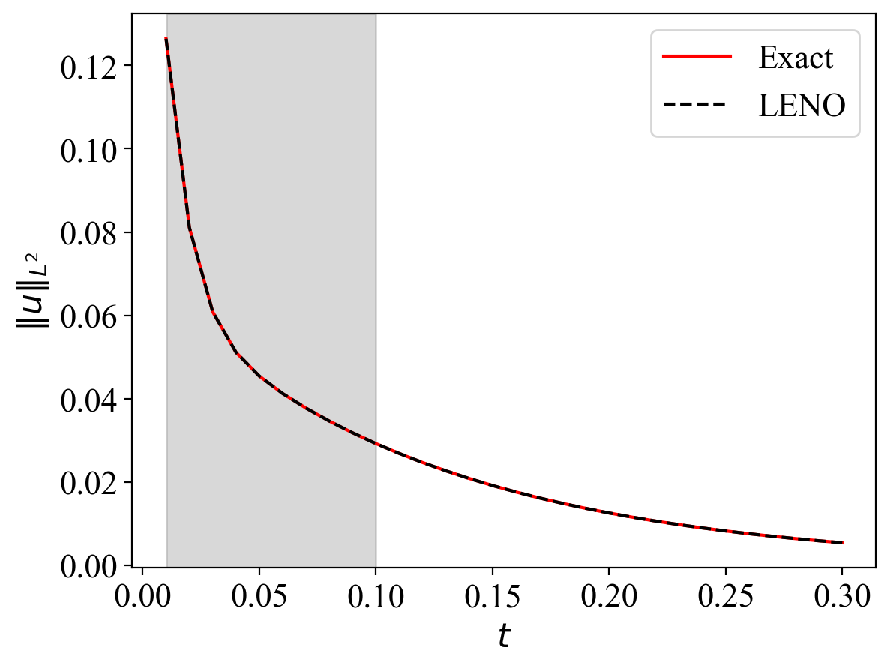}
	\caption{Predicted results ($L^2$-norm) for two initial conditions on the KPP-Fisher equation. The shaded area represents the time steps used for training.}
	\label{fig:kpp-predict}
\end{figure}

\paragraph{Inhomogeneous Boundary Condition Case}
We also test our method with an inhomogeneous boundary condition:
\begin{equation}
	\begin{aligned}
		u(0,t)&=u(1,t) = 1, \quad &t\in (0,T),\\
	\end{aligned}
\end{equation}
Our method achieves a performance similar to that observed in the homogeneous setting, namely 
$E_{L^2}=5.56e-04$, $E_{Res}=5.27e-03$, and $E_{Nonlinear}=5.29e-03$. The results demonstrate that our method is effective and adaptable to problems with inhomogeneous boundary conditions. Additionally, Figure \ref{fig:kppinh-predict} illustrates that our method delivers accurate predictions of the solution, as evidenced by the  $L^2$-norm evolution over time.  The method generalizes effectively, providing accurate predictions beyond the initial 10 steps of training.

\begin{figure}[!htbp]
	\centering
	\includegraphics[width=.45\textwidth]{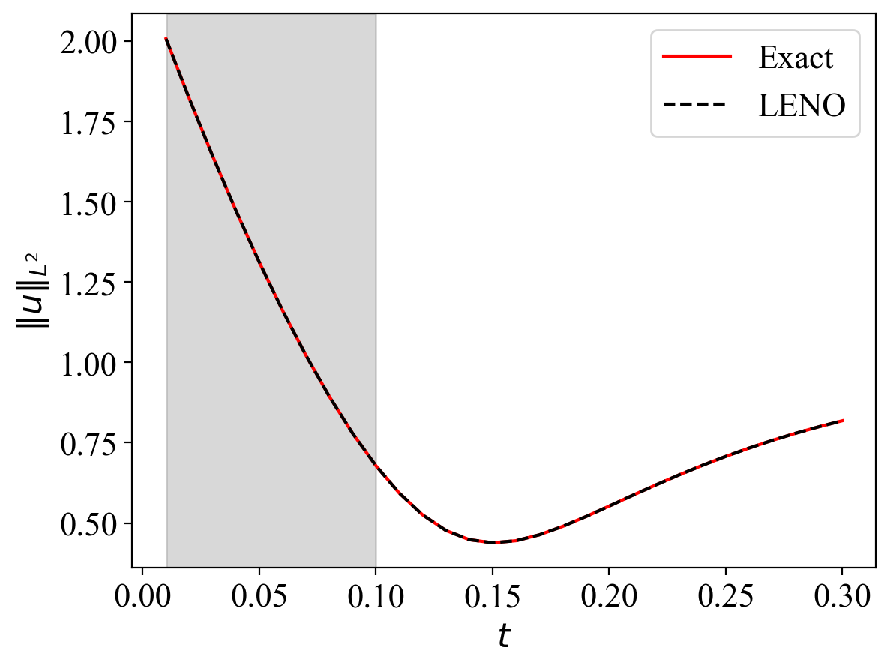}
	\includegraphics[width=.45\textwidth]{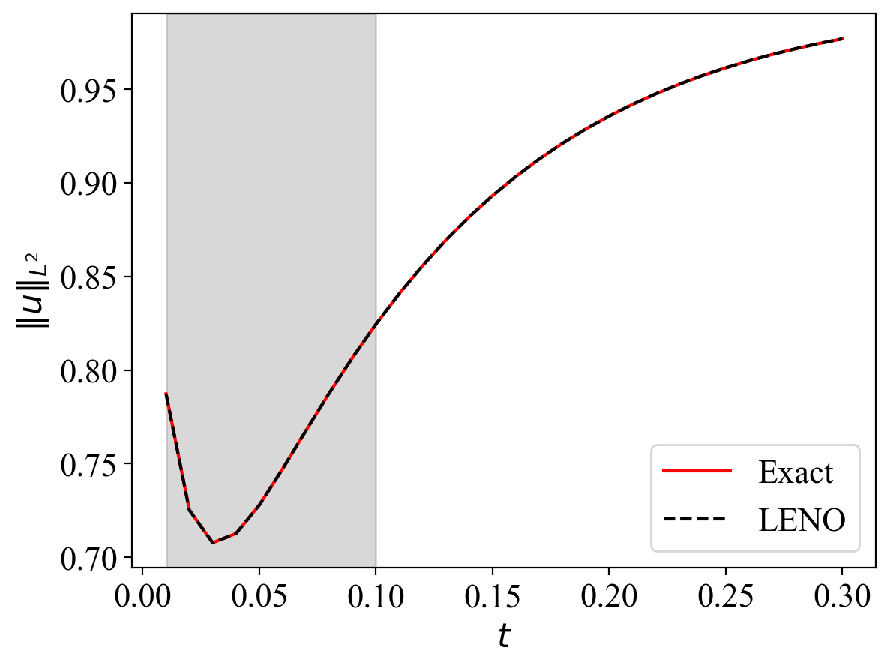}
	\caption{Predicted results ($L^2$-norm) for two initial conditions on the KPP-Fisher equation with inhomogeneous boundary condition. The shaded area represents the time steps used for training.}
	\label{fig:kppinh-predict}
\end{figure}

\jdall{
\subsubsection{Extension to Variable and Anisotropic Diffusion Coefficients}  

In this section, we validate the capability of our method in handling variable and matrix-valued diffusion coefficients, including anisotropic cases. We consider two representative examples based on the KPP–Fisher equation:  

\begin{itemize}
    \item \textbf{Case 1.} 1D KPP–Fisher equation with a variable diffusion coefficient:  
    \begin{equation}
        \frac{\partial u}{\partial t} - \frac{\partial}{\partial x}\!\left((2+\cos(\pi x))\frac{\partial u}{\partial x}\right) = u(1-u), 
        \quad x \in (0,1), \; t \in (0,T).
    \end{equation}

    \item \textbf{Case 2.} 2D KPP–Fisher equation with an anisotropic diffusion coefficient:  
    \begin{equation}
        \frac{\partial u}{\partial t} - \nabla \cdot (D \nabla u) = u(1-u), 
        \quad x \in (0,1)^2, \; t \in (0,T),
    \end{equation}
    where 
    \begin{equation}
        D = 
        \begin{pmatrix}
            1 & 0 \\
            0 & 0.001
        \end{pmatrix}.
    \end{equation}
\end{itemize}

In both cases, we impose homogeneous Dirichlet boundary conditions. The numerical errors are summarized in Table~\ref{tab:errvaryD}. The method achieves low $L^2$ errors for both the solution and the nonlinear term, consistent with the constant-coefficient setting. This indicates that the eigenfunction basis adapted to the variable diffusion coefficient continues to provide an effective representation. Furthermore, Figures~\ref{fig:varyDcase1} and~\ref{fig:varyDcase2} present the prediction results for the two cases, demonstrating that the method yields accurate predictions even for long time steps. In particular, we show the ground-truth solution, the prediction of our method at $t=0.3$ (after training up to $t=0.1$), and the corresponding absolute error in Figure~\ref{fig:kpperrcompare}. The results indicate that even in the presence of highly anisotropic diffusion, the eigenfunction basis still provides accurate approximations and strong learning capability, leading to reliable predictive performance.  

}

\begin{table}[!htbp]
\centering
\jdall{
    \caption{\jdall{Errors of the KPP-Fisher equations with different diffusion coefficients.} }
    \label{tab:errvaryD}
	\begin{tabular}{c|ccc}
		\hline
		Diffusion &  $E_{L^2}$ &  $E_{Res}$  &$E_{Nonlinear}$ \\
		\hline
		Case 1 &  1.27e-04 & 2.79e-03 & 2.79e-03\\
		Case 2 & 3.62e-03 & 3.70e-03 & 1.72e-02\\
		\hline
	\end{tabular}}
\end{table}

\begin{figure}[!htbp]
    \centering
    \includegraphics[width=0.45\linewidth]{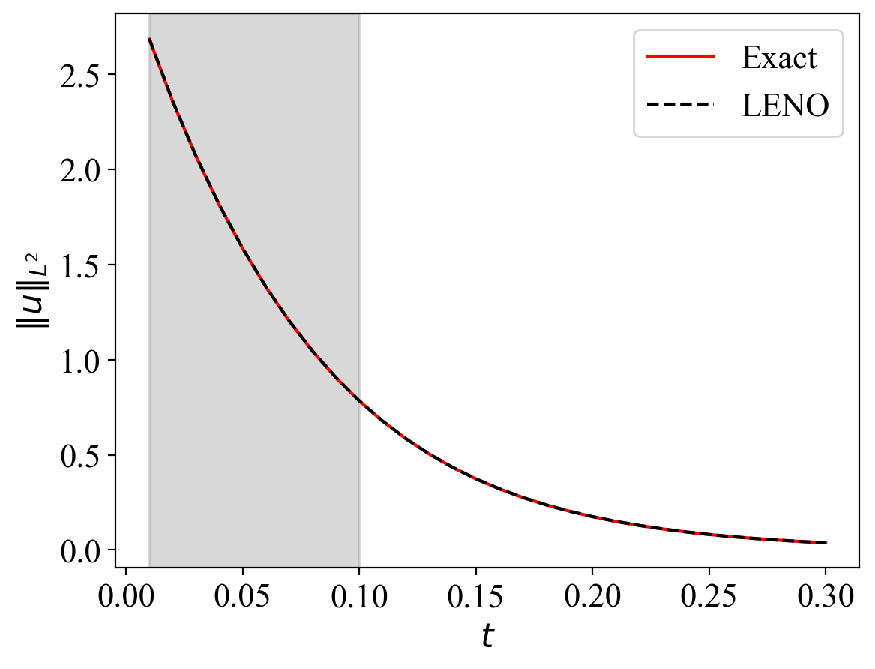}
    \includegraphics[width=0.45\linewidth]{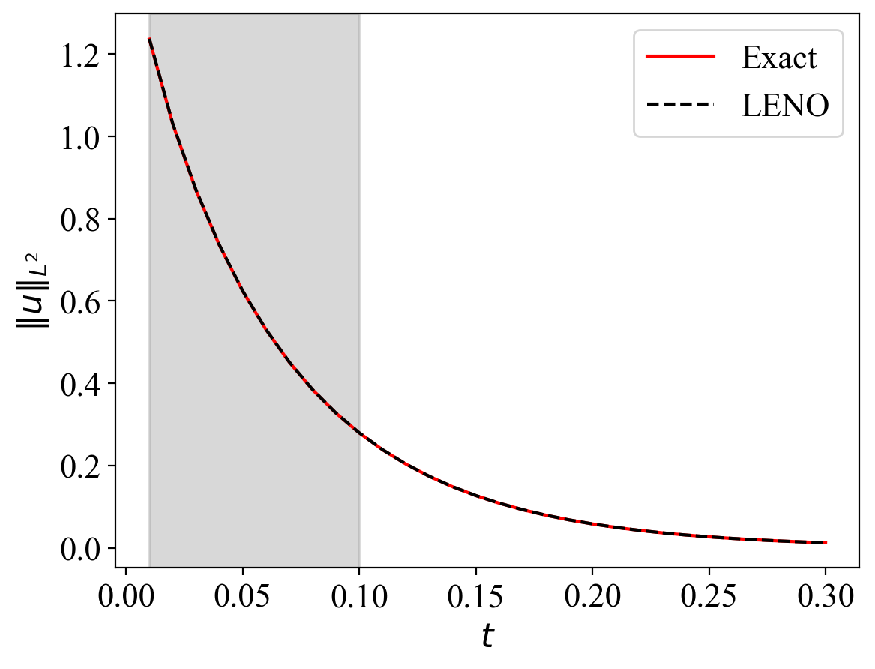}

    \caption{\jdall{Case 1. Predicted results ($L^2$-norm) for two initial conditions on the KPP-Fisher equation with variable diffusion coefficient. The shaded area represents the time steps used for training.}}
    \label{fig:varyDcase1}
\end{figure}

\begin{figure}[!htbp]
    \centering
    \includegraphics[width=0.45\linewidth]{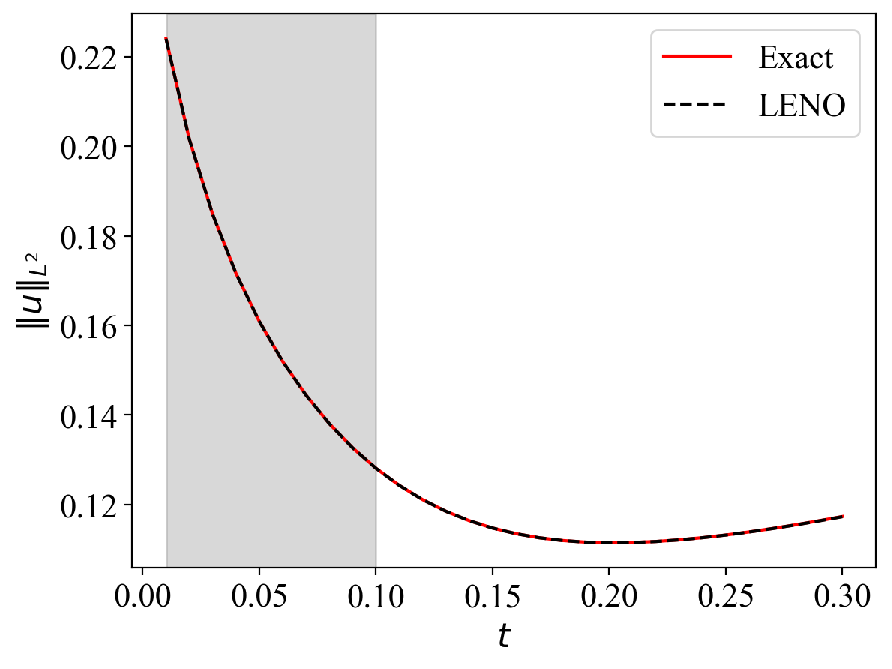}
    \includegraphics[width=0.45\linewidth]{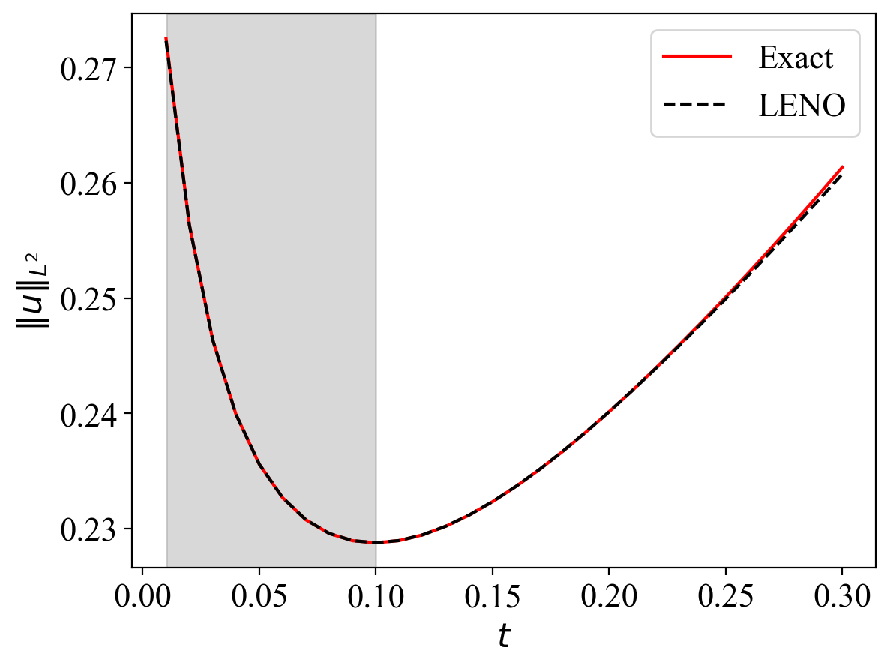}

    \caption{\jdall{Case 2. Predicted results ($L^2$-norm) for two initial conditions on the KPP-Fisher equation with anisotropic diffusion coefficient. The shaded area represents the time steps used for training.}}
    \label{fig:varyDcase2}
\end{figure}

\begin{figure}[!htbp]
    \centering
    \includegraphics[width=0.9\linewidth]{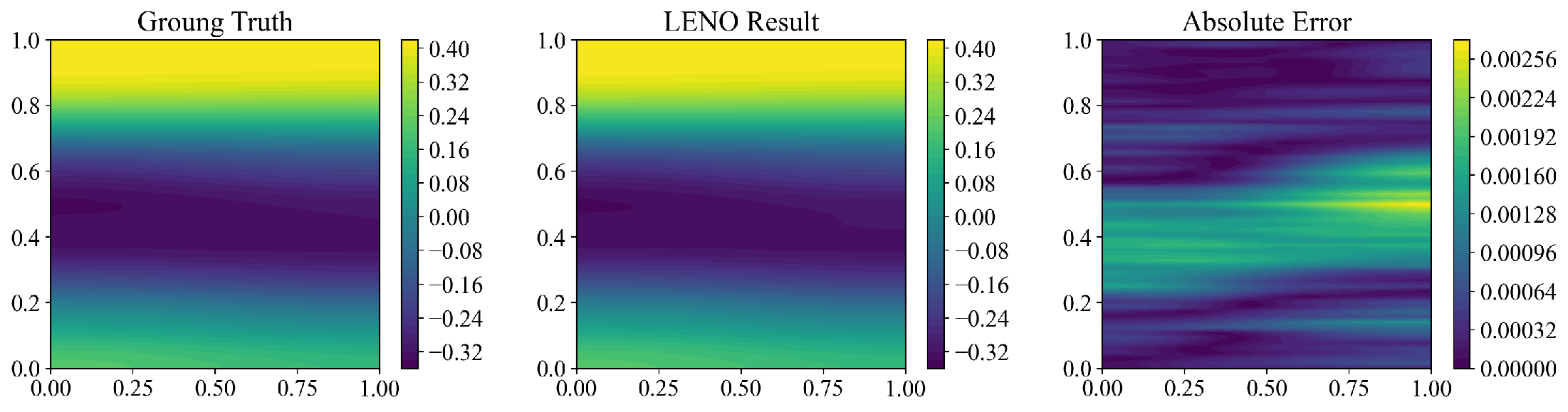}
    \caption{\jdall{Case 2. Comparison of the ground truth (left), LENO prediction (middle), and absolute error (right) for the KPP–Fisher equation with an anisotropic diffusion coefficient.}}
    \label{fig:kpperrcompare}
\end{figure}

\subsection{Example 2: Allen-Cahn Equation}\label{subsec:ac}
We consider the 1D Allen-Cahn equation  which describes the process of phase separation in multi-component alloy systems with a non-flux boundary condition
\begin{equation}
	\left\{	\begin{aligned}
		{\partial u\over \partial t} - \varepsilon^2 {\partial^2 u\over\partial x^2} &= W'(u),  &x\in (0,2\pi),t\in (0,T]\\
		u(x,0) &= u_0(x),&t\in (0,T)\\
	\end{aligned}\right.
\end{equation}
where $W(u)={1\over 4}(u^2-1)^2$ and we set a small diffusion coefficient $\varepsilon=0.1$ in this test.

The results of our experiments are presented in Figure \ref{fig:ac}. In this small diffusion case, our proposed method achieves a similar level of accuracy as in the KPP-Fisher equation. Additionally, we compare the predicted results of our method to those of FNO and DeepONet. We also compare the predicted results of our method to FNO and DeepONet. As shown in Figure \ref{fig:ac}, DeepONet fails to produce satisfactory results even during the time steps used for training. While FNO shows a good fit during the training steps, its predictions diverge significantly from the true solution within just a few time steps. In contrast, our method maintains reliable prediction accuracy over an extended period, showcasing its ability to generalize and provide stable predictions.

\begin{figure}[!htbp]
	\centering
	\includegraphics[width=.45\textwidth]{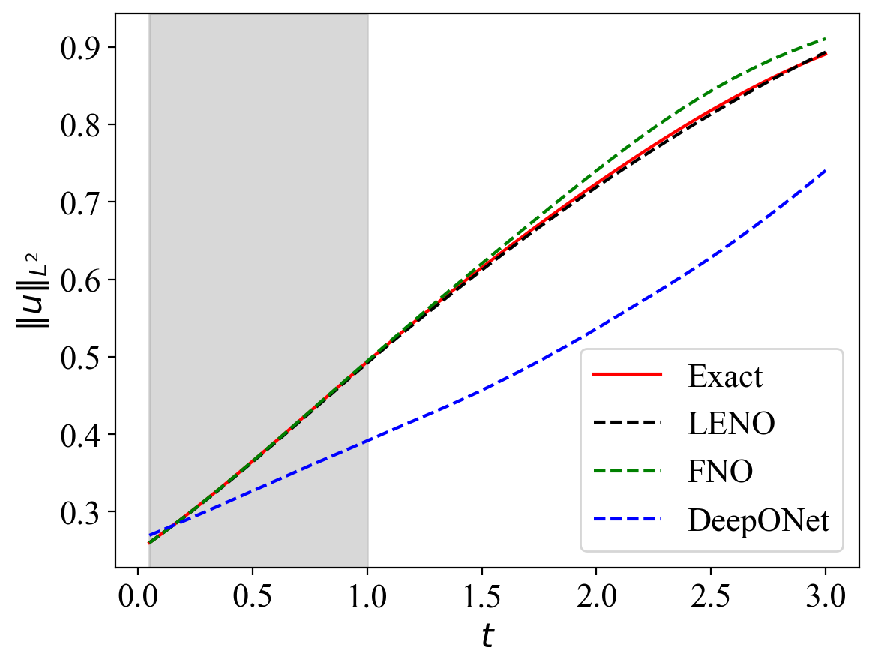}
	\includegraphics[width=.45\textwidth]{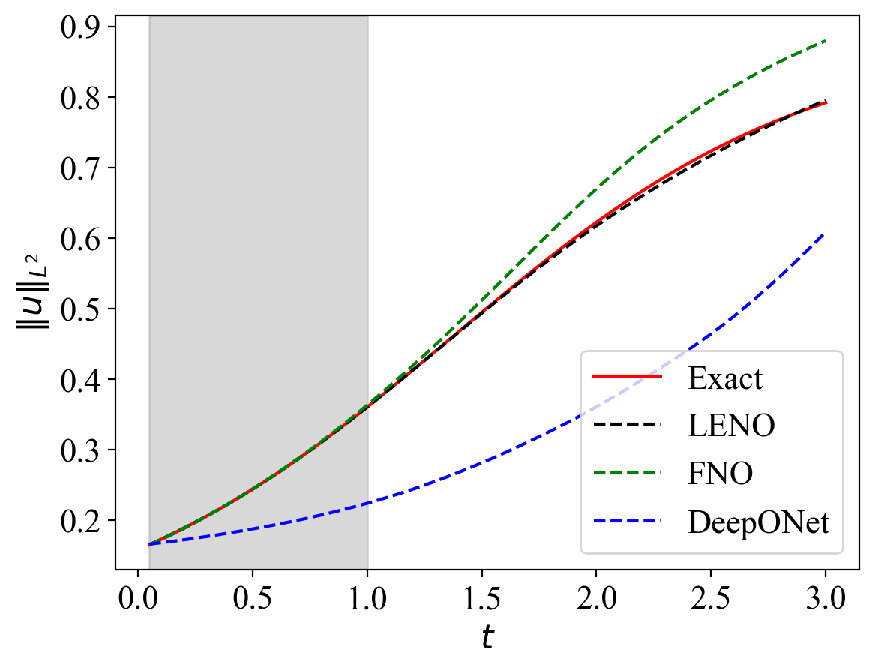}
	\caption{Predicted results ($L^2$-norm) for two initial conditions  on the Allen-Cahn equation. The shaded area represents the time steps used for training.}
	\label{fig:ac}
\end{figure}

\subsection{Example 3: Gray-Scott Equations}\label{subsec:gs}
We consider the following Gray-Scott equations which is a reaction--diffusion system in 1D ($d=1$) and 2D ($d=2$) with non-flux boundary conditions:
\begin{equation}
	\left\{	\begin{aligned}
		A_t-D_A\Delta A &= SA^2-(\mu+\rho)A, &x \in \Omega,t\in(0,T)\\
		S_t-D_S\Delta S &= -SA^2+\rho(1-S), &x\in \Omega,t\in(0,T)\\
		A(x,0) &= A_0(x)  &x \in \Omega\\
		S(x,0) &= S_0(x)  &x \in \Omega\\
	\end{aligned}\right.
\end{equation}
where we set $\Omega=(0,2\pi)^d$, $D_A=2.5\times 10^{-4}$, $D_S=5\times10^{-4}$, $\rho=0.04$ and $\mu=0.065$.

\paragraph{1D Results}

The quantitative results of error in Table \ref{tab:gs1d} confirm the effectiveness of our approach. Additionally, the predicted results in Figure \ref{fig:gs1d} demonstrate that our method still performs well for a system.

\begin{table}[!htbp]
	\centering
	\caption{Errors of the Gray-Scott equations. }
	\begin{tabular}{c|ccc}
		\hline
		Variable &  $E_{L^2}$ &  $E_{Res}$  &$E_{Nonlinear}$ \\
		\hline
		\multicolumn{4}{c}{1D Problem}\\
		\hline
		$A$ &  9.55e-04 & 7.20e-03 & 8.04e-03\\
		$S$ & 9.46e-04& 6.50e-03 & 6.34e-03\\
		\hline
		\multicolumn{4}{c}{2D Problem}\\
		\hline
		$A$ &  5.31e-03& 2.32e-02  & 2.71e-02\\
		$S$ & 4.13e-03 & 1.79e-02 & 1.66e-02\\
		\hline
	\end{tabular}
	\label{tab:gs1d}
\end{table}

\begin{figure}[!htbp]
	\centering
	\includegraphics[width=.75\textwidth]{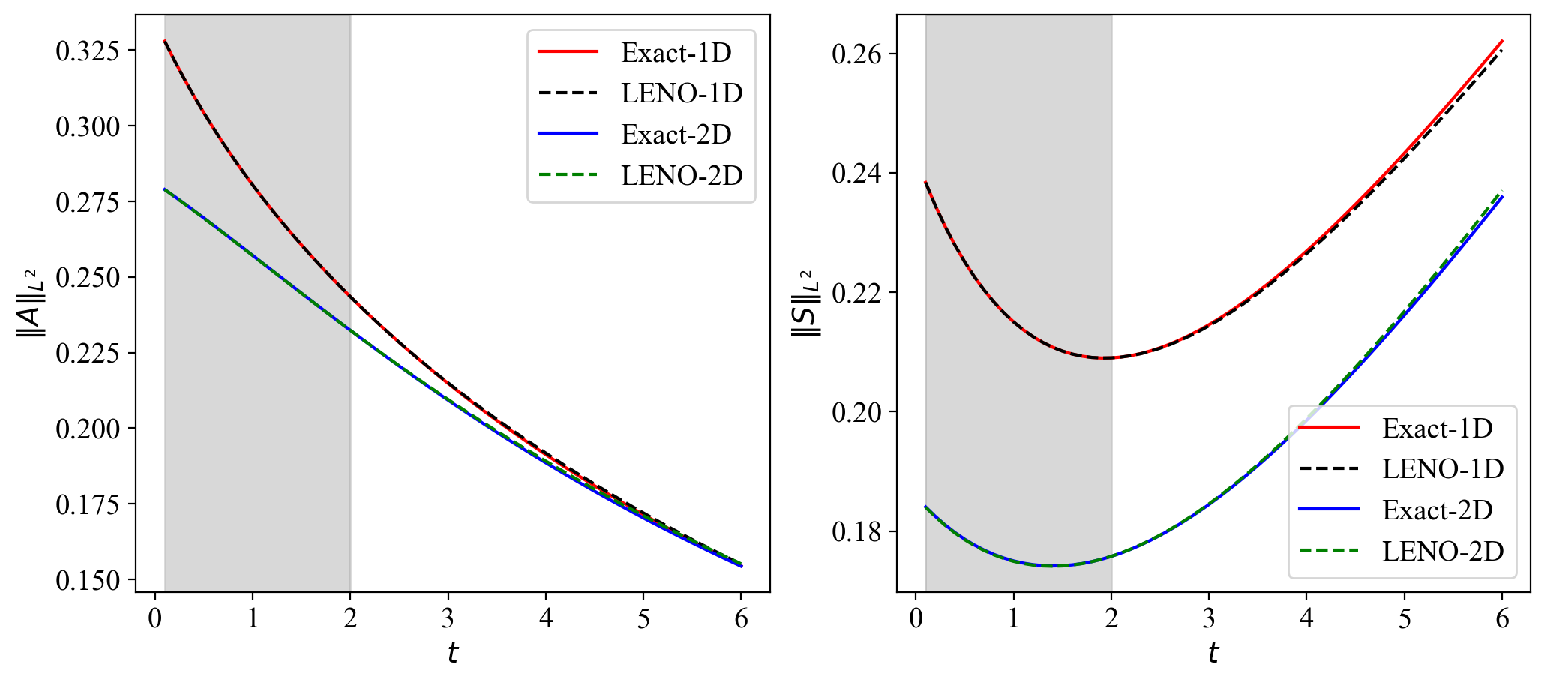}
	\caption{Predicted results ($L^2$-norm) of $A$ (left) and $S$ (right) on the 1D and 2D  Gray-Scott equations. The shaded area represents the first 20 time steps used for training.}
	\label{fig:gs1d}
\end{figure}

\paragraph{2D Results}

For the 2D Gray-Scott system, our method achieves slightly lower accuracy compared to the 1D case, as shown in Table \ref{tab:gs1d}. However, the performance remains satisfactory, demonstrating the capability of our approach to handle more complex spatial dimensions effectively. The predicted results, presented in Figure \ref{fig:gs1d}, further confirm the reliability of our method. Similar to the 1D case, the model provides accurate predictions beyond the training phase, as evidenced by the  $L^2$ -norm evolution for both variables  $A$  and  $S$ . The shaded regions in the plots represent the first 20 time steps used for training, during which the method captures the system dynamics effectively. Beyond these steps, the predictions remain consistent and closely align with the true solutions.

\subsection{Example 4: Schr\"odinger Equation}\label{subsec:sch}

We consider the following Schr\"odinger equation with a homogeneous Dirichlet boundary condition:
\begin{equation}
	\left\{\begin{aligned}
		&u_t	-\Delta u +\alpha |u|^2u+Vu=\lambda u, &x\in \Omega,t\in(0,T)\\
		& u(x,0)=u_0(x) &x\in \Omega\\
	\end{aligned}
	\right.
\end{equation}
where we set $\Omega=(-8,8)^2$, $\alpha=1600$, $\lambda=15.87$ is the eigenvalue of the steady problem and the potential is given by:
\begin{equation}
	V(x,y) = 100(\sin^2({\pi x\over 4})+\sin^2({\pi y\over 4}))+x^2+y^2.
\end{equation}
This equation features a more complex nonlinear term, which adds to the challenge of accurately solving and predicting the system dynamics.  As summarized in Table \ref{tab:Sch}, the results demonstrate that our method achieves low errors, although the difference between the residual error and nonlinear error is slightly larger, reflecting the increased complexity of the nonlinear interactions. Furthermore, Figure \ref{fig:Sch} illustrates the predicted  $L^2$-norm evolution over time for the 2D Schr\"odinger equation. As time progresses, the predicted results show a slight deviation from the actual solution, primarily due to the accumulation of errors over time. This effect becomes more noticeable for such a complex problem. Nevertheless, within the same range of steps as the training phase, our method continues to deliver a good performance.

\begin{figure}[!htbp]
	\centering
	\includegraphics[width=.45\textwidth]{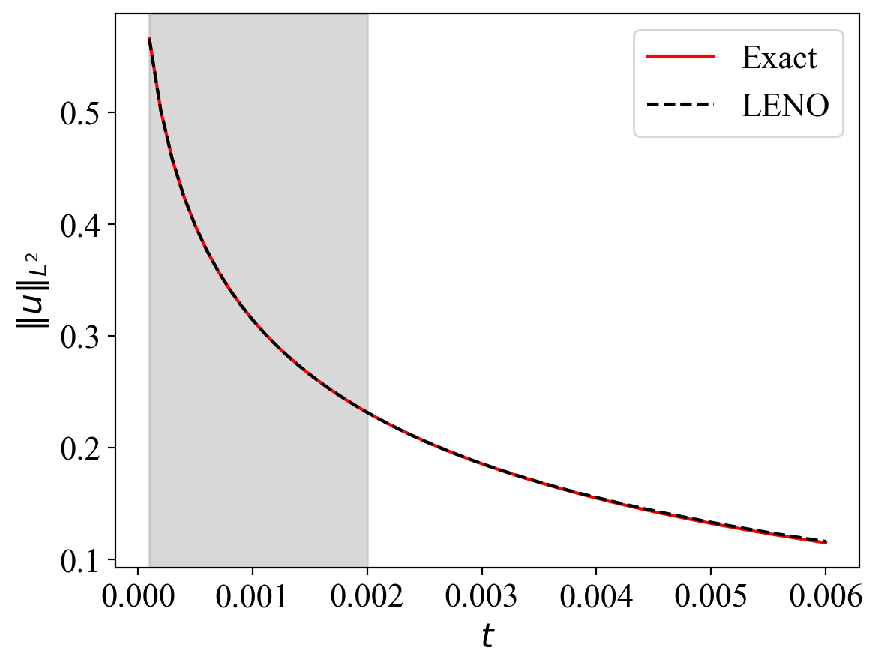}
	\includegraphics[width=.45\textwidth]{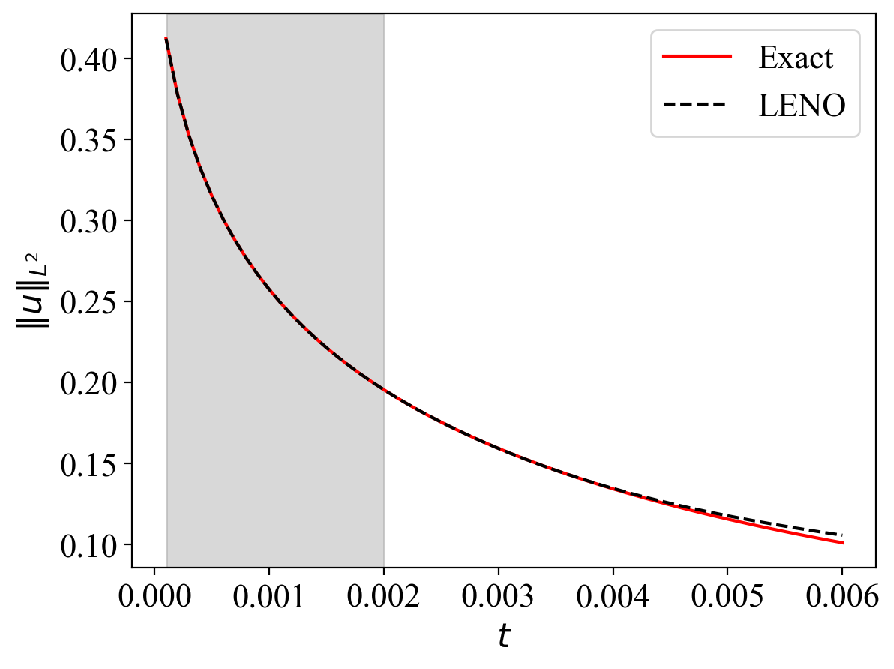}
	\caption{Predicted results ($L^2$-norm) for two initial conditions on the 2D Schr\"odinger equation. The shaded area represents the first 20 time steps used for training.}
	\label{fig:Sch}
\end{figure}

\subsection{Example 5: Alzheimer's Disease Evolution}\label{subsec:brain}

We analyze the dynamics of cortical thickness in MRI scan slices of Alzheimer's disease patients using the Alzheimer's Disease Neuroimaging Initiative (ADNI) dataset, governed by the following equation:	\begin{equation}
	\left\{	\begin{aligned}
		u_t - D \Delta u & = \mathcal{F}(u),& x\in \Omega\subset\mathbb{R}^2,t\in (0,T)\\
		u(x,0) &= u_0(x) ,& x\in \Omega
	\end{aligned}\right.\label{AD}
\end{equation}
with a pure Neumann boundary condition. The diffusion coefficient $D$ is unknown and is regarded as a  parameter in the training. 
Figure \ref{fig:ctscans} shows MRI scans of a human brain at three different time points (Baseline,  6 months, and  12 months) for a specific patient. These slices highlight subtle structural changes in the brain over time, potentially indicating the presence of disease or other pathological conditions. By modeling this evolution, we aim to accurately capture these changes through learning the nonlinear term $\mathcal{F}(u)$, providing deeper insights into potential abnormalities.

\begin{figure}[!htbp]
	\centering
	\subfloat[Baseline]{\includegraphics[width=.32\textwidth]{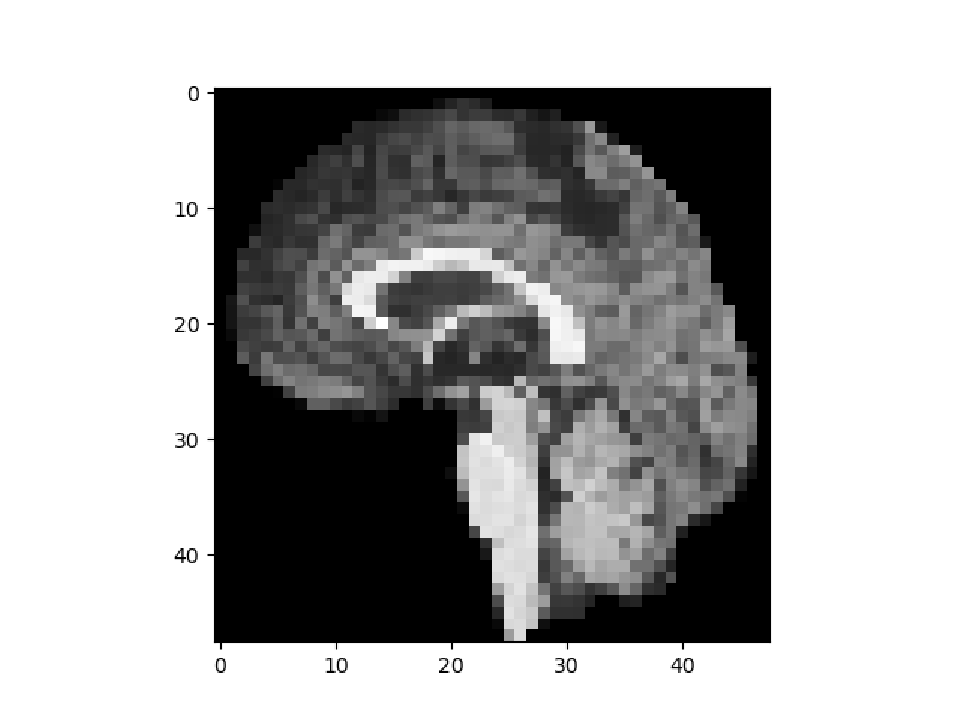}}
	\subfloat[6 months]{\includegraphics[width=.32\textwidth]{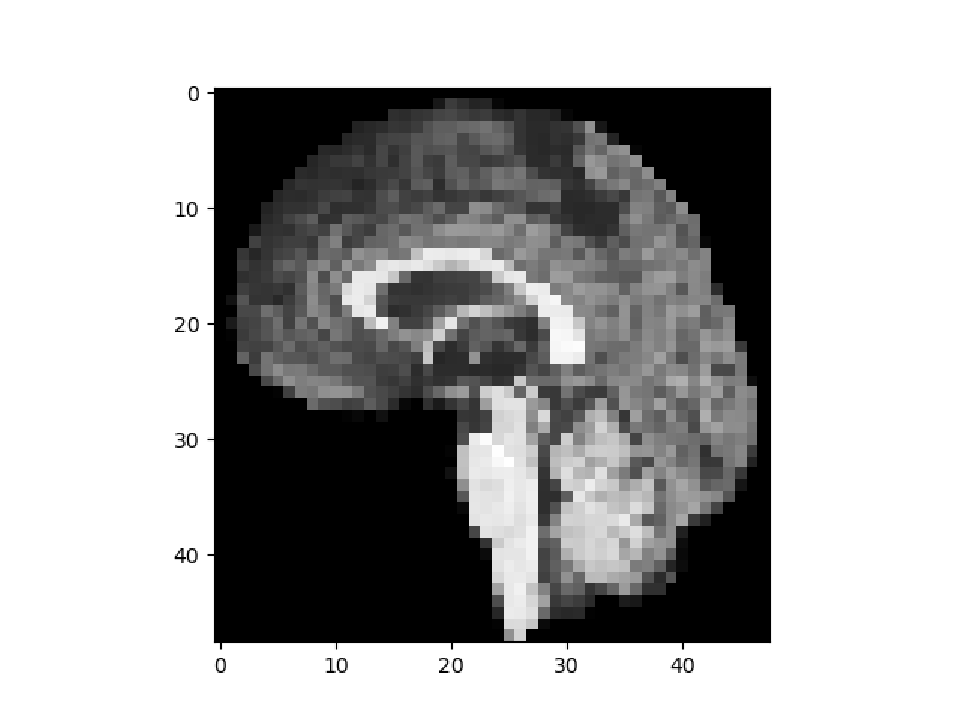}}
	\subfloat[12 months]{\includegraphics[width=.32\textwidth]{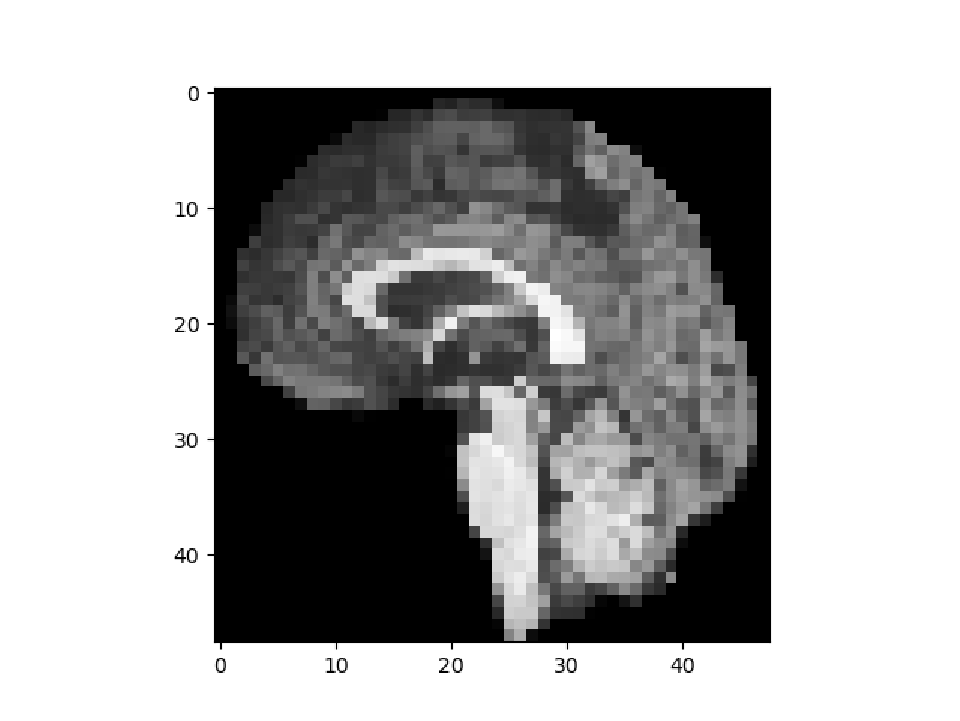}}
	\caption{MRI slices of an Alzheimer's disease patient at three different time points: baseline, 6 months, and 12 months.}
	\label{fig:ctscans}
\end{figure}

In this study, we consider an irregular domain $\Omega$ representing the human brain and use the FEniCS eigensolver to compute the corresponding Laplacian eigenfunctions. These eigenfunctions serve as the basis for our framework, which is employed for both training and transfer learning to model the progression of Alzheimer’s disease. As illustrated in Fig. \ref{fig:eigenfun}, the eigenfunctions localize to different spatial regions of the brain geometry, providing a natural source of interpretability for the LE-NO framework.

\begin{figure}[!htbp]
	\centering
	\includegraphics[width=0.32\linewidth]{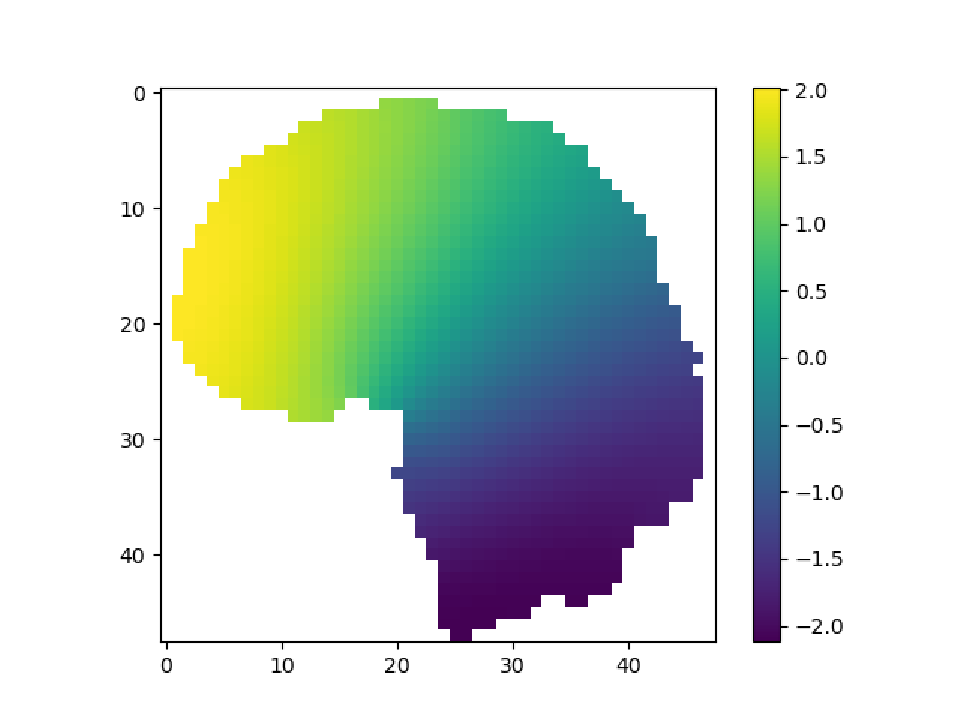}
	\includegraphics[width=0.32\linewidth]{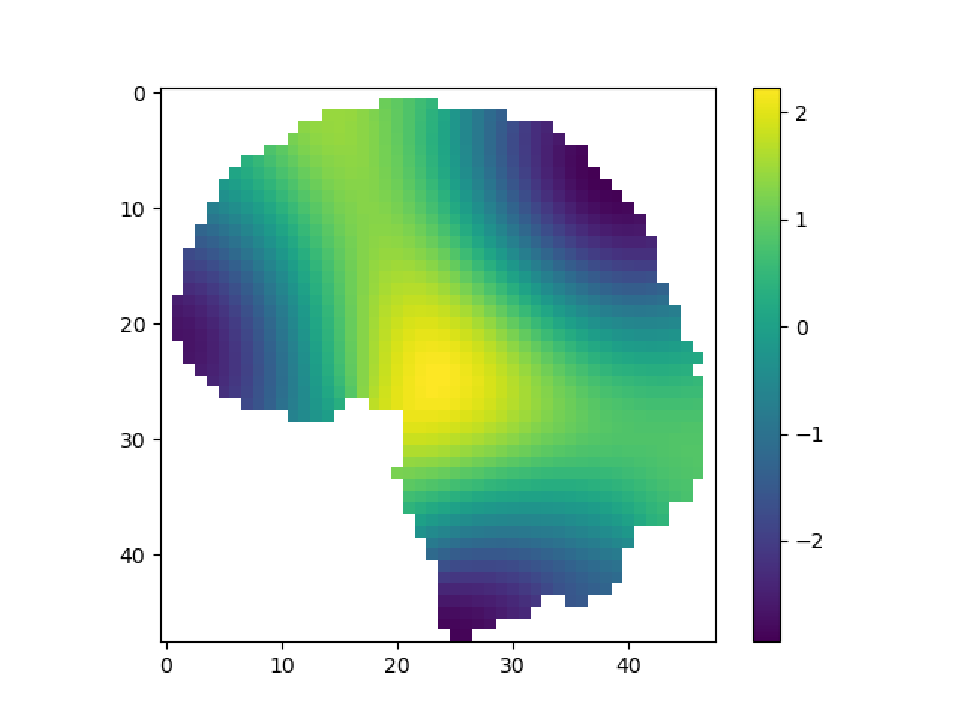}
	\includegraphics[width=0.32\linewidth]{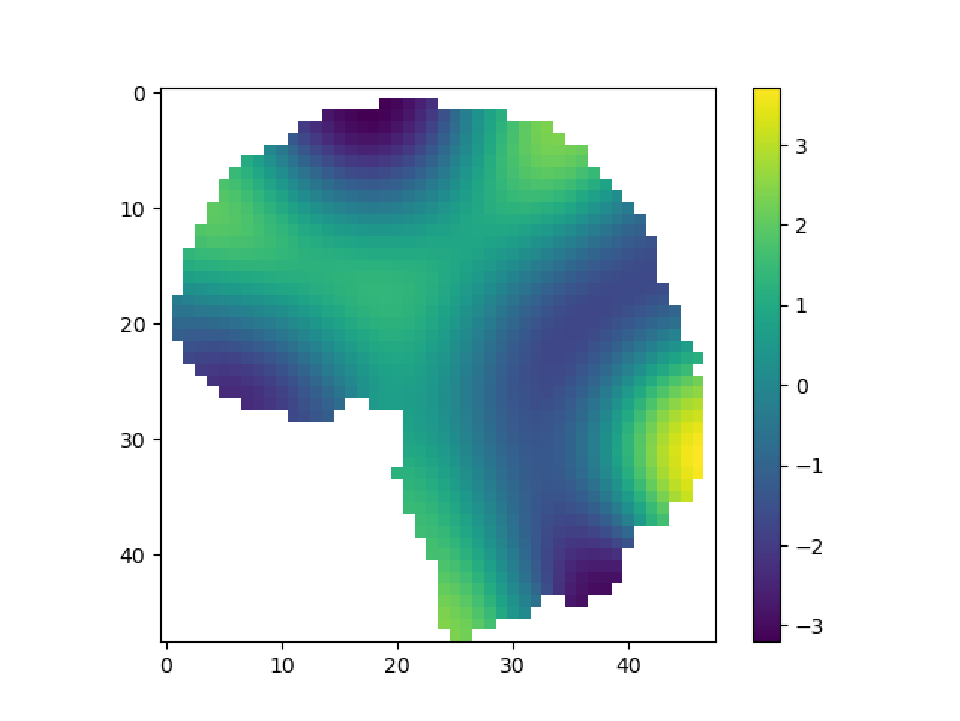}
	\caption{Different Laplacian eigenfunctions computed on an irregular domain representing the human brain.}
	\label{fig:eigenfun}
\end{figure}

Table \ref{tab:ct} presents the accuracy results for learning the eigen-coefficients associated with $L^D$. The results show that our method achieves an accuracy of at least 90\% across all metrics, highlighting its effectiveness in capturing the dynamic behavior. Figure \ref{fig:cttransfer} offers a visual comparison between the predicted results and the ground truth, demonstrating a close match.

\paragraph{Transfer Learning}	
We apply transfer learning across different patients. In this approach, the nonlinear layers of the neural network are kept fixed, with only the linear regression layer being retrained. This allows for efficient adaptation of the model to new spatial regions with minimal training effort. When transferring to a different patient, the time scale may vary due to the heterogeneity among patients. To address this, we introduce a disease progression score $s = \alpha t + \beta$, where $\alpha$ and $\beta$ are constants \cite{ghazi2021robust,zheng2022data}. The model in (\ref{AD}) becomes:
$\alpha u_t - D \Delta u = \mathcal{N}(u)$ In this case, we use transfer learning to adapt the model for different patients by training the additional parameter $\alpha$.

Table~\ref{tab:ct} also summarizes the transfer learning accuracy for different patients. The results demonstrate consistent accuracy ($\ge90\%$) with previous tests. Figure~\ref{fig:cttransfer} provides a visual comparison of the LE-NO training and prediction results compared with ground truth, showing that the outcomes closely align with the true dynamics.

\begin{table}[!htbp]
	\centering
	\caption{Training and transfer learning accuracy results for irregular domain.}
	\begin{tabular}{c|ccc}
		\hline
		\multicolumn{4}{c}{Learning results}\\
		\hline
		& $1-L^D$ ( \%)& $1-E_{L^2}$  (\%) & $1-E_{Res}$ (\%)\\
		\cline{2-4}
		& 94.82 & 91.01 & 96.67\\
		\hline
		\multicolumn{4}{c}{Transfer learning results}\\
		\hline
		Patient & $1-L^D$ ( \%)& $1-E_{L^2}$  (\%) & $1-E_{Res}$ (\%)\\
		\hline
		1 & 97.68 & 91.42 & 90.23 \\
		2 & 95.94 & 90.53 & 93.73 \\
		3 & 96.69 & 90.36 & 86.83 \\
		4 & 96.57 & 89.97 & 85.14 \\
		5 & 96.49 & 90.81 & 88.33 \\
		6 & 98.45 & 92.25 & 90.10 \\
		7 & 97.13 & 91.33 & 85.11 \\
		8 & 96.32 & 90.65 & 82.56 \\
		9 & 97.18 & 91.02 & 83.45 \\
		10 & 96.39 & 89.44 & 86.91 \\
		\hline
		mean$\pm$std
		& 96.88 & 90.78 & 87.24 \\
		\hline
	\end{tabular}
	\label{tab:ct}
\end{table}

\begin{figure}[!htbp]
	\centering
	\begin{tikzpicture}[
		circ/.style={draw, circle, thick, fill=gray!20, inner sep=2pt}, 
		box/.style={draw=red, rounded corners, align=center, minimum width=9.5cm, minimum height=3.5cm,line width=.03cm},
		box1/.style={draw=cyan, rounded corners, align=center, minimum width=9.5cm, minimum height=3.7cm,line width=.03cm},
		box2/.style={draw=green, rounded corners, align=center, minimum width=9.5cm, minimum height=3.7cm,line width=.03cm},
		dashedbox/.style={draw=blue, dashed, rounded corners, minimum width=12.9cm, minimum height=4.5cm, line width=.03cm},
		dashedbox1/.style={draw=orange, dashed, rounded corners, minimum width=12.9cm, minimum height=4.5cm, line width=.03cm},
		]
		\node[] (ct1){
			\includegraphics[width=.75\textwidth]{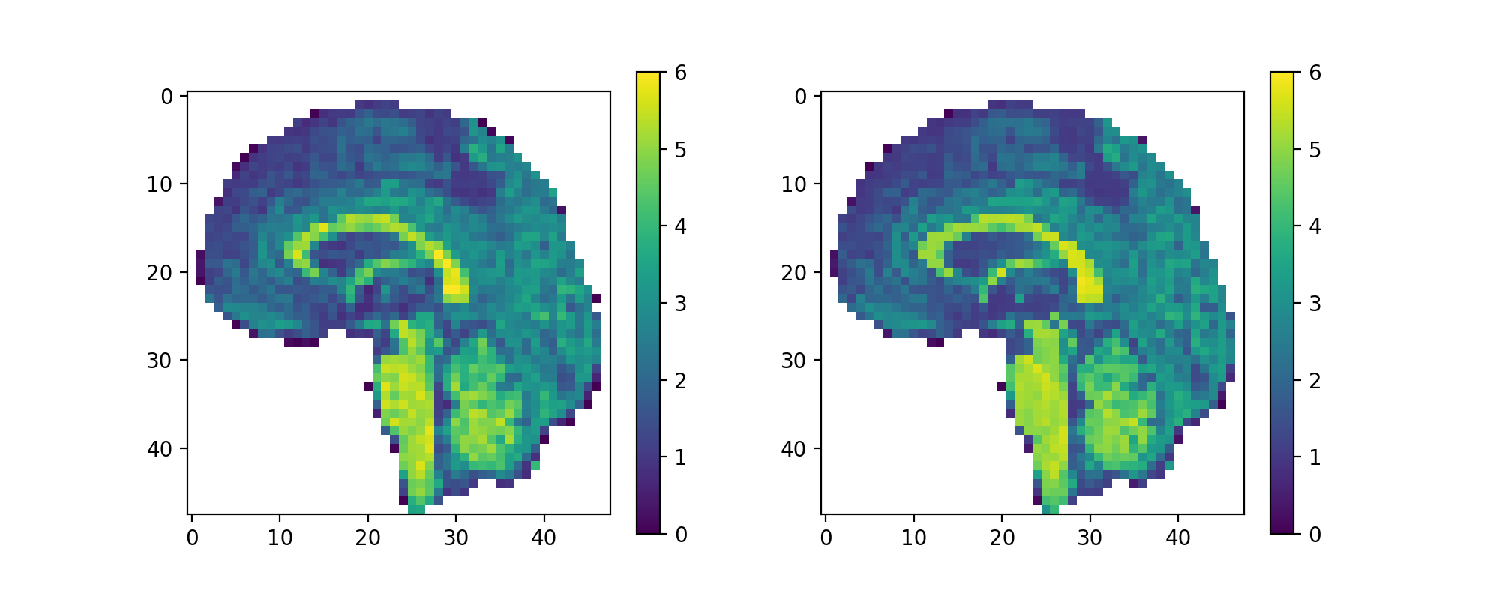}};
		\node[below=0cm of ct1] (ct2) {\includegraphics[width=.75\textwidth]{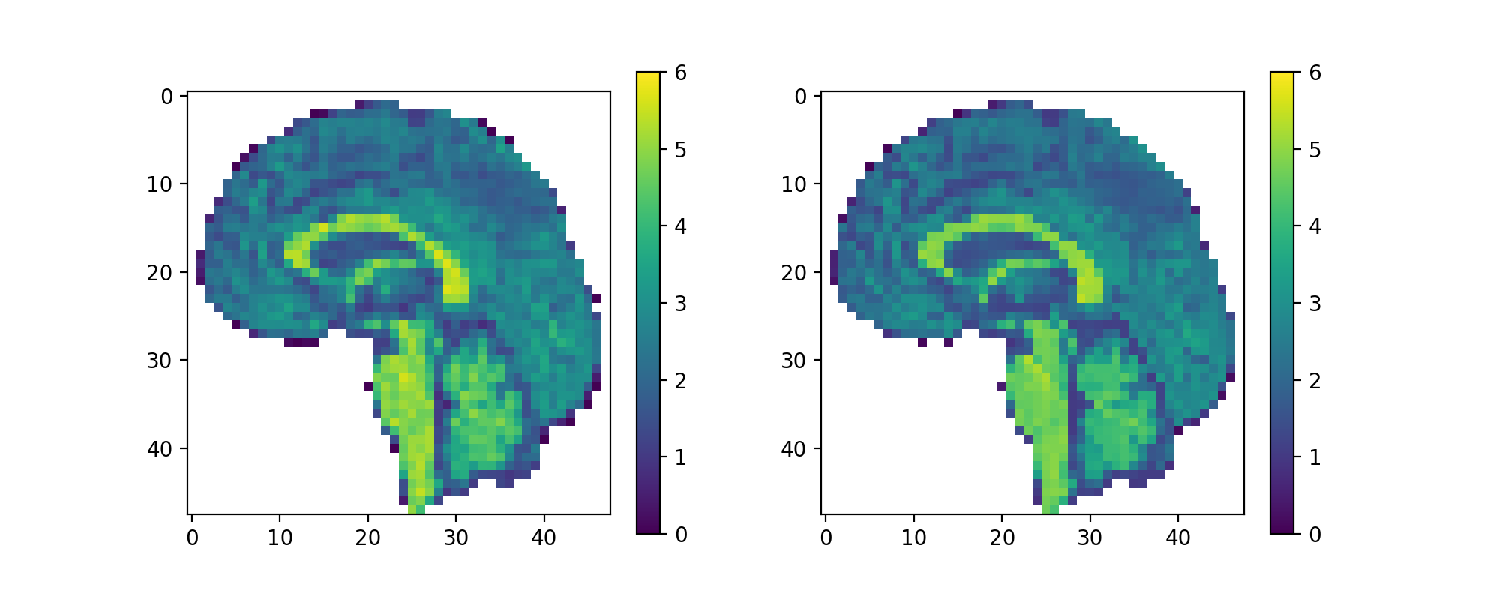}};
		\node[below=0.3cm of ct2] (ct3) {\includegraphics[width=.75\textwidth]{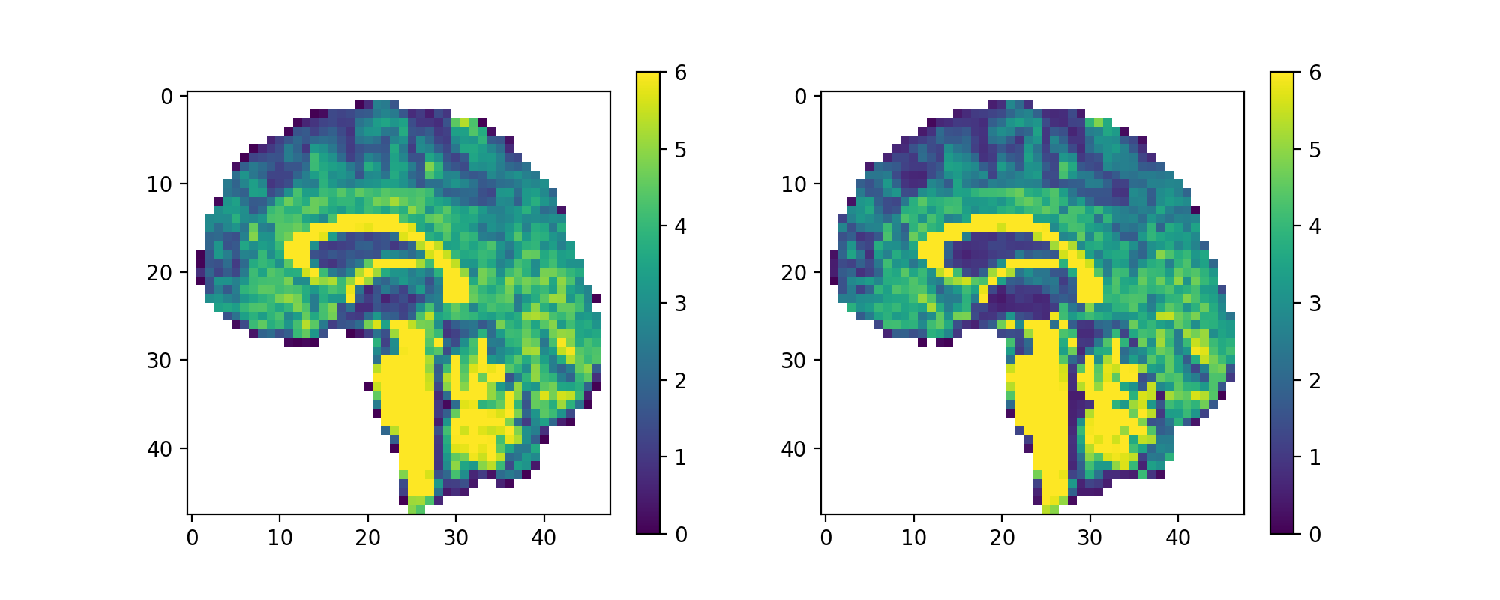}};
		\node[box, below=of ct1,xshift=-.1cm, yshift=5.6cm] (train) {{\color{red}Learning}\\\\\\\\\\\\\\\\};
		\node[box1, below=of ct2,xshift=-.1cm,  yshift=4.9cm] (transfer) {\\\\\\\\\\\\\\\\{\color{cyan}\ Transfer\quad Learning}};
		\node[box2, below=of ct3,xshift=-.1cm,  yshift=4.9cm] (predict) {\\\\\\\\\\\\\\\\{\color{green}Prediction Result}};
		\node[dashedbox, below=of ct1 ,xshift=-4.7cm,yshift=-1.45cm,rotate=90,align=left] (learn) {};
		\node[below=of ct1, xshift=-4.7cm,yshift=-1.3cm,rotate=90] (result) {\color{blue}LE-NO Results};
		\node[dashedbox1, below=of ct1 ,xshift=0cm,yshift=-1.45cm,rotate=90,align=left] (truth) {};
		\node[below=of ct1, xshift=4.5cm,yshift=-1.3cm,rotate=270] (result1) {\color{orange}Ground Truth};
	\end{tikzpicture}
	\caption{Training result (top) and transfer training comparison for another patient  (middle) and prediction result (bottom) : LE--NO results (left) and ground truth (right).}
	\label{fig:cttransfer}
\end{figure}

\section{Discussion and Conclusion}

We introduced the LE-NO, a principled framework for learning nonlinear reaction--diffusion dynamics that integrates spectral theory with neural operator learning. By leveraging Laplacian eigenfunctions on arbitrary geometries, LE-NO addresses key challenges in data efficiency, interpretability, and physical fidelity.

A core innovation of LE-NO is its use of problem-specific spectral bases formed by Laplacian eigenfunctions. This design offers three major advantages over existing neural operator architectures. First, the eigenfunctions diagonalize the diffusion operator, enabling stable and efficient time integration---particularly in regimes with small diffusion coefficients, where conventional models often become unstable. This property is evident in the Allen--Cahn example, where LE-NO maintains stability without artificial regularization. Second, unlike heuristic Fourier modes or learned data-driven bases, Laplacian eigenfunctions are inherently tied to the domain geometry and boundary conditions, reducing the need for additional architectural complexity. Third, the spectral basis naturally filters noise by truncating higher modes, preserving dominant dynamics while enhancing robustness---a feature especially valuable when working with sparse and noisy biomedical data, as seen in the Alzheimer’s application.

LE-NO’s loss function further combines data-driven terms with a physics-informed residual term that aligns training with the underlying PDE structure. Both theoretical and empirical results show that this formulation enhances convergence and accuracy, particularly in learning nonlinear interactions. The residual loss plays a critical role in ensuring physical consistency, as illustrated in benchmarks with derivative-driven dynamics that tend to destabilize conventional models.

Another important feature of LE-NO is its interpretability. In the Alzheimer’s application, the learned modes localized to anatomically distinct cortical regions, revealing spatial patterns consistent with known disease progression pathways. This arises naturally from the spectral representation, in which the neural network learns modal coefficients in the eigenfunction basis, effectively acting as a nonlinear filter on physiologically meaningful modes.

Such interpretability is particularly important in biomedical systems, where black-box models are often difficult to validate mechanistically. LE-NO addresses a key limitation of symbolic regression approaches such as SINDy, which struggle when the reaction terms lack closed-form expressions. By contrast, LE-NO learns the operator directly, capturing multiscale dynamics even when the governing physics are only partially known.

LE-NO generalizes effectively across a variety of settings:
\begin{itemize}
	\item \textbf{Boundary Conditions:} Handles both Dirichlet and Neumann conditions, including inhomogeneous cases such as in the KPP-Fisher example;
	\item \textbf{Irregular Domains:} Adapts to complex geometries, such as brain cortical surfaces, achieving over 90\% transfer accuracy across patient-specific meshes.
\end{itemize}

A notable computational benefit is the decoupling of eigenfunction computation (offline) from neural operator learning (online). While computing eigenfunctions involves a one-time cost, modern finite element solvers make this step efficient even in three dimensions. For example, in the Alzheimer’s case, brain-domain eigenfunctions were computed in minutes and subsequently enabled real-time inference across patient cohorts.

Despite its advantages, LE-NO has several limitations. Long-term prediction accuracy may degrade in systems with complex oscillatory or phase behavior (e.g., the Schrödinger equation), indicating the need for more advanced time integration or error correction strategies. Additionally, while LE-NO outperforms Fourier-based and kernel-based models across our benchmarks, its comparative performance against transformer-based architectures remains an open question. Moreover, extending the framework to convection-dominated and more general classes of PDEs will require adapting the spectral basis to reflect the governing dynamics more faithfully.

In conclusion, LE-NO bridges spectral theory with neural operator learning, yielding a framework that is both interpretable and adaptable to irregular domains. Its ability to embed PDE-based structure into learning makes it particularly well-suited for data-limited applications in biomedicine, where mechanistic understanding remains essential. By unifying physics-informed priors with modern machine learning, LE-NO paves the way for data-driven discovery of partial differential equations.

\section*{Data Availability}
Data for examples 1--4 are are generated using FeEniCS software (version 2019.1.0) and are available on GitHub at \href{https://github.com/Dong953/LENO}{https://github.com/Dong953/LENO}. Access to the ADNI dataset is publicly available via \href{http://adni.loni.usc.edu}{http://adni.loni.usc.edu} \cite{weiner2013alzheimer}.
\section*{Code Availability}
The source code for LE-NO is available on GitHub at \href{https://github.com/Dong953/LENO}{https://github.com/Dong953/LENO}.

\section*{Acknowledgements}
This work was supported by National Institute of General Medical Sciences through grant 1R35GM146894.

\appendix
\renewcommand{\thetheorem}{\Alph{section}.\arabic{theorem}}
\renewcommand{\thetheorem}{\Alph{section}.\arabic{theorem}}
\renewcommand{\theremark}{\Alph{section}.\arabic{remark}}
\renewcommand{\thelemma}{\Alph{section}.\arabic{lemma}}
\renewcommand{\theequation}{\Alph{section}.\arabic{equation}}
\section{Convergence Results}\label{appdendix:approx}
To analyze the approximation properties of the neural operator $\mathcal{N}$, we consider a sequence of basis functions  $\{\phi_i^{\mathcal{X}}\}$ of a general Hilbert space $\mathcal{X}$ satisfying the following approximation property:
\begin{equation}
	\lim_{n\rightarrow\infty} \inf_{v_n\in {\rm span}\{\phi_i^{\mathcal{X}}\}_{i=1}^n} \|v-v_n\|_{\mathcal{X}} = 0,
\end{equation}
for all $v\in {\mathcal{X}}$. We define $\mathcal{T}_n^\mathcal{X}$ as the projection operators on $\text{span}~\{\phi_i^\mathcal{X}\}_{i=1}^n$.

Then given Hilbert spaces $\mathcal{X},\mathcal{Y}$ and a nonlinear operator $\mathcal{F}:\mathcal{X}\rightarrow \mathcal{Y}$, if these spaces have projection operators $T^\mathcal{X}_n,T^\mathcal{Y}_n$ associated with basis $\phi_i^\mathcal{X}\in X,\phi_i^Y\in \mathcal{Y}$, respectively. For a fixed integer $n$, we define an approximation operator for $\mathcal{F}$ as
\begin{equation}
	\mathcal{F}_n (u):=	T^\mathcal{Y}_n \circ \mathcal{F} \circ T^\mathcal{X}_n (u)
\end{equation}

Rewriting $T_n^\mathcal{Y}u$ as a linear combination of the basis $\{\phi_i^\mathcal{Y}\}_{i=1}^n$, we have 
\begin{equation}
	T_n^\mathcal{Y}y=\sum_{i=1}^{n}\alpha_i^\mathcal{Y}(y)\phi^\mathcal{Y}_i,\quad y\in \mathcal{Y},
\end{equation}
where $\alpha_i^\mathcal{Y}(u)$ are linear continuous functionals. 
The approximation operator $\mathcal{F}_n$  can then be expressed as
\begin{equation}
	\begin{aligned}
		\mathcal{F}_n(u) =& \sum_{j=1}^{n} \alpha^\mathcal{Y}_j(\mathcal{F}\circ T^\mathcal{X}_nu)\phi_j^\mathcal{Y}. 
	\end{aligned}
\end{equation}
Further, we rewrite $T^\mathcal{X}_Pu$ as a linear combination of $\phi^\mathcal{X}_i$
\begin{equation}
	T^\mathcal{X}_nu = \sum_{i=1}^{n}\alpha_i^\mathcal{X}(u)\phi^\mathcal{X}_i,\quad u\in \mathcal{X}.
\end{equation}
The coefficient part is given by
\begin{equation}
	\alpha^\mathcal{Y}_j(\mathcal{F}\circ T^\mathcal{X}_nu) = \alpha^\mathcal{Y}_j(\mathcal{F}\circ \sum_{i=1}^n\alpha_i^\mathcal{X}(u)\phi^\mathcal{X}_i).
\end{equation}
Here, $\alpha^\mathcal{Y}_j(\mathcal{F}\circ \sum_{i=1}^n\alpha_i^\mathcal{X}(u)\phi^\mathcal{X}_i).$ can be viewed as a function of $\alpha^\mathcal{X}_i(u)$. We define
\begin{equation}\label{eq:g}
	\mathcal{G}_j(\alpha_1^\mathcal{X}(u),\cdots,\alpha_n^\mathcal{X}(u)):=	\alpha^\mathcal{Y}_j(\mathcal{F}\circ \sum_{i=1}^n\alpha_i^\mathcal{X}(u)\phi^\mathcal{X}_i),
\end{equation}
where $\mathcal{G}_j$ is clearly a continuous function.

If we use a neural network to approximate $\mathcal{G}_j$ such that $$
\mathcal{G}_j(\cdot)\sim \sum_{i=1}^m\mathcal{A}_j^i\sigma (\omega_j^i\cdot+b_j^i),
$$
for some $\omega_j\in\mathbb{R}^n$ and $b_j^i\in\mathbb{R}$, we obtain the following neural operator
\begin{equation}
	\mathcal{N}(u):= \sum_{i=1}^n \sum_{j=1}^m \left({A}_j^i\phi_i^\mathcal{Y}\right) \sigma(\omega_i^i\cdot\bm{\alpha}_n^\mathcal{X}(u)+b_j^i).
\end{equation}
where $\bm{\alpha}^\mathcal{X}_n(u) = (\alpha^\mathcal{X}_1(u),\cdots,\alpha^\mathcal{X}_n(u))^T$. 

Here 
\begin{itemize}
	\item $\mathcal{W}_j^iu:=\omega_j^i\cdot \bm{\alpha}^\mathcal{X}_n(u)$ is a linear continuous operator on $\mathcal{X}$.
	\item $\mathcal{B}_j^i:=b_j^i\in \mathcal{X}$ is a bias term.
	\item $\mathcal{A}_j^iu:={A}_j^i\phi_i^\mathcal{Y}u,~u\in\mathcal{X}$ is a continuous linear operator from $\mathcal{X}$ to $\mathcal{Y}$.
\end{itemize}
If we sum the two summations together with $P = nm$, the neural operator can be rewritten in the compact form shown in \eqref{eq:no1}
\begin{equation}
	\mathcal{N}(u)=\sum_{i=1}^P \mathcal{A}_i\sigma(\mathcal{W}_iu+\mathcal{B}_i)
\end{equation}
Specifically, when using a neural network as the basis $\phi_i^\mathcal{V}$, the operator $\mathcal{A}_i$ takes the form shown in \eqref{eq:no2}.

To provide an approximation result for $\mathcal{N}$, it suffices to analyze the approximation properties of $\mathcal{F}_n$. Since the difference between $\mathcal{N}$ and $\mathcal{F}_n$ can be made arbitrarily small by choosing a sufficiently wide neural network (large $m$), which is guaranteed by the universal approximation property of neural networks.

To begin with, we consider an extended set for the approximation result, defined as:
\begin{equation}\label{eq:extendK}
	\widehat{K} = K \cup \overline{\bigcup_{j=1}^{\infty}T^{\mathcal{X}}_{j}(K)}.
\end{equation}
We now establish the compactness of the extended set $\widehat{K}$.

\begin{lemma}[compactness of extended set]\label{lem:ext}
	The extended set $\widehat{K}$ given in (\ref{eq:extendK}) is compact.
\end{lemma}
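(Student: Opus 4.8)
The plan is to show that $\widehat{K}$ is compact by establishing that it is closed and totally bounded (equivalently, sequentially compact) in the Hilbert space $\mathcal{X}$. The set $\widehat{K}$ is defined as the union of the original compact set $K$ with the closure of all its projections $T^{\mathcal{X}}_j(K)$. Since $K$ is compact and each projection $T^{\mathcal{X}}_j$ is a continuous linear operator, each image $T^{\mathcal{X}}_j(K)$ is compact, hence so is the finite union $K \cup \bigcup_{j=1}^{J} T^{\mathcal{X}}_j(K)$ for any fixed $J$. The difficulty lies in the infinite union: I must control the behavior of $T^{\mathcal{X}}_j(K)$ uniformly as $j \to \infty$, and then pass to the closure.

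First I would observe that by the approximation property of the basis $\{\phi_i^{\mathcal{X}}\}$, for every $v \in \mathcal{X}$ we have $T^{\mathcal{X}}_j v \to v$ as $j \to \infty$. The key step is to upgrade this pointwise convergence to a convergence that is \emph{uniform over the compact set} $K$. Indeed, I claim that $\sup_{v \in K}\|T^{\mathcal{X}}_j v - v\|_{\mathcal{X}} \to 0$ as $j \to \infty$. This follows from a standard equicontinuity-type argument: using that the $T^{\mathcal{X}}_j$ are orthogonal projections (hence uniformly bounded, $\|T^{\mathcal{X}}_j\| \le 1$) together with the compactness of $K$, one covers $K$ by finitely many balls of radius $\varepsilon$ centered at points $v_1,\dots,v_m$, controls $\|T^{\mathcal{X}}_j v_k - v_k\|$ for each center, and uses the uniform bound to handle the remainder.

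Given this uniform convergence, I would establish sequential compactness of $\widehat{K}$ directly. Take an arbitrary sequence $\{w_\ell\}$ in $\widehat{K}$; each $w_\ell$ lies either in $K$ or in some $T^{\mathcal{X}}_{j_\ell}(K)$, so write $w_\ell = T^{\mathcal{X}}_{j_\ell} v_\ell$ with $v_\ell \in K$ (the case $w_\ell \in K$ corresponds to $j_\ell = \infty$, i.e.\ $T^{\mathcal{X}}_\infty = \mathrm{Id}$). By compactness of $K$, pass to a subsequence along which $v_\ell \to v_* \in K$. If the indices $j_\ell$ are bounded, a further subsequence has constant $j_\ell = j_*$ and then $w_\ell = T^{\mathcal{X}}_{j_*} v_\ell \to T^{\mathcal{X}}_{j_*} v_* \in \widehat{K}$ by continuity. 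If $j_\ell \to \infty$, then the uniform estimate gives
\begin{equation}
\|w_\ell - v_*\|_{\mathcal{X}} \le \|T^{\mathcal{X}}_{j_\ell} v_\ell - T^{\mathcal{X}}_{j_\ell} v_*\|_{\mathcal{X}} + \|T^{\mathcal{X}}_{j_\ell} v_* - v_*\|_{\mathcal{X}} \le \|v_\ell - v_*\|_{\mathcal{X}} + \sup_{v \in K}\|T^{\mathcal{X}}_{j_\ell} v - v\|_{\mathcal{X}} \to 0,
\end{equation}
so $w_\ell \to v_* \in K \subset \widehat{K}$. In either case the limit lies in $\widehat{K}$, proving sequential compactness.

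The main obstacle is the uniform convergence claim $\sup_{v\in K}\|T^{\mathcal{X}}_j v - v\|_{\mathcal{X}} \to 0$; without it, the infinite union could fail to be precompact (the projections of a fixed compact set could accumulate only on a noncompact set). I expect this to follow cleanly from the $\varepsilon$-net argument combined with $\|T^{\mathcal{X}}_j\|\le 1$, but it is the step that genuinely uses both the compactness of $K$ and the structure of the orthogonal projections, and it is where care is needed. Once uniform convergence is in hand, closedness of $\widehat{K}$ is automatic since the closure is taken explicitly in the definition, and the remaining arguments are routine applications of continuity and the Bolzano--Weierstrass property inherited from $K$.
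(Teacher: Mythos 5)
Your proof is correct in substance and rests on exactly the same key fact as the paper's: the pointwise convergence $T^{\mathcal{X}}_j v \to v$ upgrades to uniform convergence over the compact set $K$ (via an $\varepsilon$-net on $K$ plus uniform boundedness of the projections). Where you diverge is in the finishing move: the paper proves total boundedness directly, covering $K \cup \bigcup_{j \ge N} T^{\mathcal{X}}_j(K)$ by balls of radius $r$ centered at an $r/2$-net of $K$ and handling the finitely many remaining indices $j < N$ as a finite union of compact images, then invoking closedness; you instead argue sequential compactness by extracting subsequences and splitting on whether the indices $j_\ell$ stay bounded. Both are valid in a complete space and neither buys much over the other, though the paper's covering argument sidesteps one small imprecision in yours: an arbitrary element of $\widehat{K}$ need not have the form $T^{\mathcal{X}}_{j}v$ with $v \in K$, since $\widehat{K}$ contains the \emph{closure} of the union. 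Your subsequence argument as written only handles sequences drawn from $K \cup \bigcup_j T^{\mathcal{X}}_j(K)$; to cover the closure you should either first show that this union is totally bounded (so its closure is compact), or approximate each $w_\ell$ in the closure by an element of the union to within $1/\ell$ and run a diagonal argument. This is routine to repair, but it is the one place where your write-up does not quite match what the lemma asserts. One further minor point: your bound $\|T^{\mathcal{X}}_j\| \le 1$ presumes orthogonal projections, which holds for the $L^2$-orthonormal Laplacian eigenbasis but in the general Hilbert-space setting of the appendix one should instead invoke the uniform boundedness of the projections, as the paper does in the subsequent lemma.
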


\begin{proof}
	Since  $T^\mathcal{X}_n$ is linear continuous, it is uniformly continuous on the compact set $K$. For any $r>0$, there exists ${N}\in\mathbb{N}$ such that 
	$$
	\sup_{u\in \mathcal{X}}\|u-T^\mathcal{X}_ju\|_\mathcal{X}<r/2,
	$$
	for $j\ge {N}$. Now, cover $K$ by finite balls of radius $r/2$ and let $z_1,\cdots,z_l$ be the centers of these balls. Then balls of radius $r$ around $z_1,\cdots,z_l$ cover
	\begin{equation}
		K\cup \bigcup_{j={N}}^\infty {T}^{\mathcal{X}}_j(K),
	\end{equation}
	since for any $z\in K$, there exists $z_l$ such that $\|z-z_l\|_\mathcal{X}<r/2$ and 
	\begin{equation}
		\|T^\mathcal{X}_j(z)-z_l\|_\mathcal{X}\le \|T^\mathcal{X}_j(z)-z\|_\mathcal{X}+\|z-z_l\|_{\mathcal{X}}\le r/2+r/2<r.
	\end{equation}
	The remaining set
	\begin{equation}
		S = \bigcup_{j=1}^{{N}-1}{T}^\mathcal{X}_j(K),
	\end{equation}
	is a finite union of images of the compact set $K$ under the linear operators $T^\mathcal{X}_j$, and is therefore compact. Thus it can also be covered by finitely many balls of radius $r$. Since $r>0$ is arbitrary and $\widehat{K}$ is closed, the compactness follows.
\end{proof}

First, we consider using the projection operator $\Pi_P$ on $V_P$ to give the specific formulation $\mathcal{F}_P = \Pi_P \circ \mathcal{F} \circ \Pi_P$. The following theorem establishes the approximation property for $\mathcal{F}_P$.

\begin{lemma}\label{lem:approxop}
	Let $\mathcal{F}: \mathcal{X}(\Omega)\cap H^k(\Omega)\subset \mathcal{Y}(\Omega)$ be  a Lipschitz continuous operator with constant $L$,  $K\subset U$ be a compact set and $\mathcal{F}(K)\subset H^k(\Omega)$, it holds that
	\begin{equation}
		\sup_{u\in K} \|\mathcal{F}_P(u)-\mathcal{F}(u)\|_{L^2}\le C(L+1)P^{-{k\over d}}.
	\end{equation}
	given $P \in \mathbb{N}$.
\end{lemma}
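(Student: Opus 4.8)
The plan is to insert and subtract $\Pi_P\mathcal{F}(u)$ and split the error by the triangle inequality into a \emph{stability} term generated by the inner projection $\Pi_P u$ and a \emph{consistency} term generated by the outer projection of $\mathcal{F}(u)$. Writing $\mathcal{F}_P(u)=\Pi_P\mathcal{F}(\Pi_P u)$, I would estimate
\begin{equation*}
\|\mathcal{F}_P(u)-\mathcal{F}(u)\|_{L^2}\le \|\Pi_P\big(\mathcal{F}(\Pi_P u)-\mathcal{F}(u)\big)\|_{L^2}+\|\Pi_P\mathcal{F}(u)-\mathcal{F}(u)\|_{L^2}.
\end{equation*}
Since $\Pi_P$ is the $L^2$-orthogonal projection onto $V_P$, it is a contraction in $L^2$, so the first term is bounded by $\|\mathcal{F}(\Pi_P u)-\mathcal{F}(u)\|_{L^2}\le L\|\Pi_P u-u\|_{L^2}$ using the Lipschitz continuity of $\mathcal{F}$ in the $L^2$ norm. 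Both remaining quantities, $\|\Pi_P u-u\|_{L^2}$ and $\|\Pi_P\mathcal{F}(u)-\mathcal{F}(u)\|_{L^2}$, are pure projection errors for functions ($u$ and $\mathcal{F}(u)$) that lie in $H^k(\Omega)$ by hypothesis.

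The heart of the argument is therefore a single spectral approximation estimate: for $v\in H^k(\Omega)$,
\begin{equation*}
\|\Pi_P v-v\|_{L^2}\le C\,\lambda_{P+1}^{-k/2}\,\|v\|_{H^k}\le C\,P^{-k/d}\,\|v\|_{H^k}.
\end{equation*}
I would prove the first inequality by expanding $v=\sum_i(v,\phi_i)\phi_i$ in the eigenbasis, so that $\|\Pi_P v-v\|_{L^2}^2=\sum_{i>P}|(v,\phi_i)|^2$, and then inserting the factor $\lambda_i^{-k}\lambda_i^k$ and using $\lambda_i\ge\lambda_{P+1}$ for $i>P$ to obtain $\sum_{i>P}|(v,\phi_i)|^2\le\lambda_{P+1}^{-k}\sum_{i>P}\lambda_i^k|(v,\phi_i)|^2\le C\lambda_{P+1}^{-k}\|v\|_{H^k}^2$. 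The second inequality is Weyl's asymptotic law $\lambda_P\sim c_d P^{2/d}$, which converts the eigenvalue decay into the stated algebraic rate in $P$.

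Combining the two bounds and absorbing $\sup_{u\in K}\|u\|_{H^k}$ and $\sup_{u\in K}\|\mathcal{F}(u)\|_{H^k}$---both finite since $K$ is bounded in $H^k$ and $\mathcal{F}(K)\subset H^k$ is bounded---into the constant $C$ yields
\begin{equation*}
\sup_{u\in K}\|\mathcal{F}_P(u)-\mathcal{F}(u)\|_{L^2}\le L\,CP^{-k/d}+CP^{-k/d}=C(L+1)P^{-k/d},
\end{equation*}
where the $(L+1)$ factor arises precisely from summing the Lipschitz stability term ($L$) and the consistency term ($1$). The main obstacle I anticipate is justifying the norm equivalence $\sum_i\lambda_i^k|(v,\phi_i)|^2\lesssim\|v\|_{H^k}^2$ used in the spectral estimate. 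This identity is clean for $(-\Delta)^{k/2}$ on its natural domain, but for $H^k(\Omega)$ with Dirichlet (or Neumann) data it requires that $v$ satisfy the compatibility conditions placing it in the domain of the relevant fractional power, and for large $k$ elliptic regularity on the polygonal domain $\Omega$ must be invoked with care. I would handle this by restricting to the range of $k$ permitted by the boundary regularity (consistent with the Sobolev-embedding remark following the theorem) and by using the equivalence of $\|v\|_{H^k}$ with $\|(-\Delta)^{k/2}v\|_{L^2}+\|v\|_{L^2}$ on that domain.
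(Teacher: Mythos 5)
Your proposal is correct and follows essentially the same route as the paper: the identical triangle-inequality splitting into the outer projection (consistency) error and the Lipschitz-propagated inner projection (stability) error, with the $L^2$-boundedness of $\Pi_P$ used to control the latter. The only difference is that the paper simply invokes the projection bound $\|\Pi_P v - v\|_{L^2}\le CP^{-k/d}$ as a known approximation property of the eigenbasis, whereas you derive it explicitly via the eigenfunction expansion and Weyl's law and correctly flag the norm-equivalence subtlety for fractional powers of the Laplacian on a polygonal domain---a detail the paper's proof leaves implicit.
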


\begin{proof}
	
	Substituting the formulation of $\mathcal{F}_P$, we have
	$$
	\|\mathcal{F}(u)-\Pi_P \circ \mathcal{F} \circ \Pi_P(u)\|_{L^2}\le \|\mathcal{F}(u)-\Pi_P\circ \mathcal{F}(u)\|_{L^2} + \|\Pi_P\circ( \mathcal{F}(u)-\mathcal{F}\circ \Pi_P (u))\|_{L^2},
	$$
	Since $\Pi_P$ is  linear continuous operator and $\mathcal{F}$ is a continuous operator, the set $\mathcal{F}(K)$  is also compact. By the approximation property of the basis $\phi_i$ , it holds that
	$$
	\|\Pi_P v-v\|_{L^2} \le CP^{-{k\over d}}, \quad \forall v\in \mathcal{F}(K).
	$$
	Refer to \cite[Theorem 4.10]{fabian2011banach}, the projection operators are uniform bounded, i.e., there exists $C$ independent of $P$ such that
	$$
	\|\Pi_P\|\le C.
	$$
	Thus, we have :
	$$
	\begin{aligned}
		\|\Pi_P\circ( \mathcal{F}(u)-\mathcal{F}\circ \Pi_P(u))\|_{L^2} &\le C\|\mathcal{F}(u)-\mathcal{F}\circ \Pi_P (u)\|_{L^2}\\
		&\le CL\|u-\Pi_P (u)\|_{L^2}\\
		&\le CLP^{-{k\over d}}.
	\end{aligned}
	$$
	Collecting the above results, we achieve the desired results.
\end{proof}

Combining the above results, we now provide the proof of Theorem \ref{thm:approxop}.
\begin{proof}[proof of Theorem \ref{thm:approxop}]
	Using the triangle inequality, we have
	$$
	\|\mathcal{N}(u)-\mathcal{F}(u)\|_{\mathcal{Y}}\le \|\mathcal{N}(u)-\mathcal{F}_P(u)\|_{L^2} + \|\mathcal{F}_P(u)-\mathcal{F}(u)\|_{L^2}
	$$
	As shown in Lemma \ref{lem:approxop}, we have
	$$
	\sup_{u\in K}\|\mathcal{F}_P(u)-\mathcal{F}(u)\|_{\mathcal{Y}}\le C(L+1)P^{-{k\over d}}.
	$$
	Meanwhile, recalling the definition of the continuous function $\mathcal{G}_j$ in \eqref{eq:g},  we can express
	$$
	\mathcal{F}_P(u)-\mathcal{N}(u)= \sum_{j=1}^P (\mathcal{G}_j(\bm{\beta}(u))-\mathcal{G}_j(\bm{\beta}(u);\theta))\phi_j,
	$$
	where $\mathcal{G}_j(\cdot;\theta)$ is defined as
	$$\mathcal{G}_j(\circ;\theta):=\sum_{i=1}^W A_j^i\sigma(\omega_j^i\cdot\circ+b_j^i)$$ 
	with $\omega_j^i\in\mathbb{R}^P,~A_j^i,b_j^i\in \mathbb{R}$ and $\bm{\beta}(u)=((\phi_1,u),\cdots,(\phi_P,u))^T$. 
	
	Since $\mathcal{F}'$ is bounded, it follows from the definition that $\mathcal{G}_j\in W^{1,\infty}$. Consider ReLU as activation function, using \cite[Theorem 1]{mao2023rates}, it holds that
	
	$$
	\|\mathcal{G}_j-\mathcal{G}_j^\theta\|_{L^\infty} \le C\|\mathcal{G}_j\|_{W^{1,\infty}}\log(m)^{{1\over 2}+P}m^{-{1\over P}{P+2\over P+4}}.
	$$
	This implies
	$$
	\sup_{u\in K}\|\mathcal{F}_P(u)-\mathcal{N}(u)\|_{L^2} \le C(\sum_{j=1}^P\|\mathcal{G}_j\|_{W^{1,\infty}}^2)^{1/2}\log(m)^{{1\over 2}+P}m^{-{1\over P}{P+2\over P+4}}.
	$$
	Note that 
	$$
	\begin{aligned}
		\sum_{j=1}^P\|\mathcal{G}_j(\bm{\beta}(u)\|_{W^{1,\infty}}^2
		&\le \sum_{j=1}^\infty\|\mathcal{G}_j(\bm{\beta}(u)\|_{W^{1,\infty}}^2 \\
		& =\sup_{u\in K} \sum_{j=1}^\infty|\mathcal{G}_j'(\bm{\beta}(u)|^2+ |\mathcal{G}_j(\bm{\beta}(u)|^2\\
		& =\sup_{u\in K} \|\sum_{j=1}^\infty\mathcal{G}_j'(\bm{\beta}(u))\phi_j\|_{L^2}^2+ \|\sum_{j=1}^\infty\mathcal{G}_j(\bm{\beta}(u))\phi_j\|_{L^2}^2\\
		&=\sup_{u\in K} \|\mathcal{F}'(\Pi_Pu)\|_{L^2}^2 + \|\mathcal{F}(\Pi_Pu)\|_{L^2}^2 \le C(K,L),
	\end{aligned}
	$$
	for some constant $C$ depends on compact set $K$ since $\mathcal{F}$ 
	is uniform continuous and $\|\mathcal{F}'\|\le L$. In this way, we have
	$$
	\sup_{u\in K}\|\mathcal{F}_P(u)-\mathcal{N}(u)\|_{L^2} \le C\log(m)^{{1\over 2}+P}m^{-{1\over P}{P+2\over P+4}}.
	$$
	Combining the above results, we obtain
	$$
	\sup_{u \in K} \|\mathcal{N}(u) - \mathcal{F}(u)\|_{L^2} \leq C(L+1)P^{-{k\over d}}+C\log(m)^{{1\over 2}+P}m^{-{1\over P}{P+2\over P+4}}.
	$$
\end{proof}
Now, to analyze the error of the approximate solution $\widetilde{u}_P^n$, we consider the continuous variation problem of using  neural operator $\mathcal{N}$ to approximate $\mathcal{F}$, which gives the following problem: Find $\widetilde{u}\in L^2(0,T;H_0^1(\Omega))$ with $\widetilde{u}_t\in L^2(0,T;H^{-1}(\Omega))$ such that
\begin{equation}
	\label{eq:neuralreplace}
	(\widetilde{u}_t,v) + (D\nabla \widetilde{u},\nabla v) = (\mathcal{N}(\widetilde{u}),v)\quad t\in (0,T), ~ \forall v \in H_0^1(\Omega).
\end{equation}
for $ v \in H_0^1(\Omega)$ with $\widetilde{u}(x,0) = u_0(x)$.

\begin{lemma}\label{lem:err-sol}
	Let $u$ be the solution of \eqref{eq:baseeq} and $\widetilde{u}$ be the solution of \eqref{eq:neuralreplace}. Under the assumption of Theorem \ref{thm:approxop}, we have the following estimate
	\begin{equation}
		\underset{0 \leq t \leq T}{ \sup }	\|u(t)-\widetilde{u}(t)\|_{H_0^1(\Omega)} \le C\sup_{u}\|\mathcal{F}(u)-\mathcal{N}(u)\|_{L^2} T^{1/2}\exp \big(CLT\big).
	\end{equation}
\end{lemma}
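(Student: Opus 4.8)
The plan is to prove this as a stability estimate for the reaction--diffusion flow through an energy argument closed by Grönwall's inequality. I first introduce the error $e := u - \widetilde{u}$, which satisfies $e(0) = 0$ because both $u$ and $\widetilde{u}$ are launched from the same datum $u_0$. Subtracting the neural-operator weak form \eqref{eq:neuralreplace} from the exact weak form \eqref{eq:originvp} and pairing against $v \in H_0^1(\Omega)$ gives the error identity
\begin{equation*}
	(e_t, v) + (D\nabla e, \nabla v) = \big(\mathcal{F}(u) - \mathcal{N}(\widetilde{u}), v\big).
\end{equation*}
The decisive algebraic step is to split the forcing as $\mathcal{F}(u) - \mathcal{N}(\widetilde{u}) = \big(\mathcal{F}(u) - \mathcal{F}(\widetilde{u})\big) + \big(\mathcal{F}(\widetilde{u}) - \mathcal{N}(\widetilde{u})\big)$. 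The first bracket is a true increment of $\mathcal{F}$ and is therefore controlled by the Lipschitz bound $\|\mathcal{F}(u) - \mathcal{F}(\widetilde{u})\|_{L^2} \le L\|e\|_{L^2}$ implied by $\|\mathcal{F}'\| \le L$, while the second bracket is dominated uniformly by $\sup_{u}\|\mathcal{F}(u) - \mathcal{N}(u)\|_{L^2}$, the operator approximation error estimated in Theorem~\ref{thm:approxop}.

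I would then run the energy estimate. Taking $v = e$ and using coercivity of the diffusion term yields, after Cauchy--Schwarz and Young,
\begin{equation*}
	\tfrac{1}{2}\tfrac{d}{dt}\|e\|_{L^2}^2 + D\|\nabla e\|_{L^2}^2 \le \Big(L + \tfrac{1}{2}\Big)\|e\|_{L^2}^2 + \tfrac{1}{2}\sup_{u}\|\mathcal{F}(u) - \mathcal{N}(u)\|_{L^2}^2 .
\end{equation*}
To promote the left-hand side to the full $H_0^1$ norm demanded by the statement, rather than merely the $L^2$ norm and a time-integrated gradient, I would additionally test with $v = e_t$, use $(D\nabla e, \nabla e_t) = \tfrac{D}{2}\tfrac{d}{dt}\|\nabla e\|_{L^2}^2$, and absorb the resulting $\|e_t\|_{L^2}^2$ on the left by Young's inequality. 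Combining this with Poincaré's inequality (legitimate since $e \in H_0^1$, so that $\|\nabla e\|_{L^2}$ dominates the full $H_0^1$ norm) collapses both estimates into a scalar differential inequality $\Phi'(t) \le c_1 \Phi(t) + c_2 \sup_{u}\|\mathcal{F}(u) - \mathcal{N}(u)\|_{L^2}^2$ for the squared energy $\Phi$, with $c_2 = \mathcal{O}(1)$.

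Since $\Phi(0) = 0$, Grönwall's inequality then gives $\Phi(t) \le c_2 \sup_{u}\|\mathcal{F}(u) - \mathcal{N}(u)\|_{L^2}^2 \int_0^t e^{c_1(t - s)}\,ds \le c_2\, t\, e^{c_1 t}\sup_{u}\|\mathcal{F}(u) - \mathcal{N}(u)\|_{L^2}^2$; taking square roots and then the supremum over $t \in [0, T]$ delivers the advertised factor $T^{1/2}\exp(CLT)$. The one genuinely delicate point, and the step I expect to be the main obstacle, is the nonlinear term together with the norm in which the bound is asserted: the Lipschitz estimate pairs most comfortably with $v = e$, which already produces an exponent linear in $L$, exactly the $\exp(CLT)$ form. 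Pushing the bound up to the pointwise $H_0^1$ norm through the $v = e_t$ test makes the Lipschitz contribution enter as $L\|e\|_{L^2}\|e_t\|_{L^2}$, which after Young's inequality threatens an exponent quadratic in $L$ unless the $L^2$ bound on $e$ obtained from the first estimate is substituted before invoking Grönwall. Arranging the two energy estimates so that the gradient bound inherits the linear-in-$L$ exponential from the $L^2$ bound is the crux; coercivity, Poincaré, and Grönwall are otherwise standard.
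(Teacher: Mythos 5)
Your proposal is correct and follows essentially the same route as the paper: both subtract the two weak formulations, split the forcing as $\big(\mathcal{F}(u)-\mathcal{F}(\widetilde u)\big)+\big(\mathcal{F}(\widetilde u)-\mathcal{N}(\widetilde u)\big)$, control the first piece by the Lipschitz bound $L$ and the second by the uniform operator error from Theorem~\ref{thm:approxop}, and close with Gr\"onwall to produce the $T^{1/2}\exp(CLT)$ factor. The only difference is that the paper invokes the parabolic regularity estimate $\sup_{0\le t\le T}\|e(t)\|_{H^1}\le C\|f\|_{L^2(0,T;L^2)}$ as a black box (citing Evans), whereas you derive it by testing with $e$ and $e_t$; your remark about substituting the $L^2$ bound before running the gradient estimate is precisely the bookkeeping needed to keep the exponent linear in $L$.
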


\begin{proof}
	Following the regularity results of elliptic equations \cite{evans2022partial}, the following inequality holds:
	\begin{equation}
		\begin{aligned}
			\underset{0 \leq t \leq T}{ \sup }\|u(t)-\widetilde{u}(t)\|_{H^1(\Omega)}
			& \leq C\|\mathcal{F}(u)-\mathcal{N}(\widetilde{u})\|_{L^2\left(0, T ; L^2(\Omega)\right)}\\
			& \leq C\|\mathcal{F}(\widetilde{u})-\mathcal{N}(\widetilde{u})\|_{L^2\left(0, T ; L^2(\Omega)\right)}+C\|\mathcal{F}(u)-\mathcal{F}(\widetilde{u})\|_{L^2\left(0, T ; L^2(\Omega)\right)}\\
			&\leq C T^{1/2}\sup_{u}\|\mathcal{F}(u)-\mathcal{N}(u)\|_{L^2}+CL\|u-\widetilde{u}\|_{L^2(0, T ; L^2(\Omega))}\\
			&\leq C T^{1/2}\sup_{u}\|\mathcal{F}(u)-\mathcal{N}(u)\|_{L^2}+CL\|u-\widetilde{u}\|_{L^2(0, T ; H^1_0(\Omega))},
		\end{aligned}
	\end{equation}
	Utilizing Gronwall’s inequality on the above result, we obtain
	\begin{equation}
		\underset{0 \leq t \leq T}{ \sup }	\|u(t)-\widetilde{u}(t)\|_{H^1(\Omega)} \le C\sup_{u}\|\mathcal{F}(u)-\mathcal{N}(u)\|_{L^2} T^{1/2}\exp \big(CLT\big).
	\end{equation}
\end{proof}

\begin{remark}
	This theorem highlights that the error may grow as time progresses. To achieve a more accurate solution using the neural operator, it is essential for the neural operator to provide a more precise approximation of the nonlinear term.
\end{remark}

Consider time step $t_n=n\tau$ ($n = 0, 1,\cdots N$) be a uniform partition of $[0, T]$ with $\tau = T/N$, we then use semi-implicit Euler scheme for time discretization of \eqref{eq:neuralreplace}: find $\widetilde{w}^n(x)\in H^1_0(\Omega)$ such that
\begin{equation}\label{eq:semidiscret}
	(\frac{\widetilde{w}^{n}-\widetilde{w}^{n-1}}{\tau},v) +(D\nabla \widetilde{w}^{n},\nabla v) = (\mathcal{N}(\widetilde{w}^{n-1}),v),
\end{equation}
for any $v\in H_0^1(\Omega)$ and $n=1,2,\cdots,N$ with $\widetilde{w}^0(x) = u^0(x)$.

For the discrete variational problem, the approximation property of the space $V_P$ is: Let $u\in H^k(\Omega)\cap H_0^1(\Omega)$, it holds that
\begin{equation}
	\min_{u_P\in V_P}	\| u -u_P\|_{L^2} = \mathcal{O}(P^{-{k\over d}}).
\end{equation}

Through careful analysis under the regularity requirements for $\mathcal{N}$, we can derive the following  estimate.

\begin{lemma}\label{lem:err-dis}
	Let  $\widetilde{u}$ be solution of \eqref{eq:neuralreplace} satisfying $\widetilde{u}\in L^2(0,T;H_0^1(\Omega))\cap L^2(0,T;H^k(\Omega))$ and $\widetilde{u}^n$ be solution of \eqref{eq:neuralreplace} restricted on $V_P$, it holds that
	\begin{equation}\label{eq:errorestimate}
		\sup_{0\le n\le N}	\|\widetilde{u}(t_n)-\widetilde{u}^n\|_{L^2(\Omega)} \le C(\tau+P^{-{k\over d}}).
	\end{equation}
	where $C$ is is independent of $\tau $ and $P$. 
\end{lemma}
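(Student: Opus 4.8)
The plan is to follow a standard parabolic error-splitting argument, exploiting the special structure of the eigenfunction basis to avoid introducing a separate elliptic projection. Let $\Pi_P$ denote the $L^2(\Omega)$-orthogonal projection onto $V_P=\mathrm{span}\{\phi_i\}_{i=1}^P$, and split the error at time $t_n$ as
$$\widetilde{u}(t_n)-\widetilde{u}^n=\big(\widetilde{u}(t_n)-\Pi_P\widetilde{u}(t_n)\big)+\big(\Pi_P\widetilde{u}(t_n)-\widetilde{u}^n\big)=:\rho^n+\theta^n,$$
where $\theta^n\in V_P$. The key observation is that, since the $\phi_i$ are eigenfunctions of $-\Delta$, for every $w\in H_0^1(\Omega)$ and $v\in V_P$ one has $(\nabla\Pi_P w,\nabla v)=(\nabla w,\nabla v)$; that is, $\Pi_P$ already acts as the Ritz projection, so the stiffness contribution of $\rho^n$ drops out exactly once we test against $V_P$. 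The projection error is controlled directly by the approximation property of $V_P$, giving $\|\rho^n\|_{L^2}\le CP^{-k/d}$, so it remains only to estimate $\theta^n$.

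Next I would derive the error equation for $\theta^n$. Writing $\bar{\partial}w^n:=(w^n-w^{n-1})/\tau$ and testing the continuous problem \eqref{eq:neuralreplace} at $t_n$ and the $V_P$-restricted scheme \eqref{eq:semidiscret} against the same $v\in V_P$, subtraction together with the Ritz identity yields
$$(\bar{\partial}\theta^n,v)+(D\nabla\theta^n,\nabla v)=(-\bar{\partial}\rho^n+\omega^n,v)+\big(\mathcal{N}(\widetilde{u}(t_n))-\mathcal{N}(\widetilde{u}^{n-1}),v\big),$$
where $\omega^n:=\bar{\partial}\widetilde{u}(t_n)-\widetilde{u}_t(t_n)$ is the temporal truncation error. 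The integral form of Taylor's theorem gives $\|\omega^n\|_{L^2}\le C\tau$, and telescoping gives $\sum_{j=1}^n\tau\|\bar{\partial}\rho^j\|_{L^2}\le\int_0^{t_n}\|(\mathrm{I}-\Pi_P)\widetilde{u}_t\|_{L^2}\,ds\le CP^{-k/d}$. The nonlinear term is split as $\mathcal{N}(\widetilde{u}(t_n))-\mathcal{N}(\widetilde{u}(t_{n-1}))+\mathcal{N}(\widetilde{u}(t_{n-1}))-\mathcal{N}(\widetilde{u}^{n-1})$; Lipschitz continuity of $\mathcal{N}$ bounds the first difference by $C\tau$ and the second by $C(\|\rho^{n-1}\|_{L^2}+\|\theta^{n-1}\|_{L^2})$.

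I would then close the estimate by an energy argument. Choosing $v=\theta^n$ and using the monotonicity bound $(\bar{\partial}\theta^n,\theta^n)\ge\|\theta^n\|_{L^2}\big(\|\theta^n\|_{L^2}-\|\theta^{n-1}\|_{L^2}\big)/\tau$ together with Cauchy–Schwarz, the nonnegative term $(D\nabla\theta^n,\nabla\theta^n)$ can be dropped to obtain the recursion
$$\|\theta^n\|_{L^2}\le(1+C\tau)\|\theta^{n-1}\|_{L^2}+\tau\big(\|\omega^n\|_{L^2}+\|\bar{\partial}\rho^n\|_{L^2}+C\tau+C\|\rho^{n-1}\|_{L^2}\big).$$
With the natural choice $\widetilde{u}^0=\Pi_P u_0$, so that $\theta^0=0$, the discrete Gronwall lemma yields $\|\theta^n\|_{L^2}\le C(\tau+P^{-k/d})$, and the triangle inequality with $\|\rho^n\|_{L^2}\le CP^{-k/d}$ gives the claim.

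The main obstacle I anticipate is not the discrete machinery but the regularity bookkeeping that converts the consistency terms into the stated orders: bounding $\omega^n$ requires $\widetilde{u}_{tt}\in L^1(0,T;L^2(\Omega))$, the telescoped $\bar{\partial}\rho^n$ term requires $\widetilde{u}_t\in L^1(0,T;H^k(\Omega))$, and $\|\rho^n\|_{L^2}$ requires $\widetilde{u}\in L^\infty(0,T;H^k(\Omega))$. Establishing these for the neural-operator problem \eqref{eq:neuralreplace} relies on parabolic regularity theory applied with the smoothness of $u_0$ and of $\mathcal{N}$, and one must verify that the Lipschitz constant of $\mathcal{N}$ entering the Gronwall factor is bounded uniformly in $P$, so that the final constant $C$ is genuinely independent of both $\tau$ and $P$.
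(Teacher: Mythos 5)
Your argument is correct and is the standard fully discrete Galerkin estimate in the style of Thom\'ee, which is precisely the reference the paper points to; note, however, that the paper does not actually write out a proof of this lemma. It only sketches a different splitting: it introduces the intermediate time-semidiscrete solution $\widetilde{w}^n \in H_0^1(\Omega)$ of \eqref{eq:semidiscret} and proposes to bound $\widetilde{u}(t_n)-\widetilde{w}^n$ (pure Euler time error, giving $O(\tau)$) and $\widetilde{w}^n-\widetilde{u}^n$ (pure spatial Galerkin error, giving $O(P^{-k/d})$) separately by the triangle inequality. You instead fold both sources of error into a single discrete energy recursion via the decomposition $\rho^n+\theta^n$. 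The two routes are equivalent in substance; yours is more self-contained, and your observation that the $L^2$ projection onto the eigenfunction space coincides with the Ritz projection (so no separate elliptic projection is needed) is exactly the structural advantage of this basis. One further simplification you missed: since $\Pi_P$ is the $L^2$-orthogonal projection and $v\in V_P$, the term $(\bar{\partial}\rho^n,v)$ vanishes identically, so the telescoping bound on $\bar{\partial}\rho^j$ and the attendant requirement $\widetilde{u}_t\in L^1(0,T;H^k(\Omega))$ can be dropped entirely. Your closing caveats are well taken and apply equally to the paper's version: the stated hypothesis $\widetilde{u}\in L^2(0,T;H^k(\Omega))$ is too weak to control $\|\rho^n\|_{L^2}$ pointwise in time (one needs $L^\infty$ in time), the consistency term $\omega^n$ requires $\widetilde{u}_{tt}\in L^1(0,T;L^2(\Omega))$, and the uniformity in $P$ of the Lipschitz constant of $\mathcal{N}$ entering the Gronwall factor is asserted nowhere in the paper; these are genuine gaps in the lemma as stated rather than in your argument.
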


The estimate is fairly standard, relying on the analysis of the explicit Euler time-stepping method and the approximation properties of the finite-dimensional eigenfunction space $V_P$, combined with the triangle inequality applied to $\widetilde{u}(t_n)-\widetilde{w}^n$ and $\widetilde{w}^n-\widetilde{u}^n$. For brevity, we omit the proof and refer to \cite{thomee2007galerkin} for a similar argument.

Now we can give the proof for Theorem \ref{thm:err-sol}.
\begin{proof}[proof of Theorem \ref{thm:err-sol}]
	Using the triangle inequality for Lemma \ref{lem:err-sol} and \ref{lem:err-dis}, there exists a neural operator $\mathcal{N}$ such that
	$$
	\begin{aligned}
		\|\widetilde{u}^n-u(t_n)\|_{L^2} &\le \|\widetilde{u}^n-\widetilde{u}(t_n)\|_{L^2} + \|\widetilde{u}(t_n)-u(t_n)\|_{L^2}\\
		&\le C(\tau+P^{-{k\over d}}) + C\sup_{u}\|\mathcal{F}(u)-\mathcal{N}(u)\|_{L^2} T^{1/2}\exp(CLT).
	\end{aligned}
	$$
	Here we achieve the desired result.
\end{proof}

We also have the model approximation results, namely, the error between the PDE residual $\mathcal{R}_P^n(u)$ and the nonlinear term $\mathcal{F}(u(t_n))$.
\begin{lemma}\label{thm:residual}
	Assume the solution of problem \eqref{eq:baseeq} satisfy $u_{tt}\in H^k(\Omega)$ and $\Delta u \in H^k(\Omega)$. The PDE residual $\mathcal{R}_P^n(u)$ satisfies
	\begin{equation}
		\|\mathcal{R}^n(u)\cdot \Phi_P-\mathcal{F}(u(t_n))\|_{L^2}\le C(\tau + P^{-{k\over d}}),
	\end{equation}
	where $\Phi_P=(\phi_1,\cdots,\phi_P)^T$ and $\tau=\max_n(t_n-t_{n-1})$.
\end{lemma}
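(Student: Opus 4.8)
The plan is to first rewrite the scalar field $\mathcal{R}^n(u)\cdot\Phi_P=\sum_{i=1}^P\mathcal{R}_i^n(u)\phi_i$ in terms of the $L^2$-orthogonal projection $\Pi_P$ onto $V_P$, and then compare it against $\mathcal{F}(u(t_n))$ using the PDE itself. Writing $\beta_i^n(u)=(u(t_n),\phi_i)$, the difference-quotient term $\sum_i\frac{\beta_i^n-\beta_i^{n-1}}{\tau}\phi_i$ is exactly $\tau^{-1}\big(\Pi_P u(t_n)-\Pi_P u(t_{n-1})\big)$. For the diffusion term the crucial identity is that, since $-\Delta\phi_i=\lambda_i\phi_i$ and $\phi_i$ satisfies the homogeneous Dirichlet boundary condition, integration by parts gives $\lambda_i\beta_i^n=\lambda_i(u(t_n),\phi_i)=(u(t_n),-\Delta\phi_i)=(-\Delta u(t_n),\phi_i)$. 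Hence $D\sum_i\lambda_i\beta_i^n\phi_i=D\,\Pi_P(-\Delta u(t_n))$, so that
\[
\mathcal{R}^n(u)\cdot\Phi_P=\frac{\Pi_P u(t_n)-\Pi_P u(t_{n-1})}{\tau}-D\,\Pi_P\Delta u(t_n).
\]

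Next I would insert the PDE in the form $\mathcal{F}(u(t_n))=u_t(t_n)-D\Delta u(t_n)$ and split the error into a spatial and a temporal part. The diffusion mismatch equals $D(I-\Pi_P)\Delta u(t_n)$, which is bounded by $CP^{-k/d}$ via the spectral approximation property of $V_P$ quoted before Lemma \ref{lem:err-dis}, since $\Delta u(t_n)\in H^k$ by hypothesis. For the time-difference term I would add and subtract $\Pi_P u_t(t_n)$, writing
\[
\frac{\Pi_P u(t_n)-\Pi_P u(t_{n-1})}{\tau}-u_t(t_n)=\Pi_P\!\left(\frac{u(t_n)-u(t_{n-1})}{\tau}-u_t(t_n)\right)+(\Pi_P-I)u_t(t_n).
\]
The first piece is controlled using the uniform bound $\|\Pi_P\|\le C$ together with the Taylor-with-integral-remainder estimate $\big\|\tau^{-1}(u(t_n)-u(t_{n-1}))-u_t(t_n)\big\|_{L^2}\le C\tau\sup_t\|u_{tt}(t)\|_{L^2}$, finite because $u_{tt}\in H^k\subset L^2$; the second piece is bounded by $CP^{-k/d}$, noting that $u_t(t_n)\in H^k$ follows from the PDE $u_t=D\Delta u+\mathcal{F}(u)$ with $\Delta u\in H^k$ and $\mathcal{F}(u)\in H^k$. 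Collecting the three contributions yields the claimed bound $C(\tau+P^{-k/d})$.

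The step I expect to require the most care is the integration-by-parts identity $\lambda_i\beta_i^n=(-\Delta u(t_n),\phi_i)$, since it is the only place where the boundary conditions and the elliptic regularity of $u(t_n)$ enter: one must verify that the boundary terms vanish (guaranteed here because $\phi_i$ vanishes on $\partial\Omega$) and that $-\Delta u(t_n)$ is a genuine $L^2$ function, which follows from $\Delta u\in H^k\subset L^2$. The remaining ingredients—uniform boundedness of $\Pi_P$ (already established in the paper) and the $P^{-k/d}$ approximation rate of $V_P$ on $H^k$—are reused verbatim, while the temporal estimate is the textbook first-order consistency of the backward Euler quotient. For boundary conditions other than homogeneous Dirichlet, the same identity holds with the corresponding eigenproblem, so only this first step would need to be re-examined.
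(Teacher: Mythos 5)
Your proof is correct and follows essentially the same route as the paper's: identify $\mathcal{R}^n(u)\cdot\Phi_P$ with $\Pi_P\bigl(\tau^{-1}(u(t_n)-u(t_{n-1}))-D\Delta u(t_n)\bigr)$, bound the time-quotient consistency error by $C\tau$ using $u_{tt}\in L^2$, and bound the projection error of $u_t-D\Delta u$ by $CP^{-k/d}$ via the approximation property of $V_P$. The only difference is that you spell out two steps the paper uses implicitly, namely the integration-by-parts identity $\lambda_i\beta_i^n=(-\Delta u(t_n),\phi_i)$ (where, strictly, the boundary term $\int_{\partial\Omega}u\,\partial_n\phi_i$ vanishes because $u$, not only $\phi_i$, satisfies the homogeneous boundary condition) and the regularity $u_t(t_n)\in H^k$ deduced from the PDE.
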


\begin{proof}
	We start with the governing equation:
	$$
	u_t(t_n)-D\Delta u(t_n) = \mathcal{F}(u(t_n)).
	$$
	Let  $\Pi_P$ denote the projection onto $V_P$ , the approximation property ensures that
	$$
	\|(I-\Pi_P)\left(u_t(t_n) -D\Delta u(t_n) \right)\|_{L^2} \le CP^{-{k\over d}},
	$$ 
	since $u_t,\Delta u\in H^k(\Omega$). Using the forward Euler scheme for the time derivative, we have the estimate
	$$
	|({u(t_n)-u(t_{n-1})\over \tau}-u_t(t_n),\phi_i)| \le C\tau |(u_{tt},\phi_i)|.
	$$
	This implies 
	$$
	\|\Pi_P (u_t(t_n) - {u(t_n)-u(t_{n-1})\over \tau})\|^2 \le C\tau\sum_{i=1}^P |(u_{tt},\phi_i)|^2 \le C\tau^2 \|u_{tt}(t_n)\|_{L^2}^2.
	$$
	From the definition of $\mathcal{R}_P^n$, it follows that
	$$
	\mathcal{R}_P^n\cdot\Phi_P=\Pi_P \left({u(t_n)-u(t_{n-1})\over \tau}-D\Delta u(t_n)\right).
	$$
	Finally, applying the triangle inequality and combining the results above, we obtain
	$$
	\begin{aligned}
		\|\mathcal{R}_P^n\cdot\Phi_P-\mathcal{F}(u(t_n))\|_{L^2} &=\|\mathcal{R}_P^n\cdot\Phi_P-(u_t(t_n)-D\Delta u(t_n))\|_{L^2}\\
		&\le \|\mathcal{R}_P^n\cdot\Phi_P - \Pi_P (u_t(t_n)-D\Delta u(t_n))\|_{L^2}\\
        &\quad+ \|(I-\Pi_P )(u_t(t_n)-D\Delta u(t_n))\|_{L^2}\\
		&\le C(\tau +P^{-{k\over d}}).
	\end{aligned}
	$$
\end{proof}
This result shows that our residual term $\mathcal{R}$ provides a good approximation of the nonlinear term $\mathcal{F}$. Consequently, when considering the residual loss $L^R$ in the optimization process, the trained neural operator $\mathcal{N}$ can efficiently approximate $\mathcal{F}$.

\begin{remark}
	In practice, the available data may have limited regularity, such as belonging to $L^\infty(\Omega)$. By the Sobolev embedding theorem  \cite{adams2003sobolev}, it holds that $L^\infty(\Omega) \hookrightarrow H^k(\Omega)$ for $k<{d\over 2}$ for $k<{d\over 2}$, thus we have 
	\begin{equation}
		\|\mathcal{R}^n(u)\cdot \Phi_P-\mathcal{F}(u(t_n))\|_{L^2}\le C(\tau + P^{-{1\over 2}+\xi}),
	\end{equation}
	for arbitrary $\xi >0$. For a uniform time step $\tau = \frac{T}{N}$, with sufficiently large $N$ and $P$, the error can be made arbitrarily small.
\end{remark}

In this way, given data $\mathcal{R}^n(u)$, there exists a neural operator $\mathcal{N}$ such that
\begin{equation}
	\begin{aligned}
		\|\mathcal{R}^n(u)\cdot\Phi_P-\mathcal{N}(u)\|_{L^2} 
		&\le C(\tau+P^{-{k\over d}}) +C\log(m)^{{1\over 2}+P}m^{-{1\over P}{P+2\over P+4}}.
	\end{aligned}
\end{equation}

\section{Optimization Analysis}\label{appdendix:opt}
The proof of Theorem~\ref{thm:opt} is divided into two parts, corresponding to the two cases presented in the theorem.
\begin{proof}[proof of Theorem \ref{thm:opt}]

Consider only the residual loss $L^R$, we consider the following estimate using the absolute squared loss for $M$ samples with $N$ time steps
\begin{equation}
    L^R(\theta):=  {1\over NM}\sum_{n=1}^N\sum_{i=1}^M \|\mathcal{R}^n(u_i)- \mathcal{G}({\bm{\beta}}^{n-1}({u}_i;\theta))\|_{l^2}^2.
\end{equation}

    Without loss of generality, we assume $D=1$. We define the function
    \begin{equation}
        L_i^n(\theta) = \|\mathcal{R}^n(u_i)- \mathcal{G}({\bm{\beta}}^{n-1}({u}_i;\theta))\|_{l^2}^2.
    \end{equation}
    The loss function can be rewritten as
	\begin{equation}
		L(\theta) = {1\over MN}\sum_{n=1}^N\sum_{i=1}^M L_i^n(\theta).
	\end{equation}
	Consider $\mathcal{G} : \mathbb{R}^P \to \mathbb{R}^P$ as a shallow neural network with width $m$ and examine its behavior under gradient flow with respect to its parameters $\theta$, using the loss function $L(\theta)$:
	\begin{equation}
		\frac{d}{ds}\theta(s)=-\nabla_\theta L(\theta(s)).
	\end{equation}
	The neural Tangent Kernel (NTK) \cite{jacot2018neural} describes the dynamics of this process in the infinite-width limit $m\rightarrow\infty$. First, we have
	\begin{equation}
		\frac{d}{ds}\mathcal{G}(\bm{\beta}_i^q;\theta(s)) = -{1\over NM}\nabla_\theta \mathcal{G}(\bm{\beta}_i^q;\theta(s))\sum_{n=1}^N\sum_{j=1}^M  \nabla_\theta \mathcal{G}(\bm{\beta}_j^n;\theta(s))^T\partial_{\mathcal{G}}{L}_j^n.
	\end{equation}
	The standard NTK of $\mathcal{G}$ for $\bm{x},\bm{y}\in\mathbb{R}^P$ is defined as
	\begin{equation}
		\nabla_\theta \mathcal{G}(\bm{y};\theta)  \nabla_\theta \mathcal{G}(\bm{x};\theta)^T \rightarrow \mathbb{E}_{\theta\sim \mu}(\nabla_\theta \mathcal{G}(\bm{y};\theta)  \nabla_\theta \mathcal{G}(\bm{x};\theta)^T )\in\mathbb{R}^{P\times P} .
	\end{equation}
	where $\mu$ denotes the initialization distribution of $\theta$.
	
	Note that for a stacked neural network , i.e., $\mathcal{G}_i(\bm{x})=\sum_{i=1}^m a_i^j\sigma(\omega_i^j\cdot x+b_i^j),~i=1,\cdots,P$, we have
	$$\mathbb{E}(\nabla_\theta\mathcal{G}_i(\bm{y})\nabla_\theta\mathcal{G}_j(\bm{x})^T)=0, \quad i\neq j. $$
	This implies the existence of a function $\gamma(\bm{y},\bm{x})=\mathbb{E}(\nabla_\theta\mathcal{G}_i(\bm{y})\nabla_\theta\mathcal{G}_i(\bm{x})^T)$ since $\mathcal{G}_i$ are i.i.d. and $s^*>0$ such that
	\begin{equation}
		\frac{d}{ds}\mathcal{G}(\bm{\beta}_i^q;\theta(s)) = -{1\over NM}\sum_{n=1}^N\sum_{j=1}^M  \gamma(\bm{\beta}_i^q,\bm{\beta}_j^n)\partial_{\mathcal{G}}{L}_j^n.
	\end{equation}
	for some $s\in [0,s^*]$ with high probability \cite{yang2025homotopy} as $m\rightarrow\infty$. For the iteration of the loss function, we have
	\begin{equation}
		\begin{aligned}
			{d\over ds}L(\theta(s)) &= {1\over NM}\sum_{n=1}^N\sum_{i=1}^M\partial_{\mathcal{G}}L_i^n\cdot \frac{d}{ds}\mathcal{G}=-{1\over N^2M^2}\sum_{q,n=1}^N\sum_{i,j=1}^M \partial_{\mathcal{G}}L_i^q\gamma(\bm{\beta}_i^q,\bm{\beta}_j^n)\partial_{\mathcal{G}}{L}_j^n.
		\end{aligned}
	\end{equation}
	
	Rewriting the right-hand side into matrix form
	\begin{equation}
		{d\over ds}L(\theta(s)) = -{1\over N^2M^2}e^TKe,
	\end{equation}
	where $K\in\mathbb{R}^{NMP\times NMP}$ and $e\in\mathbb{R}^{NMP}$ are defined as
	\begin{equation}
		K = I_P\otimes\Gamma , \quad \Gamma=(\gamma_{rs})\in\mathbb{R}^{NM\times NM}, \quad e = (e_1,e_2,\cdots,e_P)^T\in\mathbb{R}^{NMP}.
	\end{equation}
        with $\gamma_{rs}=\gamma(\bm{\beta}_i^q,\bm{\beta}_j^n)$, $ e_p = ((e_p)_s)\in\mathbb{R}^{NM}$ and $(e_p)_s=\partial_{\mathcal{G}}L_j^n$ for $r=(i-1)N+q,\ s=(j-1)N+n$, and $p=1,\cdots,P$.
	The eigenvalues of $K$ are determined by those of $\Gamma$, which is the NTK of a shallow neural network under an $L^2$ fitting setup. Its positive definiteness ensures that the minimum eigenvalue, $\gamma > 0$, exists \cite{jacot2018neural}. Thus we have
	\begin{equation}
		{d\over ds}L(\theta(s)) \le  -{\gamma\over M^2} e^Te=-{\gamma\over N^2M^2}\sum_{n=1}^N\sum_{j=1}^M |\partial_{\mathcal{G}}{L}_j^n|^2
	\end{equation}
	Expanding $\partial_{\mathcal{G}} L_j^n$ and using the definition of $\mathcal{R}_j^n$:
	\begin{equation}\label{eq:Ljn}
		\begin{aligned}
			\partial_{\mathcal{G}}L_j 
			& = -2(\mathcal{R}^n_j-\mathcal{G}(\bm{\beta}_j^{n-1};\theta))
		\end{aligned}
	\end{equation}
	This gives
	\begin{equation}
		|\partial_{\mathcal{G}}{L}_j^n|^2 = 4(\mathcal{R}^n_j-\mathcal{G}(\bm{\beta}_j^{n-1};\theta))^T(\mathcal{R}^n_j-\mathcal{G}(\bm{\beta}_j^{n-1};\theta))=4L_j^n.
	\end{equation}
	Noting $L={1\over NM}\sum_{n=1}^N\sum_{j=1}^M L_j^n$, we obtain
	\begin{equation}
		{d\over ds}L(\theta(s)) \le -{4\gamma\over NM} L.
	\end{equation}
	This implies 
	\begin{equation}
		L(\theta(s)) \le L(\theta(0))\exp(-{4\gamma\over NM}s).
	\end{equation}

When considering both residual loss and data $L^2$ loss for one-step, we consider the following estimates using the absolute squared loss for $M$ samples
\begin{equation}
	L^D(\theta):={1\over M}\sum_{i=1}^M \| \bm{\beta}^1(u^1_i)- \widetilde{\bm{\beta}}^1((\widetilde{u}^1_i);\theta)\|_{l^2}^2,
\end{equation}
\begin{equation}
	L^R(\theta):=  {1\over M}\sum_{i=1}^M \|\mathcal{R}^1(u_i)- \mathcal{G}({\bm{\beta}}^0({u}_i;\theta))\|_{l^2}^2.
\end{equation}
We have
	\begin{equation}
		\widetilde{\bm{\beta}}^1 = (1+\tau\Lambda_P)^{-1}(\widetilde{\bm{\beta}}^0+\tau\mathcal{G}(\widetilde{\bm{\beta}}^0)).
	\end{equation}
	Note that the initial condition gives $\widetilde{\bm{\beta}}^0=\bm{\beta}^0$, 
	we define the function 
	\begin{equation}
		L_i(\theta)=L_i^R(\theta)+L_i^D(\theta):= \|\mathcal{R}^1_i - \mathcal{G}(\bm{\beta}_i^0;\theta)\|_{l^2}^2 + \|\bm{\beta}_i^1-(1+\tau\Lambda_P)^{-1}(\bm{\beta}_i^0+\tau\mathcal{G}(\bm{\beta}_i^0;\theta)\|_{l^2}^2
	\end{equation}
	where $\bm{\beta}_i^n= \bm{\beta}^n(u_i^n)$ and $\mathcal{R}_i^1= \mathcal{R}^1(u_i^1)$ for $i=1,2,\cdots,M$ and $n=0,1$.
	The loss function can be rewritten as
	\begin{equation}
		L(\theta) = {1\over M}\sum_{i=1}^M L_i(\theta).
	\end{equation}
	Consider $\mathcal{G} : \mathbb{R}^P \to \mathbb{R}^P$ as a shallow neural network with width $m$ and examine its behavior under gradient flow with respect to its parameters $\theta$, using the loss function $L(\theta)$:
	\begin{equation}
		\frac{d}{ds}\theta(s)=-\nabla_\theta L(\theta(s)).
	\end{equation}
	The neural Tangent Kernel (NTK) \cite{jacot2018neural} describes the dynamics of this process in the infinite-width limit $m\rightarrow\infty$. First, we have
	\begin{equation}
		\frac{d}{ds}\mathcal{G}(\bm{\beta}_i^0;\theta(s)) = -{1\over M}\nabla_\theta \mathcal{G}(\bm{\beta}_i^0;\theta(s))\sum_{j=1}^M  \nabla_\theta \mathcal{G}(\bm{\beta}_j^0;\theta(s))^T\partial_{\mathcal{G}}{L}_j.
	\end{equation}
	The standard NTK of $\mathcal{G}$ for $\bm{x},\bm{y}\in\mathbb{R}^P$ is defined as
	\begin{equation}
		\nabla_\theta \mathcal{G}(\bm{y};\theta)  \nabla_\theta \mathcal{G}(\bm{x};\theta)^T \rightarrow \mathbb{E}_{\theta\sim \mu}(\nabla_\theta \mathcal{G}(\bm{y};\theta)  \nabla_\theta \mathcal{G}(\bm{x};\theta)^T )\in\mathbb{R}^{P\times P} .
	\end{equation}
	where $\mu$ denotes the initialization distribution of $\theta$.
	
	Note that for a stacked neural network , i.e., $\mathcal{G}_i(\bm{x})=\sum_{i=1}^m a_i^j\sigma(\omega_i^j\cdot x+b_i^j),~i=1,\cdots,P$, we have
	$$\mathbb{E}(\nabla_\theta\mathcal{G}_i(\bm{y})\nabla_\theta\mathcal{G}_j(\bm{x})^T)=0, \quad i\neq j. $$
	This implies the existence of a function $\gamma(\bm{y},\bm{x})=\mathbb{E}(\nabla_\theta\mathcal{G}_i(\bm{y})\nabla_\theta\mathcal{G}_i(\bm{x})^T)$ since $\mathcal{G}_i$ are i.i.d. and $s^*>0$ such that
	\begin{equation}
		\frac{d}{ds}\mathcal{G}(\bm{\beta}_i^0;\theta(s)) = -{1\over M}\sum_{j=1}^M  \gamma(\bm{\beta}_i^0,\bm{\beta}_j^0)\partial_{\mathcal{G}}{L}_j.
	\end{equation}
	for some $s\in [0,s^*]$ with high probability \cite{yang2025homotopy} as $m\rightarrow\infty$. For the iteration of the loss function, we have
	\begin{equation}
		\begin{aligned}
			{d\over ds}L(\theta(s)) &= {1\over M}\sum_{i=1}^M\partial_{\mathcal{G}}L_i\cdot \frac{d}{ds}\mathcal{G}=-{1\over M^2}\sum_{i,j=1}^M \partial_{\mathcal{G}}L_i\gamma(\bm{\beta}_i^0,\bm{\beta}_j^0)\partial_{\mathcal{G}}{L}_j.
		\end{aligned}
	\end{equation}
	
	Rewriting the right-hand side into matrix form
	\begin{equation}
		{d\over ds}L(\theta(s)) = -{1\over M^2}e^TKe
	\end{equation}
	where $K\in\mathbb{R}^{MP\times MP}$ and $e\in\mathbb{R}^{MP}$ are defined as
	\begin{equation}
		K = I_P\otimes\Gamma ,\quad \Gamma=(\gamma(\bm{\beta}_i^0,\bm{\beta}_j^0))\in\mathbb{R}^{M\times M}
	\end{equation}
	\begin{equation}
		e = (e_1,e_2,\cdots,e_P)^T\in\mathbb{R}^{MP} \quad \text{ with }\quad  e_i = (\partial_{\mathcal{G}_i}{L}_1,\cdots,\partial_{\mathcal{G}_i}{L}_M)^T\in\mathbb{R}^M.
	\end{equation}
	The eigenvalues of $K$ are determined by those of $\Gamma$, which is the NTK of a shallow neural network under an $L^2$ fitting setup. Its positive definiteness ensures that the minimum eigenvalue, $\gamma > 0$, exists \cite{jacot2018neural}. Thus we have
	\begin{equation}
		{d\over ds}L(\theta(s)) \le  -{\gamma\over M^2} e^Te=-{\gamma\over M^2}\sum_{j=1}^M |\partial_{\mathcal{G}}{L}_j|^2
	\end{equation}
	Expanding $\partial_{\mathcal{G}} L_j$ and using the definition of $\mathcal{R}_j^1$:
	\begin{equation}\label{eq:Lj}
		\begin{aligned}
			\partial_{\mathcal{G}}L_j &= 
			-2(\tau(1+\tau\Lambda_P)^{-1}(\bm{\beta}_j^1-(1+\tau\Lambda_P)^{-1}(\bm{\beta}_j^0+\tau\mathcal{G}(\bm{\beta}_j^0;\theta)))+ \mathcal{R}_j^1-\mathcal{G}(\bm{\beta}_j^0;\theta))\\
			& = -2(I+\tau(1+\tau\Lambda_P)^{-1})(\mathcal{R}^1_j-\mathcal{G}(\bm{\beta}_j^0;\theta))
		\end{aligned}
	\end{equation}
	This gives
	\begin{equation}
		|\partial_{\mathcal{G}}{L}_j|^2 = 4(\mathcal{R}^1_j-\mathcal{G}(\bm{\beta}_j^0;\theta))^T(I+\tau(1+\tau\Lambda_P)^{-1})^2(\mathcal{R}^1_j-\mathcal{G}(\bm{\beta}_j^0;\theta)).
	\end{equation}
	Given that $L_j = (\mathcal{R}^1_j-\mathcal{G}(\bm{\beta}_j^0;\theta))^T(\tau(1+\tau\Lambda_P)^{-1})^2(\mathcal{R}^1_j-\mathcal{G}(\bm{\beta}_j^0;\theta))+(\mathcal{R}^1_j-\mathcal{G}(\bm{\beta}_j^0;\theta))^T(\mathcal{R}^1_j-\mathcal{G}(\bm{\beta}_j^0;\theta))$, we have
	\begin{equation}
		\begin{aligned}
			|\partial_{\mathcal{G}}{L}_j|^2 =& 4L_j + 8 (\mathcal{R}^1_j-\mathcal{G}(\bm{\beta}_j^0;\theta))^T(\tau(1+\tau\Lambda_P)^{-1})(\mathcal{R}^1_j-\mathcal{G}(\bm{\beta}_j^0;\theta))\\
			\ge & 4L_j + 8 \frac{\tau(1+\tau\lambda_P)^{-1}}{(1+\tau^2(1+\tau\lambda_1)^{-2})} L_j.
		\end{aligned}
	\end{equation}
	Noting $L={1\over M}\sum_{j=1}^M L_j$, we obtain
	\begin{equation}
		{d\over ds}L(\theta(s)) \le -{\gamma\over M} (4+8 \frac{\tau(1+\tau\lambda_P)^{-1}}{(1+\tau^2(1+\tau\lambda_1)^{-2})}) L.
	\end{equation}
	This implies 
	\begin{equation}
		L(\theta(s)) \le L(\theta(0))\exp(-{\gamma\over M}(4+8 \frac{\tau(1+\tau\lambda_P)^{-1}}{(1+\tau^2(1+\tau\lambda_1)^{-2})})s).
	\end{equation}
\end{proof}

\section{Application to Various Boundary Conditions}\label{appendix:boundary}
\subsection{Inhomogeneous Dirichlet Boundary Condition}
For the inhomogeneous Dirichlet boundary condition 
\begin{equation}
	u=g \quad \text{on}~\partial\Omega\times(0,T).
\end{equation}
we consider harmonic extension of the boundary to obtain $u_g$,  which satisfies:
\begin{equation}
	\left\{
	\begin{aligned}
		-\Delta u_g &= 0, &\text{in}~\Omega \times (0,T)\\
		u_g &=g,&\text{on}~\partial\Omega\times(0,T)
	\end{aligned}
	\right.
\end{equation}
This harmonic extension can be derived using the finite element method. Using $u_g$, we reformulate the original problem in terms of $w:=u-u_g$ which satisfies:
\begin{equation}
	\left\{
	\begin{aligned}
		w_t-D\Delta w &= \mathcal{F}(w+u_g)-(u_g)_t, & \text{ in }\Omega\times [0,T]\\
		w(x,0) &= w_0(x), &\text{ in }\Omega
	\end{aligned}\right.
\end{equation}
with the initial condition is given by $w_0(x)=u_0(x)-u_g(x,0)$. Since $w$ satisfies homogeneous Dirichlet boundary condition, we can still use the Laplace eigenfunctions with homogeneous Dirichlet boundary conditions to compute the coefficient $\bm{\widetilde{\beta}}^n$ of $w$. Using $w$, we can apply Algorithm \ref{alg:no} to train the neural operator $\mathcal{N}$. For the input to the neural operator $\mathcal{G}$, the coefficients of $u_g$ can either be included as part of the input or excluded for training.
\subsection{Neumann Boundary Condition}
For the Neumann boundary condition
\begin{equation}
	{\partial u\over \partial n}=g.
\end{equation}
We also consider harmonic extension of the boundary to obtain $u_g \in L^2_0(\Omega) \cap H^1(\Omega)$,  which satisfies:
\begin{equation}
	\left\{
	\begin{aligned}
		-\Delta u_g &= 0, &\text{in}~\Omega \times (0,T)\\
		{\partial u\over \partial n} &=g,&\text{on}~\partial\Omega\times(0,T)
	\end{aligned}
	\right.
\end{equation}
This harmonic extension can be derived using the finite element method. Using $u_g$, we reformulate the original problem in terms of $w:=u-u_g$ which satisfies:
\begin{equation}
	\left\{
	\begin{aligned}
		w_t-D\Delta w &= \mathcal{F}(w+u_g)-(u_g)_t, & \text{ in }\Omega\times [0,T]\\
		w(x,0) &= w_0(x), &\text{ in }\Omega
	\end{aligned}\right.
\end{equation}
with the initial condition is given by $w_0(x)=u_0(x)-u_g(x,0)$. Since $w$ satisfies homogeneous Neumann boundary condition, we can still use the Laplace eigenfunctions  in $L^2_0(\Omega)$ with homogeneous Neumann boundary conditions to compute the coefficient $\bm{\widetilde{\beta}}^n$ of $w$. Using $w$, we can apply Algorithm \ref{alg:no} to train the neural operator $\mathcal{N}$. For the input to the neural operator $\mathcal{G}$, the coefficients of $u_g$ can either be included as part of the input or excluded for training.
\section{Details for Numerical Tests}\label{appdendix:test}
\subsection{Datasets and Complementary Results}
In this section, we provide the details of datasets and some complementary results for the equation we used in Section \ref{sec:experiment}.

\subsubsection{KPP-Fisher Equation}
Recall the 1D KPP-Fisher equation on the unit interval:
\begin{equation}
	\left\{	\begin{aligned}
		{\partial u\over \partial t}-{\partial^2 u\over \partial x^2}  &= u(1 -u),  &x\in (0,1),t\in (0,T]\\
		u(x,0) &= u_0(x), &x\in (0,1)
	\end{aligned}\right.
\end{equation}
The initial condition $u_0(x)$  is generated according to $u_0\sim\mu $ where $\mu=\mathcal{N}(0,49(-\Delta+7I)^{-2.5})$ with corresponding Dirichlet boundary conditions. We solve the governing equation using an explicit Euler method, employing a piecewise linear finite element method for spatial discretization. The computational domain is discretized on a spatial mesh with a resolution of $2^{10}=1024$ and  the time step is set to $\tau=0.01$. For the dataset used to train the neural network, data is collected over 10 time steps, using 100 different solutions. 
For the evaluation and comparison of predicted results, we include 30 time steps.

\jdall{
For the extended diffusion coefficient tests (Cases1 and2), the data are generated following the same procedure as described above.
}

\subsubsection{Allen-Cahn Equation}
The Allen-Cahn equation is given by:
\begin{equation}
	\left\{	\begin{aligned}
		{\partial u\over \partial t} - \varepsilon^2 {\partial^2 u\over\partial x^2} &= W'(u),  &x\in (0,2\pi),t\in (0,T]\\
		u(x,0) &= u_0(x),&t\in (0,T)\\
	\end{aligned}\right.
\end{equation}
We set $\varepsilon=0.1$. The initial condition  $u_0 $ is sampled from the distribution $\mu=\mathcal{N}(0,49(-\Delta+7I)^{-2.5})$ with pure Neumann boundary conditions. An explicit Euler method, with a piecewise linear finite element method for spatial discretization is used. The computational domain is discretized on a spatial mesh with a resolution of $2^{10}=1024$ and  the time step is set to $\tau=0.05$. For the dataset used to train the neural network, data is collected over 20 time steps, using 100 different solutions. For the evaluation and comparison of predicted results, we include 60 time steps. The training error is given in Table \ref{tab:ac}.

\begin{table}[!htbp]
	\centering
	\caption{Errors of the Allen-Cahn equation.}
	\begin{tabular}{ccc}
		\hline
		$E_{L^2}$ &  $E_{Res}$  &$E_{Nonlinear}$\\
		\hline
		6.58e-04&2.45e-03&6.59e-03\\
		\hline
	\end{tabular}
	\label{tab:ac}
\end{table}

\subsubsection{Gray-Scott Equation}
The Gray-Scott model is a reaction--diffusion system with two variables,  $A$  and  $S$ , described by the following equations:
\begin{equation}
	\left\{	\begin{aligned}
		A_t-D_A\Delta A &= SA^2-(\mu+\rho)A, &x \in (0,2\pi)^d,t\in(0,T)\\
		S_t-D_S\Delta S &= -SA^2+\rho(1-S), &x\in (0,2\pi)^d,t\in(0,T)\\
		A(x,0) &= A_0(x)  &x \in (0,2\pi)^d\\
		S(x,0) &= S_0(x)  &x \in (0,2\pi)^d\\
	\end{aligned}\right.
\end{equation}
where  the parameters are set as $D_A=2.5\times 10^{-4}$, $D_S=5\times10^{-4}$, $\rho=0.04$ and $\mu=0.065$. The initial condition  $A_0 $ and $S_0$ are sampled from the distribution $\mu=\mathcal{N}(0,49(-\Delta+7I)^{-2.5})$ with pure Neumann boundary conditions. An explicit Euler method, with a piecewise linear finite element method for discretization is used. The computational domain is discretized on a spatial mesh with a resolution of $2^{10}=1024$ for 1D problems and $256\times 256$ for 2D problems. The time step is set to $\tau=0.1$. For the dataset used to train the neural network, data is collected over 20 time steps, using 100 different solutions. For the evaluation and comparison of predicted results, we include 60 time steps. Note that this equation involves two variables, so we include the projection coefficients of both $A$ and $S$ as inputs to the neural network.

\subsubsection{Schr\"odinger Equation}
The Schr\"odinger equation is given by:
\begin{equation}
	\left\{\begin{aligned}
		&u_t	-\Delta u +\alpha |u|^2u+Vu=\lambda u, &x\in (-8,8)^2,t\in(0,T)\\
		& u(x,0)=u_0(x) &x\in (-8,8)^2\\
	\end{aligned}
	\right.
\end{equation}
where $\alpha=1600$, $\lambda=15.87$ is the eigenvalue of the steady problem and the potential is defined as:
\begin{equation}
	V(x,y) = 100(\sin^2({\pi x\over 4})+\sin^2({\pi y\over 4}))+x^2+y^2.
\end{equation}
The initial condition  $u_0 $ is sampled from the distribution $\mu=\mathcal{N}(0,49(-\Delta+7I)^{-2.5})$ with homogeneous Dirichlet boundary conditions. We employ an explicit Euler method for time integration, combined with a piecewise linear finite element method for spatial discretization.The computational domain is discretized on a spatial mesh with a resolution of $256\times 256$ . The time step is set to $\tau=0.001$. For the dataset used to train the neural network, data is collected over 20 time steps, using 100 different solutions. For the evaluation and comparison of predicted results, we include 60 time steps. The training error is given in \ref{tab:Sch}.

\begin{table}[!htbp]
	\centering
	\caption{Errors of the Schr\"odinger equation. }
	\begin{tabular}{ccc}
		\hline
		$E_{L^2}$ &  $E_{Res}$  &$E_{Nonlinear}$\\
		\hline
		9.91e-04&9.37e-03& 1.79e-02\\
		\hline
	\end{tabular}
	\label{tab:Sch}
\end{table}

\subsubsection{Alzheimer's Disease Evolution}
We use the gray values of MRI scans as both the data and the input images. A brain slice at a specific $z$-location is selected to define the irregular domain for training and transfer learning, for the time series there are 3 times steps. The training loss curve is displayed in Figure \ref{fig:ctloss}.

\begin{figure}[!htbp]
	\centering
	\includegraphics[width=.4\textwidth]{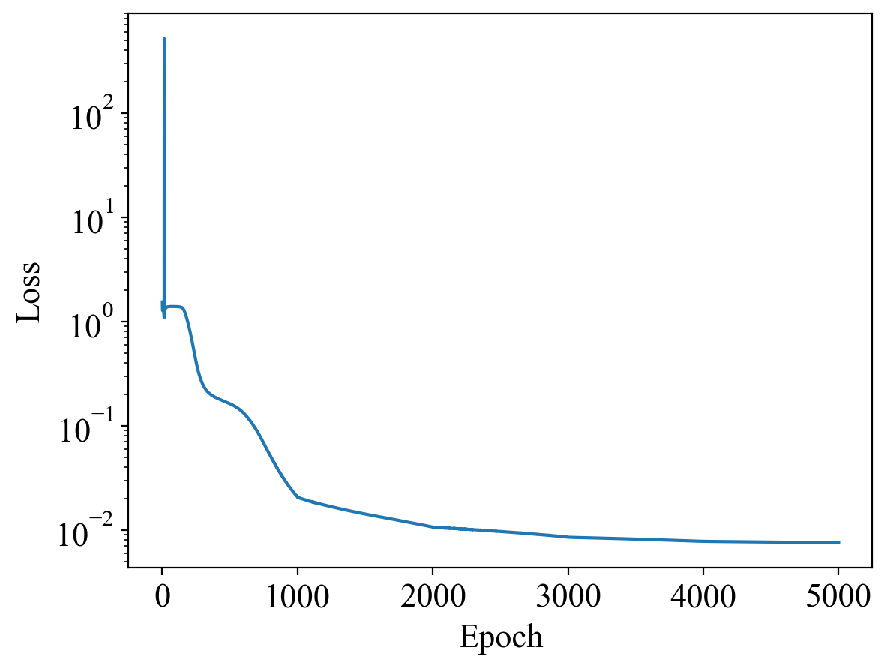}
	\caption{Training loss curve on Alzheimer's disease evolution.}
	\label{fig:ctloss}
\end{figure}

\subsection{Training Details}
We use the Adam optimization method with a full data batch for 5000 epochs, starting with an initial learning rate of $10^{-3}$, which is reduced by a factor of 0.25 every 1000 epochs.

\subsection{Neural Network Architecture}
We utilize a fully connected neural network for $\mathcal{N}$  employing the ReLU activation function. The architecture details, including the number of hidden layers and eigenfunctions used in different tests, are summarized in Table \ref{tab:arc}.

\begin{table}[!htbp]
	\centering
	\caption{Neural Network Architecture.}
    \jdall{
	\begin{tabular}{c|cc}
		\hline
		Problem 	& \#Eigen function & Neural Network\\
		\hline
		KPP-Fisher & 64 & 64-1000-1000-64 \\
        KPP-Fisher (Case 1) & 64 & 64-1000-1000-64 \\
        KPP-Fisher (Case 2) & 256 & 256-1000-1000-256 \\
		Allen-Cahn & 64 & 64-1000-64 \\
		1D Gray-Scott & 64 & 128-1000-1000-128\\
		2D Gray-Scott & $48^2$ & $2*48^2$-1000-1000-1000-$2*48^2$\\
		Schr\"odinger & $64^2$ & $64^2$-1000-1000-1000-$64^2$\\
		Alzheimer's disease evolution& $16^2$ &  $32^2$-16-$32^2$\\
		\hline
	\end{tabular}
    }
	\label{tab:arc}
\end{table} 

\bibliographystyle{elsarticle-num}
\bibliography{main}

\end{document}